\newtheorem{theorem}{Theorem}
\newtheorem{lemma}[theorem]{Lemma}
\newtheorem{conj}[theorem]{Conjecture}
\newtheorem{corr}[theorem]{Corollary}
\newtcolorbox{highlight}{colback=yellow!30,colframe=yellow!80!black}
\title{\fontsize{20}{15}\selectfont Mitigating photon loss in linear optical quantum circuits}
\author{James Mills}
\affiliation{Quandela, 7 Rue Léonard de Vinci, 91300 Massy, France}
\affiliation{School of Informatics, University of Edinburgh, United Kingdom}
\author{Rawad Mezher}
\affiliation{Quandela, 7 Rue Léonard de Vinci, 91300 Massy, France}
\begin{document}
\doparttoc 
\faketableofcontents 

\begin{abstract}
\sloppy
Photon loss rates set an effective upper limit on the size of computations that can be run on current linear optical quantum devices. We present a family of techniques designed to mitigate the effects of photon loss on both output probabilities and expectation values derived from noisy linear optical circuits composed of an input of $n$ photons, an $m$--mode interferometer, and $m$ single photon detectors. 
Central to these techniques is the construction  \emph{recycled probabilities}. Recycled probabilities are constructed from output statistics affected by loss, and are designed to amplify the signal of the ideal (lossless) probabilities. 
Classical postprocessing techniques then take recycled probabilities as input and output a set of loss-mitigated probabilities, or expectation values.
Our postprocessing methods result in biased estimators of the lossless probabilities. Nevertheless, we provide both analytical and numerical evidence that these methods can be applied, up to large sample sizes, to produce output probabilities with lower combined bias and statistical errors than the statistical errors of the output probabilities  obtained from postselection.
Therefore, these methods can outperform postselection - currently the standard method of coping with photon loss when sampling from discrete variable linear optical quantum circuits. 
In contrast, we provide evidence that the popular zero-noise extrapolation technique cannot improve on the performance of postselection for any photon loss rate. 
\end{abstract}
\maketitle

\section{Introduction}

\sloppy

Discrete variable linear optical quantum computing (DVLOQC) is a framework that uses a discrete number $n$ of photons as well as $m$-mode linear optical interferometers  to store and process quantum information. 
Many models of universal and fault-tolerant quantum computation tailored to this framework have been developed over the  years beginning with the work of \cite{knill_scheme_2001}, and followed by various other models and variants (e.g \cite{raussendorf_topological_2007,de_gliniasty_spin-optical_2023,bartolucci_fusion-based_2023,raussendorf_one-way_2001}). 
Furthermore, promising proposals for the near-term demonstration of quantum-over-classical advantage such as boson sampling \cite{aaronson_computational_2011} can naturally be implemented in this framework. 
A quantum device capable of performing DVLOQC usually consists of three components: single photon sources \cite{senellart_high-performance_2017}, multimode interferometers \cite{reck_experimental_1994}, and single photon detectors \cite{hadfield_single-photon_2009}. 
We will refer to the collection of these as a \emph{linear optical circuit}.
One major impediment to scaling up such a device is photon loss \cite{wang_toward_2018}.  Postselection, where all \emph{lossy} output statistics with one or more lost photons are discarded and only those where all photons are detected are kept, may be used to obtain the ideal output distribution of a  linear optical circuit subject to photon loss \cite{brod_photonic_2019}. 
This can be viewed as a form of quantum error mitigation, where the ideal distribution is accessible, but with a sampling cost scaling exponentially with the depth of the  circuit. 
This scaling comes from the fact that the probability of at least one photon being lost approaches one exponentially  quickly with increasing  circuit depth \cite{renema_classical_2019}, which results in most outputs being discarded.

Many techniques have been developed to improve the performance of quantum computations run on currently available quantum hardware. These  are generally referred to as \emph{quantum error mitigation} (QEM) techniques. Some well-known QEM techniques are zero noise extrapolation (ZNE) \cite{temme_error_2017, endo_practical_2018, giurgica-tiron_digital_2020, he_zero-noise_2020}, probabilistic error cancellation \cite{temme_error_2017, strikis_learning-based_2021, mari_extending_2021, van_den_berg_probabilistic_2023}, verification-based mitigation \cite{bonet-monroig_low-cost_2018, sagastizabal_experimental_2019-1, obrien_error_2021, mezher_mitigating_2022}, virtual distillation and exponential error suppression \cite{koczor_exponential_2021, koczor_dominant_2021, huggins_virtual_2021, yamamoto_error-mitigated_2022}, quantum subspace expansion mitigation \cite{mcclean_decoding_2020,yoshioka_generalized_2022, yang_dual-gse_2023, ohkura_leveraging_2023}, and measurement error mitigation \cite{maciejewski_mitigation_2020, arrasmith_development_2023, mills_simplifying_2023, bravyi_mitigating_2021} (see \cite{endo_hybrid_2021} for a review of QEM techniques). These techniques are tailored to the circuit model of quantum computing, and thus adapting them to DVLOQC can in some cases be complicated by the considerable differences in the computational setup and the types of noise.

In this work we address the question of whether the effects of photon loss in the output distributions of linear optical quantum circuits can be mitigated by classical postprocessing of lossy output statistics. 
Crucially, we require that these postprocessing techniques outperform postselection, in the sense that they provide more converged estimates of the ideal probabilities for a comparable sample number.
We present various new techniques to mitigate photon loss in linear optical circuits, and perform rigorous analysis of their performance. 
We refer to these techniques collectively as \emph{recycling mitigation}, as they all involve the use of lossy output statistics that otherwise would be discarded.
At the heart of recycling mitigation is the construction of the so-called \emph{recycled probabilities}. 
The recycled probabilities can be thought of as precursors of the mitigated outputs. Mitigated probabilities are approximations of the ideal probabilities, and can be obtained from the recycled probabilities by appropriate classical postprocessing. 
We introduce several techniques by which this classical postprocessing step can be performed. These give rise to different photon loss mitigation techniques. 
Fig. \ref{fig:PRschematic} shows, at a very high level,  the main steps underlying our mitigation techniques. 
We provide analytical and numerical evidence that there exists a threshold value, lower bounded by a constant, for the loss per mode $\eta$, denoted as $\eta_{th}$,
such that when $\eta \geq \eta_{th}$, recycling mitigation outperforms postselection. 

Furthermore, we provide analytic and numerical evidence showing that mitigation techniques based on artificially increasing noise and Richardson extrapolation, called zero-noise extrapolation (ZNE) \cite{temme_error_2017, li_efficient_2017}, such as those applied to mitigate photon loss in the continuous variable regime \cite{su_error_2021}, do not outperform postselection for the problem of mitigating photon loss in the DVLOQC setting. 
This result is particularly interesting because we also show that, in the case of photon loss, ZNE produces  unbiased estimators of the ideal probability. 
Indeed, we show that the application of ZNE to mitigate photon loss reduces to the problem of inverting a Vandermonde matrix \cite{gautschi_inverses_1978} which, in the absence of statistical error, perfectly computes the ideal probabilities. 
In the presence of statistical error, however, we show that the resulting error on the inversion process is higher than the statistical error of postselection. 
The requirement of outperforming postselection, which is itself an unbiased estimator of the ideal probability, is what motivated us to search for biased loss mitigation techniques with smaller combined bias and statistical error than postselection.

Recycling mitigation is a biased error mitigation technique, meaning that in addition to  statistical error there is also a bias error which, contrary to statistical error, does not decrease with increasing sample size. 
Nevertheless, we provide strong evidence that: $(1)$ For $\eta \geq \eta_{th}$, there is a  sample size up to which the combined statistical and bias errors of recycling mitigation are less than the statistical error of postselection, meaning that recycling mitigation outperforms postselection up to this sample size. 
$(2)$ We present analytic and numerical evidence that the sample size after which postselection starts to outperform recycling mitigation seems very large, i.e. of the order of ${m \choose n}^2$, where ${m \choose n}$ is the size of the Fock space. 
$(3)$ We provide analytic and numerical evidence that, for fixed $m$ and $n$, and generic interferometers, the bias error in some of our introduced techniques seems small enough such that, in the limit of very low statistical error the mitigated outputs, like the ideal outputs, seem hard to approximate efficiently classically (see Section \ref{sec:properties_mitigated}).

Any loss parameter $0< \eta< 1$ results in a binomial distribution over $k$, the number of lost photons at the output. 
The key observation underpinning  the utility of recycling is that when $\eta \geq \eta_{th}$, the  statistical error denoted $\epsilon^{stat}_{k}$ is greatest for the postselection estimators where $k=0$, and furthermore that $\epsilon^{stat}_{k=0} \geq \epsilon^{stat}_{k=1} \geq \ldots \geq \epsilon^{stat}_{k=n_d}$, where $k \in \{1, \dots, n_d\}$ denotes the number of photons lost. $\epsilon^{stat}_{k}$ is the statistical error of recycled probabilities constructed from statistics with $k$ lost photons.
In order to obtain the mitigated outputs these techniques use sampled outputs for which $k>0$ meaning these lossy estimators have lower statistical error than the postselection estimators, but where $k$ is still low enough  such that the estimators contain information on the ideal outputs. 
The fact that statistical errors in recycling mitigation are lower than those of postselection, coupled with the low bias error in our techniques, is what allows the combined bias and statisical errors of the mitigated outputs to be lower than the statisical error of postselection, and thus for our methods to outperform postselection.  

Since recycling mitigation is applied to the statistics relating to relatively few lost photons, we show that the number of samples (runs of the lossy DVLOQC circuit) required for constructing a recycled probability at a given $k$, and consequently computing the mitigated values to a fixed accuracy, scales exponentially with system size as $O\Big(\frac{1}{{n \choose k}(1-\eta)^{n-k} \eta^k}\Big)$. 
On the one hand, this is a significant improvement over  the scaling of postselection when $\eta \geq \eta_{th}$, which is typically $O\big(\frac{1}{(1-\eta)^n}\big)$.
On the other hand, exponential scaling is a bottleneck  common to all error mitigation protocols involving solely classical postprocessing \cite{de_palma_limitations_2023,takagi_universal_2023,quek_exponentially_2023, tsubouchi_universal_2023}, and not one specific to our technique.
Finally, the fact that error mitigation techniques have exponential scaling does not necessarily hinder the achievement of a quantum computational advantage or utility for a fixed system size  \cite{zimboras2025myths}. 
Indeed, an initial experimental attempt to achieve a quantum computational utility for a pre-fault tolerant quantum device coupled with error mitigation was presented in \cite{kim2023evidence}. 
This experiment was later shown to be efficiently simulable classically, albeit using a highly non-trivial tensor network approach \cite{tindall2024efficient}. We are hopeful that other attempts in this direction will soon appear. Notably, in the DVLOQC framework where the methods we present can potentially aid in reaching a quantum computational utility in the pre-fault tolerant era.

The photon loss mitigation techniques we introduce can be used in a variety of applications. 
The work presented in \cite{salavrakos_error-mitigated_2024} introduced a quantum circuit Born machine (QCBM)  \cite{benedetti_parameterized_2019} tailored to the DVLOQC framework. 
Performance during the training of the QCBM, when afflicted by photon loss, considerably improved when the recycling mitigation method was applied.
Use of the mitigation in the QCBM experiments reduced the overall number of samples needed to reach a desired precision, thereby decreasing the overall computational time.
In addition to the application mentioned in \cite{salavrakos_error-mitigated_2024}, other  applications include variational quantum eigensolvers  \cite{maring_general-purpose_2023,lee_error-mitigated_2022}, photonic differential equation solving \cite{heurtel_perceval_2023}, photonic quantum machine learning \cite{gan_fock_2022},  and graph problems with DVLOQC \cite{mezher_solving_2023}.

Recent work \cite{taylor_quantum_2024} has provided a method for mitigating photon loss in the continuous variable (CV) setting, adapting pre-existing methods of probabilistic error cancellation \cite{temme_error_2017, strikis_learning-based_2021, mari_extending_2021, van_den_berg_probabilistic_2023} to the CV setting. 
Although the techniques of \cite{taylor_quantum_2024} are applicable to the DVLOQC setting, these are solely for expectation value mitigation (sometimes called weak mitigation) \cite{quek_exponentially_2023}. 
Our methods, by contrast, can perform both strong (full probability distribution mitigation) as well as weak mitigation, with a classical memory cost scaling with the size of the set of probabilities to be mitigated. 
Furthermore, we present analytical and extensive numerical evidence that our methods provably outperform postselection, whereas, to our knowledge, no comparison between the performance of the mitigation methods developed and postselection is presented in \cite{taylor_quantum_2024}.

This paper is structured as follows. 
Section \ref{sec:results} gives a high-level overview of our main contributions. 
Section \ref{sec:prelim} introduces some notation and basic concepts. 
Sections \ref{sec:construction}-\ref{sec:normalising} detail the construction of the recycled distributions, the classical postprocessing techniques needed to obtain the mitigated distribution. 
Section \ref{sec:numerics2} contains numerical simulations that aid in understanding how to use recycling mitigation in practice, as well as examples of the techniques in action.
Section \ref{sec:properties_mitigated}  discusses the properties of the mitigated probabilities in the limit of zero statistical error. 
Section \ref{sec:ZNE} provides strong evidence that techniques based on ZNE do not in general outperform postselection. Section \ref{sec:discussion} contains a discussion of our results as well as a set of interesting open questions.

\section{Overview of main results}
\label{sec:results}

\begin{figure}[]
\centering
\includegraphics[width=0.96\textwidth]{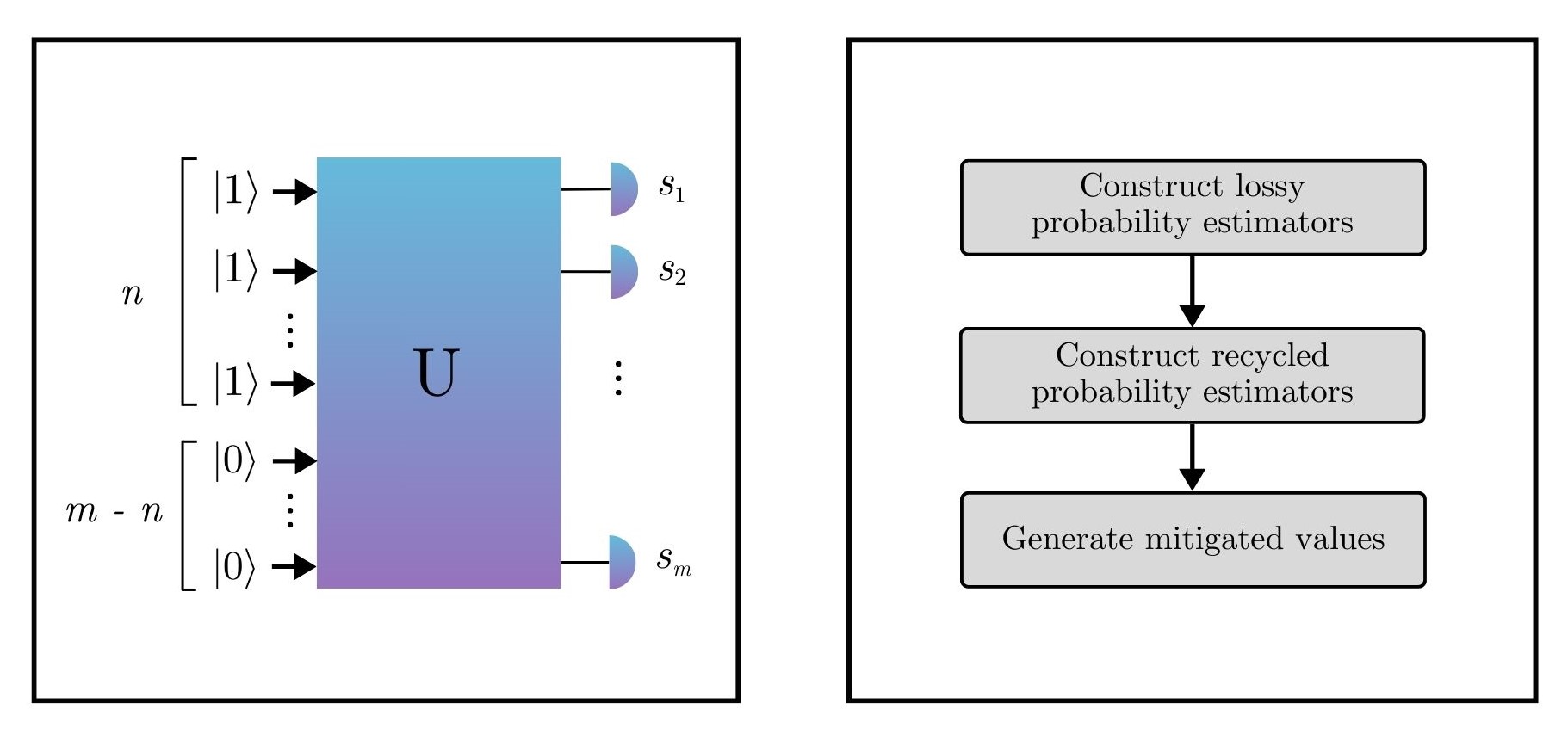}
\text{(a)\vspace{0.1em}\hspace{21.5em}}
\text{(b)\vspace{0.1em}\hspace{-0.5em}}
\caption{A schematic illustrating the main steps of the recycling mitigation protocol. (a) An input state of $n$ photons is introduced to a lossy $m$-mode linear optical interferometer implementing a unitary transformation. The classical measurement outcomes of each of the output modes is denoted by a classical $m-$bit string. (b) The classical data set generated by repeatedly sampling from the quantum circuit is then used as input for classical postprocessing. This classical postprocessing consists of three stages. First, the data set is used to generate lossy probability estimators. These estimators are then used to construct recycled probability estimators, which are in turn then used to generate mitigated values.}
\label{fig:PRschematic}
\end{figure}

This section summarises the main technical contributions of this work.

 First, we  compute the overall error on the output probabilities when applying the recycling mitigation technique. That is, we compute the difference between the mitigated and ideal (lossless) output probabilities. This overall error can mainly be seen as a sum of statistical error, as well as the bias error of the technique. This result is stated precisely in the following theorem.

\newpage

\begin{theorem}
\label{thpost}
 Consider $N_{tot}$ samples generated from a DVLOQC circuit with $n$ photons and $m$ modes. Let  $U \in \mathsf{U}(m)$ be the unitary implemented by the circuit , $\eta \in [0,1]$ the uniform probability a photon is lost in any mode.  
 Assume we are in the no-collision regime, $m \in \Omega(n^2)$,  where at most one photon occupies each mode.
 There exists a classical algorithm (linear solving recycling mitigation) that uses a subset  of  $$N_{rec}= {n \choose k} (1-\eta)^{n-k} \eta^k N_{tot}   \in   O \Big(
\sqrt{\frac{n}{k(n-k)}}
\;\frac{n^n}{k^k (n-k)^{\,n-k}}
\eta^k (1-\eta)^{n-k}N_{tot} \Big), $$ samples  where $k>0$ 
 photons were lost, runs in time $N_{rec} + \mathsf{poly}(m,n,k)$, and that outputs with probability $1-\delta$ an approximation $p_{mit}(s)$ of the exact lossless circuit probability of bit string $s$, $p_{id}(s)$ such that
\begin{equation}
\label{eqrec}
    | p_{mit}(s) - p_{id}(s) | \leq  f(n,m,k,\eta,N_{tot}),
\end{equation}
where $f(n,m,k,\eta,N_{tot})=3\Bigg[\sqrt{\frac{2}{\delta}} \left(\frac{e(m-n+k)}{k}\right)^k (\frac{n}{m})^{n/2} + \sqrt{\ln(4/\delta)} \left(\frac{k}{n}\right)^{k/2} \sqrt{\frac{1}{(1-\eta)^{n-k}\eta^kN_{\mathrm{tot}}}} \Bigg]$.
\end{theorem}

This theorem is proved in Appendix \ref{proofmainres}.
The proof makes use of Theorem  \ref{nonHaarbound} to bound the bias error of the mitigated output, and as the derived bound does not have an explicit dependence on $k$ it seems likely this could be considerably tightened.
As such, Theorem \ref{thpost} provides a loose upper bound on $f$.  For example, if one looks at minimising $f$ by varying $k$, one finds $k_{min}= \mathsf{ceil}(\eta n)$ minimises $f$. It is tempting to say that this is an optimal value of $k$ which should be used in the protocol. However, this is a misleading conclusion directly due to the upper bound for $f$ being loose. Indeed, numerically we find that the error mitigated probabilities for values of $k$ near $k_{min}$ are far from the ideal probabilities.   
Intuitively, this is because the recycled probabilities used to extract the error mitigated probabilities can efficiently be estimated classically at $k=k_{min}$ by collecting samples from a  classical  algorithm simulating lossy boson sampling.  
Indeed, collecting samples from a lossy boson sampler near the average number of lost photons, $\mathsf{ceil}(\eta n)$,  can be done efficiently classically  for all constant $\eta$    \cite{renema_classical_2019}. 
Numerically, we find that values of  $k \in O(1)$ produce the most accurate mitigation results (see Section \ref{sec:numerics2}). 
Note that these are rare sampling events for the  classical sampler of \cite{renema_classical_2019}. 
We believe that any hope of proving the optimality of  $k \in O(1)$ values requires a significant tightening of the upper bound on $f$, which in turn requires a more detailed understanding of the distribution of permanents of random matrices, which to the best of our knowledge remains incomplete \cite{nezami_permanent_2021}.

Despite its limitations, Theorem \ref{thpost} is still useful  to show that a regime of advantage over postselection exists, as we will now show.


Postselection uses a subset $N_{post}=(1-\eta)^nN_{tot}$ of samples where $k=0$ (i.e. no loss occurred) and outputs an estimate $p_{post}(s)$ of $p_{id}(s)$  such that
\begin{equation}
\label{eqpost}
|p_{post}(s)-p_{id}(s)| \leq \gamma \sqrt{\frac{1}{(1-\eta)^nN_{tot}}},
\end{equation}
with probability $1-2e^{-\gamma^2}$ and runtime $N_{post}+1$.
This upper bound on the additive error is computed via Hoeffding's inequality \cite{hoeffding_collected_1994}.


The regime of advantage for the mitigation technique over postselection occurs where the combined statistical and bias errors of the mitigated output are lower than the statistical errors of the postselected output.
To see the sampling condition where the recycling mitigation outperforms postselection indicated by the analytical bounds one can, in the worst-case where the additive errors of recycling mitigation and postselection equal their upper bounds in eqn. (\ref{eqrec}) and (\ref{eqpost}), solve the inequality: $|p_{mit}(s)-p_{id}(s)| \leq |p_{post}(s)-p_{id}(s)|$.
This is stated in the following corollary.
\begin{corr}
We first assume that we are operating in the worst-case case error regime where $|p_{mit}(s)-p_{id}(s)|=f(m,n,k,\eta,N_{tot})$ (c.f. Theorem \ref{thpost}) and  $|p_{post}-p_{id}|=\gamma \sqrt{\frac{1}{(1-\eta)^n N_{tot}}}$. 
Furthermore, we assume that the statistical error of the postselected outputs is higher than the statistical error of the $(n-k)$-photon lossy outputs. 
Recycling mitigation outperforms postselection when

\begin{equation}
N_{\mathrm{tot}} \leq \frac{\delta(\Delta\eta)^2}{18} \Big(\frac{em}{n}\Big)^n
\left(\frac{k}{e(m-n+k)}\right)^{2k},
\end{equation}
where 
\begin{equation}
\Delta\eta := \gamma \sqrt{\frac{1}{(1-\eta)^n}} - 3\sqrt{\ln (\!\frac{4}{\delta})}\,
 \left(\frac{k}{n}\right)^{k/2}\sqrt{\frac{1}{(1-\eta)^{n-k} \eta^k}},
\end{equation}
and, from  a union bound, this upper bound on $N_{tot}$ applies with confidence $1 -\delta'$, where $\delta' = \delta + 2e^{-\gamma^2}$.

\end{corr}
Note that the assumption in the above corollary that the statistical error from postselection is higher than the statistical error of the mitigated output may be concisely stated as $\Delta \eta >0$.
This corollary means that, in worst-case, there exists a non-trivial sampling regime up to which recycling mitigation outperforms postselection. 
Furthermore, the condition $\Delta \eta >0$ indicates that there is a photon loss rate threshold that needs to be satisfied in order for recycling mitigation to outperform postselection. 
Indeed, $\Delta \eta >0$ implies the existence of a lower bound on the photon loss rate for which it is possible for mitigation to outperform postselection: $\eta > \eta_{th}(n,k)$
with 
$$\eta_{th}(n,k) \geq \frac{1}{1+\frac{en}{k}}.$$

For sufficiently large $n$ and for a fixed $k$ independent of $n$, we have that $\left(\frac{e(m-n+k)}{k}\right)^{2k} \in O\big(\mathsf{poly}(m,n)\big)$ and $\Delta \eta \in O\bigg(\frac{1}{\sqrt{(1-\eta)^n}}\bigg)$.
This indicates a regime of advantage for recycling mitigation over postselection up to a sample number
\begin{equation}
N_{tot} \in O\bigg(\frac{\big(\frac{em}{n}\big)^n}{(1-\eta)^n \mathsf{poly}(m,n)}\bigg) \in e^{n \text{log}(\mathsf{poly}(n))+O(n) - \text{log}(\mathsf{poly(n))} +O(1)}.
\end{equation}
Consequently, recycling mitigation 
outperforms postselection for approximating the ideal probability  up to an additive error 
\begin{equation}
\epsilon \in O\bigg(\mathsf{poly}(m,n) \Big(\frac{n}{m}\Big)^{n/2}\bigg) \in e^{-\frac{n}{2} \text{log}(\mathsf{poly}(n)) + \text{log}(\mathsf{poly}(n))+O(1)}.
\end{equation}
Intuitively, because the statistical error for recycling mitigation is lower than that for postselection, the point up  to which recycling mitigation outperforms postselection is typically determined by the bias error of the mitigation. 
Indeed, $O(\mathsf{poly}(m,n)\big(\frac{n}{m}\big)^{n/2})$ is a high-confidence upper bound on the bias error obtained in Theorem \ref{nonHaarbound}.

We present another recycling mitigation technique, exponential extrapolation, which outperforms the linear solving technique in numerical simulation experiments.
Unlike linear solving, which uses a single value of $k$, exponential extrapolation uses recycled probabilities computed at multiple values of $k$, and attempts to extract ideal probabilities from the decay of these probabilities towards the uniform distribution (obtained at $k=n$ when all photons are lost).  
This decay is fitted to an exponential curve. 
The choice of the decay function is primarily heuristic and accurately captures the form of the decay function observed numerically.

We numerically observe that exponential extrapolation has lower bias error than linear solving. 
Although we do not show this analytically, we make the following conjecture about the performance of exponential extrapolation based on numerical evidence presented in Section \ref{sec:properties_mitigated}.
\begin{conj}
\label{conjbias}
When $m \in \Omega(n^2)$, and for values of $k \in O(1)$, exponential extrapolation recycling mitigation gives the following performance guarantee with high probability over the choice of $U$
\begin{equation*}
E_{s}(\epsilon_{{mitbias},s}) \leq \delta(n) \Big(\frac{n}{m}\Big)^n,
\end{equation*}
where $E_{s}(\epsilon_{{mitbias},s})$ is the average, for fixed $U$, over all bit strings $s$ of the bias error $\epsilon_{{mitbias},s}$, and $\forall n$ $\delta(n) <1$  is some function of $n$ .
\end{conj}
By using a Markov inequality, Conjecture \ref{conjbias} being true implies that, with high probability over the choice of $U$, and with probability $1-\frac{1}{w}$ over the choice of $s$ for a fixed $U$ and $w>1$, we have $\epsilon_{{mitbias},s} \leq w\delta(n) \big(\frac{n}{m}\big)^n$. 
In particular, one can choose a $w$ such that $w\delta(n)<1$, and relabel $\kappa(n)=w\delta(n) <1$, Conjecture  \ref{conjbias} being true therefore implies that with high probability over $U$ and with probability $1-\frac{1}{w}$ over the choice of $s$ for a fixed $U$, $\epsilon_{{mitbias},s} \leq \kappa(n) \big(\frac{n}{m}\big)^n$ with $\kappa(n)<1$.

Furthermore, when $\eta$ is above some threshold value, we numerically find that the statistical error of extrapolation  is lower  than that of postselection. 
Let $\epsilon_{stat}(m,n,k,\eta,N_{tot})$  be the upper bound on the absolute value of the statistical error of extrapolation mitigation, we conjecture based on numerical evidence the following.

\begin{conj}
\label{conjstatextr}
For $\eta > \eta_{th}$, where $\eta_{th} \in [0,1]$ is some threshold loss value, we have that  $$\epsilon_{stat}(m,n,k,\eta,N_{tot}) \leq \frac{1}{\sqrt{(1-\eta)^nN_{tot}}}.$$ Furthermore, for large enough $n$, $\epsilon_{stat}(m,n,k,\eta,N_{tot}) \in o\bigg(\frac{1}{\sqrt{(1-\eta)^nN_{tot}}}\bigg)$.
\end{conj}
 
If our conjectures on the bias and statistical errors of exponential extrapolation are true, we can show the following theorem in worst-case, when the errors on postselection and exponential extrapolation equal their upper bounds.

\begin{theorem}
\label{thexpextrap}
Assume that Conjectures \ref{conjbias} and \ref{conjstatextr} are true. Furthermore, assume that $n$  is sufficiently large and we are in the case where $|p_{mit}(s)-p_{id}(s)| = \epsilon_{stat}(m,n,k,\eta,N_{tot})+ \kappa(n) \big(\frac{n}{m}\big)^n$, with $\kappa(n)<1$, and $ |p_{post}(s)-p_{id}(s)|= \sqrt{\frac{1}{(1-\eta)^nN_{tot}}}$, where $p_{mit}(s)$ is the output probability of exponential extrapolation recycling mitigation. 
Then exponential extrapolation outperforms postselection for up to sample number $N_{tot} \ \in O\Big(\frac{(\frac{em}{n})^{2n}}{\kappa(n)^2 (1-\eta)^n}\Big) \in e^{2n \log(\mathsf{poly}(n))+O(n)-\log(\kappa(n))+O(1)}$ and up to additive error $\epsilon \in O\Big(\kappa(n)\big(\frac{n}{m}\big)^n\Big) \in e^{-n \log(\mathsf{poly}(n))+\log(\kappa(n)) +O(1)}$.
\end{theorem}
A proof of this is provided in Appendix \ref{extrapolation_perf_g_app}.

Theorem \ref{thexpextrap} indicates that in worst-case, exponential extrapolation has better performance  than linear solving, as the additive error up to which exponential extrapolation outperforms postselection is significantly smaller than that of linear solving.


In Section \ref{sec:numerics2}, we present the results of numerical simulations comparing the performance of the mitigation techniques with postselection.
The recycling mitigation methods (linear solving and exponential extrapolation) consistently outperform postselection for a range of loss values and sample numbers.
We also present results that indicates that in the average-case the errors of both recycling mitigation methods and postselection are below their analytically computed upper bounds. 
These experiments were performed in loss regimes relevant to near-term quantum photonics hardware.

In Section \ref{sec:properties_mitigated}, we analyse the mitigated outputs in the regime $m \in \Omega(n^2)$ for exponential extrapolation error mitigation, and conjecture that, in the absence of statistical errors, the distribution of $p_{mit}(s)$ is likely classically hard to compute.
We support this conjecture with numerical calculations. 
Specifically, the conjecture is that computing the mitigated outputs is a problem that reduces to solving a restricted version of the $|GPE_{\pm}|^2$ problem \cite{aaronson_computational_2011}. 
This result is interesting because it indicates that, in general and in the absence of statistical error, the bias errors are small enough to render the error mitigated  probabilities hard to compute classically.

Our final  contribution is in providing strong evidence that photon loss error mitigation techniques based on ZNE  \cite{endo_hybrid_2021} cannot outperform postselection in the DVLOQC setting. 
Indeed, in Section \ref{sec:ZNE} we provide an upper bound on the error incurred from  ZNE, $E_{extrap}$,  which is larger than the worst-case statistical error of postselection (see Thm.  \ref{thnogozne}). 
In Appendix \ref{appnogo}, we also provide numerical evidence that for all $n \geq n_0$, for some $n_0 \in \mathbb{N}$, $E_{extrap}$ is always larger than the statistical error of postselection.
Indeed, in DVLOQC there is a natural way to mitigate photon loss errors, postselection, which is not present for other types of errors in other types of hardware \cite{endo_hybrid_2021}. 
This therefore sets a benchmark that a photon loss mitigation technique in DVLOQC must satisfy to be useful, namely that it must outperform postselection.

\section{Preliminaries}
\label{sec:prelim}

Consider the DVLOQC setting where a photonic quantum device is composed of a single-photon source \cite{senellart_high-performance_2017}, a universal $m$-mode linear optical interferometer \cite{reck_experimental_1994}  capable of implementing any unitary transformation $U \in \mathsf{U}(m)$, with $\mathsf{U}(m)$ the group of unitary $m \times m$ matrices, and single photon detectors \cite{hadfield_single-photon_2009}. Generating  a sample using this device proceeds as follows. First, $n$ single photons emitted from the source pass through the linear optical inteferometer in an input configuration $\mathbf{T}:=(t_1, \dots, t_m)$, where $t_i$ is the number of photons in mode $i$. Let $|\psi_{in}\rangle:=|t_1,\dots,t_m\rangle$  be the  input Fock state of single-photons corresponding to the configuration $\mathbf{T}$.  The linear optical interferometer implements a unitary transformation $\phi(U)$ on the input state $|\psi_{in}\rangle$ \cite{aaronson_computational_2011}, resulting in an output state $|\psi_{out}\rangle:=\phi(U)|\psi_{in}\rangle$, where $\phi(U)$ represents the action of the unitary $U$, implemented by the interferometer, on $|\psi_{in}\rangle$. Note that $\phi(U)$ and $U$ are related by a homomorphism, detailed in \cite{aaronson_computational_2011}. A sample $\mathbf{S}:=(s_1, \dots, s_m)$, with $s_i$ the number of photons in output mode $i$, is then obtained by measuring the number of photons in each output mode using  single-photon detectors. 
This corresponds to projecting $|\psi_{out}\rangle$ onto the state  $|s_1,\dots,s_m\rangle$.  Many computational tasks on photonic quantum devices can be implemented by collecting samples according to the previous procedure, and then performing classical postprocessing \cite{maring_general-purpose_2023,mezher_solving_2023,singh_proof--work_2023,lee_error-mitigated_2022,knill_scheme_2001}.

In the absence of any errors affecting the device, $\sum_{i=1, \dots , m}t_i=\sum_{i=1, \dots, m}s_i=n$,  furthermore, the probability of obtaining the sample $\mathbf{S}$ is proportional to the modulus squared of the permanent of a submatrix $U_{\mathbf{T},\mathbf{S}}$  of $U$, whose rows and columns are determined by the input and output occupancies $\mathbf{T}$ and $\mathbf{S}$ \cite{aaronson_computational_2011}. By appropriately choosing the unitary transformation $U$, and performing the above mentioned sampling procedure repeatedly, one can perform both non-universal and universal quantum computing with linear optics. In particular, if $U$ is chosen to be Haar random, one performs boson sampling, a non-universal sampling task which is hard for classical computers to carry out efficiently \cite{aaronson_computational_2011}. Alternatively, choosing specific unitaries $U$, and postselecting on detecting a specific output configuration, one can perform universal quantum computation \cite{knill_scheme_2001}.

We will now describe our error model as well as the assumptions we will make throughout this paper.

\begin{itemize}
\item We consider photon loss as the only source of error affecting our devices. Our error model is the uniform loss model, where a photon is  equally likely  to be lost in any mode $i \in \{1, \dots,m\}$ with probability $\eta \in [0,1]$. Following the commutation rules of photon loss \cite{oszmaniec_classical_2018}, we assume without loss of generality that the photons are lost at the output of the interferometer, just before the single-photon detectors which are assumed to be perfect.

\item We assume that sampling occurs in the no-collision regime, where at most  one photon occupies any output mode. 
This is approximately true for  $m \in \Omega (n^2)$ \cite{aaronson_computational_2011}. 
In this regime the samples are given by $\mathbf{S}=(s_1, \dots, s_m)$, with $s_i \in \{0,1\}$, and are therefore bit strings of length $m$. The total number of possible no-collision outputs is ${ m \choose n}$.
 \end{itemize}

The uniform loss model is a widely used error model for simulating photon loss, and is the standard assumption used when constructing loss-tolerant quantum error correcting codes \cite{stace_error_2010}. 
It is also often assumed when deriving efficient classical algorithms for simulating lossy linear optical setups \cite{garcia-patron_simulating_2019}. 
Working in the no-collision regime is primarily interesting for two reasons. 
Firstly, one can prove statements of quantum advantage in boson sampling in this regime \cite{aaronson_computational_2011}. 
And secondly, the fact that at most one photon occupies each mode allows us to assume the use of the standard and widely available threshold detectors, rather than number resolving ones which are currently challenging to practically implement. 
As a final note, while the no-collision assumption is a useful one, it is not necessary for our techniques to work. 
Indeed, as discussed in later parts of this paper, our techniques can in principle  be generalised to the case where more than one photon can occupy a mode.

\bigskip
The uniform loss model induces a binomial distribution on the samples, in the sense that sampling $N_{tot}$ times from a uniformly lossy linear optical circuit produces approximately $N_{tot,k}:={n \choose k} N_{tot} \eta^k (1-\eta)^{n-k}$ samples corresponding to $k$ lost photons, for $k \in \{0, \dots, n\}.$ Note that $N_{tot}=\sum_{k=0, \dots, n}N_{tot,k}$. 
We will take $s^{n-k}_i$ to mean a bit string of the form $\{s_1, \dots, s_m\}$ where $\sum_i s_i=n-k$, $s_i \in \{0,1\}$ is the number of photons in mode $i$, and $k \in \{0, \dots, n\}$. 
This corresponds to a sample drawn from the probability distribution where $k$ of the initial $n$ input photons have been lost. 
In order to estimate the probability $p(s_i^{n-k})$ from a set $\mathcal{W}:=\{s^{n-k}_j\}$  of samples \footnote{The set can contain repeated identical bit strings, i.e. there can be  $j_1 \neq j_2$, such that  $s^{n-k}_{j_1}=s^{n-k}_{j_2}$. }  where $|\mathcal{W}| \leq N_{tot,k}$, we perform the following procedure. For each $w$ ranging from 1 to $|\mathcal{W}|$, assign a value 1 to a random variable $X_w \in \{0,1\}$ if the sample  $s^{n-k}_w$  is the bit string $s^{n-k}_i$, and assign the value 0 to $X_w$ otherwise. The estimate $\tilde{p}(s^{n-k}_i)$ is then 
\begin{equation} \label{estimator_eqn}
\tilde{p}(s^{n-k}_i):=\frac{\sum_w X_w}{|\mathcal{W}|}.
\end{equation}
This estimation therefore induces a statistical error given by
\begin{equation}
    \epsilon_{stat}(s^{n-k}_i):=|\tilde{p}(s^{n-k}_i)-p(s^{n-k}_i)|.
\end{equation}
In postselection, estimators are constructed using only the outputs for which $k=0$, when photon loss is uniform this corresponds to approximately $ N_{tot} (1-\eta)^{n}$ samples.
As the the size of the system  is increased, the probability of postselecting non-lossy outcomes decays exponentially towards zero.

For  values of loss above a certain threshold, the statistical error of probabilities constructed from lossy statistics is in general lower than those constructed from lossless ones. 
As an example of how to see this, note that for $k=1$, the expected number of samples is $N_{tot}n(1-\eta)^{n-1}\eta$. 
For  the lossless $k=0$ case, we have $N_{tot}(1-\eta)^n$ such samples.
Since the statistical error is typically upper bounded by $1 / \sqrt{N}$, where $N$ is the sample number, we can see that the inequality  $N_{tot}n(1-\eta)^{n-1}\eta \geq N_{tot}(1-\eta)^n$ (which implies lower statistical errors for lossy probabilities) holds whenever $\eta \geq \frac{1}{n+1}$.

Recycling mitigation uses the $n-k$--photon probability estimates $\{\tilde{p}(s_i^{n-k})\}$, potentially for a range of $k$ values, and construct from these a \emph{mitigated} $n$--photon probability distribution  $\{p_{mit}(s_j^{n})\}$. To have any utility, a photon loss mitigation technique needs to outperform computing the $n$--photon probability estimates $\{\tilde{p}(s_j^{n})\}$ from the samples $N_{tot,0}$, which we will henceforth refer to as postselection on $n$--photon outputs, or just postselection. We will therefore use postselection as the benchmark to evaluate the performance of recycling mitigation. Postselection is, to our knowledge, the only technique being used to mitigate the effects of photon loss on current DVLOQC  hardware \cite{maring_general-purpose_2023}. Another factor motivating recycling mitigation is that it does not increase the overall sample cost relative to postselection. This contrasts favourably with many error mitigation results that have an accompanying sampling overhead \cite{endo_hybrid_2021}. 

\section{Method}

\subsection{Recycled probabilities}
\label{sec:construction}

The recycled probabilities are constructed from $n-k$ output photon statistics, where $k\in\{1,\ldots,n-1\}$. To explicitly analyse the signal of the ideal probability within the recycled probability, the recycled probabilities may be decomposed into a combination of an ideal $n$--photon output probability, and an interference term consisting of a mixture of other $n$--photon output probabilities from the distribution. 
We first describe the construction of the recycled probabilities from $n-k$ output photon statistics, which generalises for all $k$. We then describe the analytical decomposition of the recycled probabilities into $n$--photon output probabilities.

\subsubsection{Construction of recycled probabilities from lossy outputs}

We now detail the construction of recycled probabilities from $n-k$--photon output statistics - that is, the output statistics in which exactly $k$ of $n$ photons have been lost. This construction should be applied to obtain the recycled probability distribution in an experiment. The recycled probability for bit string $s^n_l$ computed from $n-k$--photon output statistics is denoted $p_R^k(s^n_l)$, with $k\in\{1, \ldots, n-1\}$. Hence there are $n-1$ recycled probabilities one can construct from lossy output statistics for any $n$--photon output bit string, one for each possible value of $k$. Performing the construction involves summing over the $n-k$ output photon bit string probabilities that relate to a particular ideal probability. Informally, this relation is that these are the probabilities of $n-k$--photon output bit strings that the ideal output bit string can be mapped to through the loss of $k$ photons. We now provide a formal statement of this relation.

We first define a mapping procedure from each $n$--photon output bit string to a set of $n-k$--photon output bit strings. Each mapped set of bit strings represents the set of all possible states that the associated $n$--photon output state could become after losing $k$ photons. Let $\mathcal{M}_{\text{unocc.},k,i}$ be the subset of $\{1, \dots, m\}$ corresponding to the unoccupied modes of the  output bit string $s^{n-k}_i$. That is, the set of indices of the modes $j\in\{1, \dots, m\}$ of the bit string for which $s_j=0$. The number of unoccupied modes for $n-k$--photon outputs is $|\mathcal{M}_{\text{unocc.},k,i}|=m-n+k$. Let $\overline{\mathcal{M}_{\text{unocc.},k,i}}$ be the complement of $\mathcal{M}_{\text{unocc.},k,i}$ in $\{1, \dots, m\}$, so that $\overline{\mathcal{M}_{\text{unocc.},k,i}} \cup \mathcal{M}_{\text{unocc.},k,i} = \{1, \dots, m\}$. The subset $\overline{\mathcal{M}_{\text{unocc.},k,i}}$ denotes the occupied modes of $s^{n-k}_i$, consisting of the set of indices of modes $j\in\{1, \dots, m\}$ for which $s_j=1$. As the bit string $s^{n-k}_i$ represents an $n-k$--photon output the number of occupied modes is $|\overline{\mathcal{M}_{\text{unocc.},k,i}}|=n-k$. We define the set $\mathcal{L}(s^n_i):=\{s^{n-k}_j | s^n_i \Rightarrow s^{n-k}_j\}$. And the set of all size $k$ subsets of $\overline{\mathcal{M}_{\text{unocc.},0,i}}$ is $\overline{\mathcal{S}_{0,i}}:=\{X \subset \overline{\mathcal{M}_{\text{unocc.},0,i}}|\hspace{0.3em} |X|=k\}.$ The symbol `$\Rightarrow$' denotes the operation where, for every size $k$ subset $\{\overline{l_1},\ldots,\overline{l_k}\} \in \overline{\mathcal{S}_{0,i}}$, 
 the $n$--photon output bit string $s_i^{n}$ is mapped to a new $n-k$--photon output bit string $s^{n-k}_j$ by replacing $s_{\overline{l_i}}=1$ with $s_{\overline{l_i}}=0$. 
In this case, the number of $k-$subsets of 
$\overline{\mathcal{M}_{\text{unocc.},0,i}}$ is ${ n \choose k}$, and so $|\mathcal{S}_{0,i}| = {n \choose k}$ and the size of the generated set of bit strings is $|\mathcal{L}(s^n_i)|={ n \choose k}$.

The recycled probability for the bit string $s_l^{n}$ can be defined as the sum of probabilities of the ${n \choose k}$ bit strings $s_i^{n-k} \in \mathcal{L}(s^n_l)$, 
\begin{equation} \label{recycled_probs}
\begin{split}
     p_R^k(s^n_l)&:= \sum_{s_i^{n-k} \in \mathcal{L}(s^n_l)}p(s^{n-k}_i) .\\
\end{split}
\end{equation}
To ensure the normalisation of the recycled distribution the above expression is multiplied by a normalisation factor $\mathbf{N} = \frac{1}{{m-n+k \choose k}}$, so that in practice it is
\begin{equation} \label{normalised_recycled_probs}
\begin{split}
     p_R^k(s^n_l)&= \frac{1}{{m-n+k \choose k}}\sum_{s_i^{n-k} \in \mathcal{L}(s^n_l)}p(s^{n-k}_i) .\\
\end{split}
\end{equation}
A derivation of the normalisation factor is provided in the next section. 

To illustrate how this construction might work in practice we now provide a small example. If we would like to compute the recycled probability for the bit string: 111000, in an experiment in which there are $m=6$ modes, $n=3$ input photons and the construction is being performed for $k=1$ lost photons. The recycled probability may be computed directly from eqn. \ref{normalised_recycled_probs} as being
\begin{equation}
p_R^1(111000) = \big(p(110000) + p(101000) + p(011000)\big){4 \choose 1}^{-1},
\end{equation}
where the normalising parameter is $\mathbf{N} = {4 \choose 1}^{-1}$. 

In practice, rather than using the set of exact probabilities, $\{p(s_i^{n-k})\}$, to compute recycled probabilities in the manner shown in eqn. \ref{normalised_recycled_probs}, instead empirical estimates of the exact probabilities, $\{\tilde{p}(s_i^{n-k})\}$, are used. These are calculated from the set of measured experimental output bit strings, as described for eqn. \ref{estimator_eqn}, and so include statistical errors due to finite samples. This statistical error is an important consideration when making comparisons with postselection, and is later included in the analysis of the protocol performance. We now provide pseudocode detailing how to construct the recycled probability for a specific $n$ output photon bit string and value of $k$ from a set of $N$ output bit strings sampled from a given circuit.

\vspace{1em}

\begin{algorithm*}[H]
\DontPrintSemicolon
\SetAlgoLined
\SetKwInOut{Input}{input}\SetKwInOut{Output}{output}
\SetKwRepeat{Repeat}{repeat}{until}
\Input{The $n$--photon output bit string $s^n_l$ for which the recycled probability estimator is to be constructed, a set of $N$ output sample bit strings from the DVLOQC circuit $\{s_j\}_{j\in \{1,\ldots,N\}}$, and the choice of $k$ value indicating that $n-k$ output photon statistics be used for the construction.}
\BlankLine\
Initialise variable $\tilde{p}_R^k(s^{n}_l)\gets 0$ for the recycled probability estimator to be computed. \;
Create a new set of bit strings by discarding all output bit strings from set $\{s_j\}_{j\in \{1,\ldots,N\}}$ except those for which the number of measured output photons was $n-k$, with the new list denoted $\{s_l\}_{l\in \{1,\ldots,N_{est,k}\}}$ where $N_{est,k}\leq N$.\;
Generate the set of $n-k$--photon lossy output bit strings $\mathcal{L}(s^n_l)$.\;
Initialise variable $X_{s_l}\gets 0$. \;
\For{\textnormal{l = $1$ \textbf{to} $N_{est,k}$}}{
\If{$s_l \in \mathcal{L}(s^n_l) $}{
$X_{s_l}\gets X_{s_l}+1$.
}
}
Update recycled probability estimator variable as $\tilde{p}_R^k(s^{n}_l)\gets \frac{ X_{s_l}}{{m-n+k \choose k}N_{est,k}}$.\;
\Output{Recycled probability estimator $\tilde{p}^k_R(s^n_l)$\;}
\caption{Construction of a recycled probability estimator from output statistics}
\end{algorithm*}

\subsubsection{Decomposition of recycled probabilities into $n$--photon output bit string probabilities}

In later sections, the recycled probabilities are analysed in terms of their decomposition into $n$--photon output probabilities. For a given output bit string $s^n_l$, this representation allows explicit treatment of the signal of the ideal $n$--photon output probability $p(s^n_l)$ within the recycled probability $p^k_R(s^n_l)$. To get the recycled probabilities in this form, the lossy output probabilities within the sum in eqn. \ref{normalised_recycled_probs} are decomposed into $n$--photon output probabilities from the ideal distribution. The details of this decomposition will now be formalised.

For this purpose we now define another mapping procedure, this time from each lossy $n-k$--photon output bit string from the sum in eqn. \ref{normalised_recycled_probs} to a set of $n$--photon output bit strings. 
Where each mapped set of bit strings represents the set of $n$--photon outputs that the associated lossy output bit string could have been had loss not occurred. 
The set of all size $k$ subsets of $\mathcal{M}_{\text{unocc.},k,i}$ is defined $\mathcal{S}_{k,i}:=\{X \subset \mathcal{M}_{\text{unocc.},k,i}|\hspace{0.3em} |X|=k\}$.
We define the set $\mathcal{G}(s^{n-k}_{i}):= \{s_j^{n} | s_i^{n-k} \to s^n_{j} \}$. The symbol `$\to$' denotes the operation where, for every size $k$ subset $\{l_1,\ldots,l_k\} \in \mathcal{S}_{k,i}$, the bit string $s_i^{n-k}$ is mapped to a new bit string by replacing
 each bit $s_{l_i}=0$ with $s_{l_i}=1$. 
The number of $k-$subsets of $\mathcal{M}_{\text{unocc.},k,i}$ is ${m-n+k \choose k}$, and so $|\mathcal{S}_{k,i}| = {m-n+k \choose k}$ and the size of the generated set of bit strings is $|\mathcal{G}(s^{n-k}_{i})| = {m-n+k \choose k}$. 

From the addition rule of probabilities and the uniformity of the loss, the probability $p(s^{n-k}_i)$ of obtaining the output bit string $s^{n-k}_i$ is 
\begin{equation}\label{eqn:n_k_probs}
    p(s^{n-k}_i) = \sum_{ s^n_j \in \mathcal{G}(s_i^{n-k})}p(s^n_j)\frac{1}{{n \choose k}},
\end{equation}
where the sum includes all the bit string outputs $s^n_j \in \mathcal{G}(s_i^{n-k})$ from which the loss of $k$ photons maps to $s_i^{n-k}$. Each element of the sum is composed of the probability $p(s^n_j)$ that the output is the $n$--photon bit string $s^n_j$, multiplied by the uniform probability ${{n \choose k}}^{-1}$ of the loss of $k$ photons from $s^n_j$ resulting in the output $s^{n-k}_i$.

As in eqn. \ref{eqn:n_k_probs}, in the definition of the recycled probabilities in eqn. \ref{normalised_recycled_probs} each of the lossy probabilities can be decomposed into a convex combination of probabilities from the ideal distribution. That is, the $n-k$--photon output probabilities within the sum in eqn. \ref{normalised_recycled_probs} can be replaced with a sum over $n$--photon output probabilities. For some arbitrary labelling of bit strings $s^n_j$, and noting that $s^n_l \in \mathcal{G}(s_i^{n-k})$, we can expand the $n-k$ output photon probabilities in the form
\begin{equation}
p(s^{n-k}_i)=  p(s^n_l)\frac{1}{{n \choose k}}+ \sum_{ s^n_j \in \mathcal{G}(s_i^{n-k}), j \neq l}p(s^n_j)\frac{1}{{n \choose k}}.
\end{equation}
This can then be used to separate the contribution of the ideal bit string probability $p(s^n_l)$ out from the rest of the probabilities which are grouped into a sum we call an \emph{interference term} in the recycled probability. So it becomes
\begin{equation}
\begin{split}
  p_R^k(s^n_l)   &= p(s^n_l) + \sum_{s_i^{n-k} \in \mathcal{L}(s^n_l)}\sum_{ s^n_j \in \mathcal{G}(s_i^{n-k}), j \neq l}p(s^n_j)\frac{1}{{n \choose k}}.\\
\end{split}
\end{equation}
The last step is to normalise the distribution $p^k_R(s^n_l)$, namely to compute $\mathbf{N}$
such that $\mathbf{N}\cdot\sum_lp_R^k(s^n_l)=1$. To do this, note that
\begin{equation} \label{recycling_normalisation_factor}
\begin{split}
   \sum_{l=1}^{{m \choose n}}p_R^k(s^n_l) & =  \sum_{l=1}^{{m \choose n}} \sum_{s_i^{n-k} \in \mathcal{L}(s^n_l)}p(s^{n-k}_i)\\
   & =  \sum_{l=1}^{{m \choose n}} \sum_{s_i^{n-k} \in \mathcal{L}(s^n_l)}\sum_{ s^n_j \in \mathcal{G}(s_i^{n-k})}p(s^n_j)\frac{1}{{n \choose k}}\\
   & ={{m-n+k} \choose k},\\
\end{split}
\end{equation}
This is because $\sum_l p(s^{n}_l)=1$, and, because $|\mathcal{L}(s^n_i)|={ n \choose k}$ and $|\mathcal{G}(s^{n-k}_{i})| = {m-n+k \choose k}$, each distinct bit string $s^{n}_l$ appears exactly  ${n \choose k}{m-n+k \choose k}$ times in the above sum \footnote{Another way to see this is to consider the sum over recycled probabilities as $\sum_lp_R(s^n_l)=  \sum_l \sum_{s_i^{n-k} \in \mathcal{L}(s^n_l)}p(s^{n-k}_i)= {m-n+k \choose k}$. Where the rightmost equality follows from the fact that each $p(s^{n-k}_i)$ appears exactly ${m-n +k \choose k}$ times in this sum, since there are $|\mathcal{G}(s^{n-k}_i)|$ recycled probabilities sharing a single $p(s^{n-k}_i)$. Furthermore note that $\sum_ip(s^{n-k}_i)=1$.  }. Therefore $\mathbf{N} = \frac{1}{{m-n+k \choose k}}$, and the expression for the normalised recycled probability is
\begin{equation} \label{recycled_prob_def}
\begin{split}
  p_R^k(s^n_l)   &= p(s^n_l)\frac{1}{{m-n+k \choose k}}+\sum_{s_i^{n-k} \in \mathcal{L}(s^n_l)}\sum_{ s^n_j \in \mathcal{G}(s_i^{n-k}), j \neq l}p(s^n_j)\frac{1}{{m-n+k \choose k}{n \choose k}}.\\
\end{split}
\end{equation}
Let $N_k:={m-n+k \choose k}{n \choose k}$, $N'_k:=\big({m-n+k \choose k}-1\big){n \choose k}$. The recycled probability $p^k_R(s_l^n)$ is composed of the ideal output probability $p(s^n_l)$ and an interference term, defined as 
\begin{equation}
\label{eqinterference}
I_{s_l^n,k}:=\sum_{s_i^{n-k} \in \mathcal{L}(s^n_l)}\sum_{ s^n_j \in \mathcal{G}(s_i^{n-k}), j \neq l}p(s^n_j)\frac{1}{N'_k}.
\end{equation}
The recycled probability may then be written explicitly in this form
\begin{equation}
    p_R^k(s^n_l)=p(s^n_l)\frac{1}{{m-n+k \choose k}}+\frac{N'_k}{N_k}I_{s_l^n,k}.
\end{equation}
Importantly, the photon recycled probability $p_R^k(s^n_l)$ contains an amplified signal of the ideal probability $p(s^n_l)$, by a factor of ${n \choose k}$, relative to the other probabilities contained within the interference term. 

\subsection{A classical simulation algorithm for recycled probabilities}
\label{sec:properties}

In this section, we show that recycled probabilities  constructed from output statistics where most photons were lost (i.e. when $n-k$
is a constant independent of $n$) are efficiently classically computable, and therefore not useful for obtaining interesting mitigation performance. Furthermore, we provide evidence that recycled probabilities constructed from $n-k$ output statistics where $k$ is a constant independent of $n$ (corresponding to output statistics where a small number of photons have been lost) are hard to compute classically, making these interesting to use for recycling mitigation protocols.

Recall that, up to normalisation, a recycled probability is a sum of the form  $$\sum_{s_i^{n-k} \in \mathcal{L}(s^n_l)}p(s^{n-k}_i),$$
where the number of terms of this sum is ${n \choose k}$. For Haar-random interferometers $U$, in the no-collision regime where $m=\Omega(n^5)$, we can use the results of \cite{aaronson_bosonsampling_2016}, and in turn express each $p(s^{n-k}_i)$  as 
$$p(s^{n-k}_i)=\frac{1}{{n \choose k}}\sum_{i}\frac{|\mathsf{Per}(X_i)|^2}{m^n}, $$
where the $X_i$'s are  $n-k \times n-k$ matrices with independently distributed Gaussian entries \cite{aaronson_computational_2011}, and the number of terms in this sum is also ${n \choose k}$.
We show that in the high loss regime, where $k=n-r$, and $r$ is a constant independent of $n$, the sum $\sum_{s_i^{n-k} \in \mathcal{L}(s^n_l)}p(s^{n-k}_i)$ is computable efficiently classically. Our result is encompassed in the following lemma proven in appendix \ref{applemm1}.
\begin{lemma} \label{ryser_lemma}
Let $k=n-r$, there is a classical algorithm running in time $O(2^{r-1}r\big({n \choose n-r}\big)^2)$  which exactly computes $\sum_{s_i^{n-k} \in \mathcal{L}(s^n_l)}p(s^{n-k}_i)$.
\end{lemma}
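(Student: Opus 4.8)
The plan is to reduce the sum to a fixed linear combination of permanents of \emph{small} matrices and then evaluate each permanent with Ryser's formula. The starting point is the decomposition already recorded above for Haar-random $U$ in the no-collision regime: each lossy probability can be written, up to explicitly known combinatorial and normalisation constants, as
\begin{equation*}
p(s^{n-k}_i)=\frac{1}{{n \choose k}}\sum_{j}\frac{|\mathsf{Per}(X_j)|^2}{m^n},
\end{equation*}
where each $X_j$ is an $(n-k)\times(n-k)$ submatrix of $U$ and the inner sum has ${n \choose k}$ terms (one per choice of which $k$ of the $n$ input photons were lost). Since $k=n-r$, every such matrix is $r\times r$. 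The algorithm evaluates these permanents exactly from the actual entries of $U$; the Gaussian characterisation of the $X_j$ is only used elsewhere for the statistical analysis and plays no role in the exact computation.

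First I would count the permanents that appear. The outer sum ranges over $s^{n-k}_i\in\mathcal{L}(s^n_l)$, a set of size $|\mathcal{L}(s^n_l)|={n \choose k}={n \choose n-r}$, and each of its terms is itself a sum of ${n \choose k}={n \choose n-r}$ squared $r\times r$ permanents. Propagating the known constants $\tfrac{1}{{n\choose k}}$ and $\tfrac{1}{m^n}$ costs $O(1)$ per term, so the target quantity $\sum_{s^{n-k}_i\in\mathcal{L}(s^n_l)}p(s^{n-k}_i)$ is a fixed, explicitly-coefficiented linear combination of exactly ${n \choose n-r}^2$ permanents of $r\times r$ submatrices of $U$.

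Next I would evaluate each $r\times r$ permanent using Ryser's formula which, with Gray-code ordering of the subsets, computes a single $r\times r$ permanent exactly in $O(2^{r-1}r)$ arithmetic operations. Accumulating the squared magnitudes with their coefficients is $O(1)$ per permanent, so the overall running time is $O\big(2^{r-1}r\,{n \choose n-r}^2\big)$, matching the claim; in particular this is efficient whenever $r$ is a constant independent of $n$, since then $2^{r-1}r=O(1)$ and ${n\choose n-r}^2=\mathsf{poly}(n)$.

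The step demanding the most care is the decomposition and counting rather than the evaluation. I must confirm that, under the uniform-loss model together with the output-loss commutation convention fixed in the error model, each lossy output probability factors into precisely ${n \choose n-r}$ surviving-photon configurations, each contributing a genuinely $r\times r$ (not larger) permanent, so that invoking the result of \cite{aaronson_bosonsampling_2016} yields the stated form with the correct term count. Once the matrix size is pinned at $r$ and the number of permanents at ${n \choose n-r}^2$, the complexity bookkeeping is routine and the Ryser step is standard.
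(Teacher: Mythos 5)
Your proposal is correct and follows essentially the same route as the paper's proof: decompose each $p(s^{n-k}_i)$ into ${n \choose n-r}$ squared $r\times r$ permanents, evaluate each with Ryser's formula in $O(2^{r-1}r)$ time, and sum the ${n \choose n-r}^2$ resulting terms. Your added remark that the exact computation uses the actual submatrices of $U$ (the Gaussian characterisation being relevant only to the statistical analysis) is a slightly more careful statement of the same argument.
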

Notice that when $r$ is a constant independent of $n$,  the runtime of the above algorithm of the order of ${n \choose n-r} \in O(n^r)$ which is a polynomial in $n$. Thus, recycled probabilities corresponding to a high number of lost photons are efficiently computable classically.  
For low values of loss, in particular when $r$ scales with $n$, the above efficient classical simulability results break down, this however does not necessarily imply that there is no other, possibly efficient, algorithm for computing $\sum_{s_i^{n-k} \in \mathcal{L}(s^n_l)}p(s^{n-k}_i)$.

For the case where $k$ is a constant independent of $n$, there is  evidence that computing  $\sum_{s_i^{n-k} \in \mathcal{L}(s^n_l)}p(s^{n-k}_i)$  is probably hard (inefficient) to do classically (see also the discussion in Section \ref{sec:properties_mitigated}).  Indeed,  it is known that in worst-case the probabilities $p(s^{n-k}_i)$ when $k$ is a constant independent of $n$ are not efficient to compute classically, unless the polynomial hierarchy collapses to its third level \cite{aaronson_bosonsampling_2016}. 
This motivates the application of recycling mitigation in the low $k$ regime, as the $n-k$--photon output statistics are efficiently computable classically when $k$ is high, and therefore any potential quantum advantage is lost. In all our simulations, we apply our error mitigation techniques to statistics where the number of lost photons is a constant independent of system size, and discard all other statistics.

\subsection{Bounding the bias error}
\label{sec:bounding}

In this section, we give bounds on the bias error induced when replacing the interference term with its expectation value. This bias error will be present in all the error mitigation techniques  we derive later on. We provide  probabilistic bounds on this error, using statistical inequalities. Also, we provide a \emph{deterministic} upper bound by using the results of \cite{berkowitz_stability_2018}  which hold for specific families of unitary matrices.

The recycled probabilities have a composite structure, comprising a mixture of the ideal probability and the interference term. While the interference term, 
\begin{equation}
I_{s^n_l,k}=\sum_{s_i^{n-k} \in \mathcal{L}(s^n_l)}\sum_{ s^n_j \in \mathcal{G}(s_i^{n-k}), j \neq l}p(s^n_j)\frac{1}{N'_k},
\end{equation}
is itself a mixture of $n$--photon output probabilities. 
It is possible to account for the errors caused by using approximations of the interference terms when generating the mitigated probabilities by upper bounding the deviation of the interference term away from an expected value. 

Firstly, we consider the case of Haar random matrices. As we are working in the no-collision regime, the output probabilities generated by sampling from a Haar random matrix is linked to permanents of Gaussian random matrices $X \in \mathcal{G}_{n \times n}$ \cite{aaronson_bosonsampling_2016} with $\mathcal{G}_{n \times n}$ the set of all such Gaussian matrices; more precisely the set of complex matrices whose real and imaginary parts are chosen independently from the normal distribution $\mathcal{N}(0, \frac{1}{2})$.  Let $p_{unif}:=\frac{1}{{m \choose n}}$, in appendix \ref{appintbounds} we show the following.

\begin{lemma} \label{expectation_haar}
For all $s^n_l$, we have
\begin{equation}
\begin{split}
\mathbf{E}_{X \in \mathcal{G}_{n \times n}}\big(I_{s^n_l, k}\big) & =  p_{unif}\big(1 + g(m) \big) \approx p_{unif} ,\\
\end{split}
\end{equation}
where $g(m)\in O(m^{-1}) $, with $\mathbf{E}_{X \in \mathcal{G}_{n \times n}}(.)$ the expectation value over the set $\mathcal{G}_{n \times n}$.
\end{lemma}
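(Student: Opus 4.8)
The plan is to exploit the fact that the interference term is, up to the prefactor $1/N'_k$, a sum of exactly $N'_k$ ideal $n$--photon probabilities $p(s^n_j)$ with $j \neq l$, and that in the Gaussian model every such probability has the same expectation. First I would invoke linearity of expectation to write
\begin{equation*}
\mathbf{E}_{X \in \mathcal{G}_{n \times n}}\big(I_{s^n_l,k}\big)=\frac{1}{N'_k}\sum_{s_i^{n-k} \in \mathcal{L}(s^n_l)}\sum_{s^n_j \in \mathcal{G}(s_i^{n-k}),\, j \neq l}\mathbf{E}_{X \in \mathcal{G}_{n \times n}}\big(p(s^n_j)\big),
\end{equation*}
so that the whole computation reduces to the expectation of a single output probability together with a count of how many terms appear. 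From the normalisation argument already carried out for eqn. \ref{recycling_normalisation_factor}, the double sum (counted with multiplicity, after removing the $\binom{n}{k}$ terms with $j=l$, one for each $s_i^{n-k}\in\mathcal{L}(s^n_l)$ since $s^n_l\in\mathcal{G}(s_i^{n-k})$ always) contains exactly $N'_k=\big(\binom{m-n+k}{k}-1\big)\binom{n}{k}$ terms.

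Next I would compute $\mathbf{E}_{X}(p(s^n_j))$ for a single output. In the no-collision regime the output probability is the modulus squared of the permanent of the $n \times n$ submatrix $U_{\mathbf{T},s^n_j}$, which in the Gaussian model is replaced by $X_j/\sqrt{m}$ with $X_j \in \mathcal{G}_{n \times n}$, giving $p(s^n_j)=|\mathsf{Per}(X_j)|^2/m^n$. The key ingredient is the second-moment identity $\mathbf{E}_{X \in \mathcal{G}_{n \times n}}\big(|\mathsf{Per}(X)|^2\big)=n!$. I would prove this by expanding $|\mathsf{Per}(X)|^2=\sum_{\sigma,\tau \in S_n}\prod_{i}X_{i,\sigma(i)}\overline{X_{i,\tau(i)}}$ and using independence together with $\mathbf{E}(X_{ab})=0$ and $\mathbf{E}(|X_{ab}|^2)=1$: a term survives the expectation only when $\sigma(i)=\tau(i)$ for every row $i$, i.e.\ $\sigma=\tau$, and each of the $n!$ surviving permutations contributes $1$. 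Hence $\mathbf{E}_{X}(p(s^n_j))=n!/m^n$, and by the exchangeability of the Gaussian ensemble this value is independent of $j$.

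Combining the two steps, the common per-term expectation factors out of the sum and the $N'_k$ terms cancel the prefactor, leaving $\mathbf{E}_{X}(I_{s^n_l,k})=n!/m^n$. It then remains to convert this into a statement about $p_{unif}=1/\binom{m}{n}$. Writing
\begin{equation*}
\frac{n!}{m^n}=p_{unif}\,\frac{n!\binom{m}{n}}{m^n}=p_{unif}\prod_{j=0}^{n-1}\Big(1-\frac{j}{m}\Big),
\end{equation*}
the product equals $1+O(m^{-1})$ in the regime of interest, which yields the claimed $\mathbf{E}_{X}(I_{s^n_l,k})=p_{unif}(1+O(m^{-1}))\approx p_{unif}$.

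The routine parts here are the linearity/exchangeability argument and the permanent second moment. The main obstacle, and the point the appendix must treat with care, is the Gaussian replacement $U_{\mathbf{T},s^n_j}\to X_j/\sqrt{m}$: one must invoke the Aaronson--Arkhipov truncation bounds to justify that the $n\times n$ submatrices of a Haar-random $U$ are close in distribution to i.i.d.\ complex Gaussians, and then verify that both this approximation and the finite-$m$ product $\prod_{j=0}^{n-1}(1-j/m)$ contribute only the stated multiplicative $O(m^{-1})$ correction, uniformly in the output bit string $s^n_l$.
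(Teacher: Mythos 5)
Your proposal follows essentially the same route as the paper's proof in appendix \ref{appintbounds}: linearity of expectation over the $N'_k$ terms, the Gaussian approximation $p(s^n_j)\approx|\mathsf{Per}(X_j)|^2/m^n$, the identity $\mathbf{E}_{X\in\mathcal{G}_{n\times n}}(|\mathsf{Per}(X)|^2)=n!$, and the conversion $n!/m^n=p_{unif}(1+O(m^{-1}))$. The only differences are cosmetic -- you derive the permanent second moment by expanding over permutation pairs rather than citing \cite{aaronson_bosonsampling_2014,nezami_permanent_2021}, and you write the finite-$m$ correction as the product $\prod_{j=0}^{n-1}(1-j/m)$ rather than as $(m^n+O(m^{n-1}))/m^n$ -- both of which are correct and equivalent.
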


With Lemma \ref{expectation_haar} in hand, it is possible to bound the deviation of the interference terms around this expected value according to the  theorem shown in Appendix \ref{appintbounds}.

\begin{theorem} \label{Haarbound}
The deviation of interference terms around $p_{unif}$ for Haar random matrices is bounded 
\begin{equation}
Pr\Big(\big|I_{s^n_l,k}-p_{unif}\big| \geq \epsilon_{bias,s^n_l}\Big) \leq \frac{np^2_{unif}}{\epsilon^2_{bias,s^n_l}},
\end{equation}
where $\epsilon_{bias,s^n_l}$ is a positive real number.
\end{theorem}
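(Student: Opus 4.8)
The target inequality is a second-moment (Chebyshev) bound on the random variable $I_{s^n_l,k}$, where the randomness is over the choice of Haar-random interferometer, equivalently over $X \in \mathcal{G}_{n\times n}$ in the no-collision Gaussian approximation. Since Lemma \ref{expectation_haar} already gives $\mathbf{E}_{X\in\mathcal{G}_{n\times n}}(I_{s^n_l,k}) = p_{unif}(1+O(m^{-1})) \approx p_{unif}$, the plan is to invoke Chebyshev's inequality in the form $Pr(|I_{s^n_l,k}-p_{unif}|\geq \epsilon_{bias,s^n_l}) \leq \mathrm{Var}(I_{s^n_l,k})/\epsilon_{bias,s^n_l}^2$, and then to establish $\mathrm{Var}(I_{s^n_l,k}) \leq n\, p_{unif}^2$. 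Comparing with the claimed right-hand side $n\,p_{unif}^2/\epsilon_{bias,s^n_l}^2$, all of the content lies in this variance bound.

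The first key step is to avoid having to control the correlations between the many distinct output probabilities appearing in $I_{s^n_l,k}$. Writing the interference term as a convex combination $I_{s^n_l,k}=\sum_j a_j\, p(s^n_j)$ with $a_j \geq 0$ and $\sum_j a_j = 1$ (exactly its form in eqn. \ref{eqinterference}), I would apply Minkowski's triangle inequality in $L^2$ over the probability space of $X$:
$$\sqrt{\mathrm{Var}(I_{s^n_l,k})} = \Big\|\sum_j a_j\big(p(s^n_j)-\mathbf{E}\,p(s^n_j)\big)\Big\|_2 \leq \sum_j a_j \big\|p(s^n_j)-\mathbf{E}\,p(s^n_j)\big\|_2.$$
Because the Haar measure is invariant under permuting the output modes, every $p(s^n_j)$ has the same marginal distribution -- in the Gaussian approximation each equals $|\mathsf{Per}(X)|^2/m^n$ for $X\in\mathcal{G}_{n\times n}$ -- so $\big\|p(s^n_j)-\mathbf{E}\,p(s^n_j)\big\|_2 = \sqrt{\mathrm{Var}(p(s^n_j))}$ is a single quantity $\sigma$ independent of $j$. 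Since $\sum_j a_j = 1$, the bound collapses to $\mathrm{Var}(I_{s^n_l,k}) \leq \sigma^2 = \mathrm{Var}(p(s^n_j))$, reducing the problem to the variance of a single output probability.

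The second, and main, step is to evaluate this single-output variance through the first two moments of the Gaussian permanent. With $p(s^n_j)=|\mathsf{Per}(X)|^2/m^n$, one has $\mathbf{E}(p(s^n_j)) = \mathbf{E}|\mathsf{Per}(X)|^2/m^n = n!/m^n \approx p_{unif}$, using the standard second moment $\mathbf{E}|\mathsf{Per}(X)|^2 = n!$. For the fourth moment I would use Wick contraction: expanding $\mathsf{Per}(X)^2\,\overline{\mathsf{Per}(X)}^2$ as a sum over four permutations $(\sigma_1,\sigma_2,\tau_1,\tau_2)$ and pairing each holomorphic factor with an antiholomorphic one row by row, the per-row Gaussian moment forces the two column multisets in each row to coincide. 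Carrying this out yields $\mathbf{E}|\mathsf{Per}(X)|^4 = n!\sum_{\rho\in S_n} 2^{c(\rho)}$, where $c(\rho)$ counts the cycles of $\rho=\sigma_1^{-1}\sigma_2$. The generating-function identity $\sum_{\rho\in S_n} x^{c(\rho)} = x(x+1)\cdots(x+n-1)$ evaluated at $x=2$ gives $(n+1)!$, so $\mathbf{E}|\mathsf{Per}(X)|^4 = (n+1)!\,n!$. Hence $\mathbf{E}(p(s^n_j)^2) = (n+1)!\,n!/m^{2n} \approx (n+1)p_{unif}^2$ and $\mathrm{Var}(p(s^n_j)) \approx (n+1)p_{unif}^2 - p_{unif}^2 = n\, p_{unif}^2$. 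Combining this with the Minkowski reduction and Chebyshev's inequality gives the stated bound.

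The main obstacle is the fourth-moment computation $\mathbf{E}|\mathsf{Per}(X)|^4 = (n+1)!\,n!$; the bookkeeping of the row-wise Wick pairings and their reduction to the cycle-counting sum is where care is needed, whereas the Minkowski reduction and the final Chebyshev step are routine once the single-output variance is in hand. A secondary point to track is that the $O(m^{-1})$ correction to $\mathbf{E}(I_{s^n_l,k})$ and the approximation $n!/m^n \approx p_{unif}$ only perturb the bound at subleading order in $1/m$, so they can be absorbed into the stated leading-order inequality.
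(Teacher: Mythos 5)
Your proof is correct and follows essentially the same route as the paper's: Chebyshev's inequality around $p_{unif}$, combined with the observation that the variance of the convex combination $I_{s^n_l,k}$ is at most the variance $\approx n\,p_{unif}^2$ of a single output probability (your Minkowski/$L^2$ triangle-inequality reduction is equivalent to the paper's Cauchy--Schwarz bound on all pairwise covariances of the identically distributed terms). The only difference is cosmetic: the paper cites the moments $\mathbf{E}|\mathsf{Per}(X)|^2=n!$ and $\mathbf{E}|\mathsf{Per}(X)|^4=(n+1)!\,n!$ from the literature, whereas you rederive the fourth moment by Wick contraction and arrive at the same value.
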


Secondly, as it is desirable not to be restricted to only mitigating the output of Haar random matrices, we derive an additional bound for arbitrary matrices.  Let $D^k_R$ be the uniform distribution over recycled probabilities $\{p^k_R(s^n_l)\}_{l}$. Since every recycled probability has an associated interference term $I_{s^n_l,k}$, one can equivalently think of $D^k_R$ as a distribution over interference terms. A random variable $Y$ chosen from $D^k_R$ means choosing, with uniform probability $p_{unif}$, a value from the set $\{p^k_R(s^n_l)\}_{l}$  (or, equivalently, choosing a value from $\{I_{s^n_l,k}\}_l$ uniformly randomly). 
In Appendix \ref{appintbounds}, we show the following.

\begin{lemma}\label{expectation_arbitrary}

\begin{equation}
\begin{split}
\mathbf{E}_{D^k_R}\big(I_{s^n_l, k}\big) & = p_{unif},\\
\end{split}
\end{equation}
where $\mathbf{E}_{D^k_R}(.)$ denotes the expectation value over $D^k_R$.
\end{lemma}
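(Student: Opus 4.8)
The plan is to exploit the fact that $D^k_R$ is the \emph{uniform} distribution over the ${m \choose n}$ recycled probabilities, so that it assigns weight $p_{unif}={m \choose n}^{-1}$ to each interference term. The expectation is then just a flat average,
\[
\mathbf{E}_{D^k_R}\big(I_{s^n_l,k}\big)=\frac{1}{{m \choose n}}\sum_{l=1}^{{m \choose n}}I_{s^n_l,k},
\]
and the entire claim collapses to the purely combinatorial identity $\sum_l I_{s^n_l,k}=1$. Note that, in contrast to Lemma \ref{expectation_haar}, there is no averaging over random matrices here: each $p(s^n_j)$ is a fixed number, so the only work is to count multiplicities correctly.

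First I would substitute the definition \eqref{eqinterference} of $I_{s^n_l,k}$, pull the constant $1/N'_k$ outside the $l$-sum, and rewrite the remainder as the triple sum $\sum_l\sum_{s_i^{n-k}\in\mathcal{L}(s^n_l)}\sum_{s^n_j\in\mathcal{G}(s_i^{n-k}),\,j\neq l}p(s^n_j)$. The key is to relate this to the \emph{unrestricted} triple sum already evaluated during the normalisation computation \eqref{recycling_normalisation_factor}: there it was established that each distinct $p(s^n_j)$ occurs exactly ${n\choose k}{m-n+k\choose k}$ times and that $\sum_j p(s^n_j)=1$, so the unrestricted triple sum (without the $1/{n\choose k}$ factor) equals ${n\choose k}{m-n+k\choose k}$.

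Then I would account for the $j\neq l$ restriction by identifying precisely which terms it deletes. For fixed $l$, every $s_i^{n-k}\in\mathcal{L}(s^n_l)$ is obtained from $s^n_l$ by removing $k$ photons, hence $s^n_l\in\mathcal{G}(s_i^{n-k})$, and it occurs there exactly once. Since $|\mathcal{L}(s^n_l)|={n\choose k}$, the value $p(s^n_l)$ is counted precisely ${n\choose k}$ times in the inner double sum, and these are exactly the terms removed by the constraint $j\neq l$. Summing the deleted contribution over $l$ removes ${n\choose k}\sum_l p(s^n_l)={n\choose k}$. Combining, the restricted triple sum equals ${n\choose k}{m-n+k\choose k}-{n\choose k}={n\choose k}\big({m-n+k\choose k}-1\big)=N'_k$; dividing by $N'_k$ yields $\sum_l I_{s^n_l,k}=1$, and hence $\mathbf{E}_{D^k_R}(I_{s^n_l,k})=p_{unif}$.

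I expect the main obstacle to be the bookkeeping of the diagonal term under the $j\neq l$ restriction: one must verify that $p(s^n_l)$ enters the inner sum \emph{only} through the index $s^n_j=s^n_l$, that it does so exactly once per element of $\mathcal{L}(s^n_l)$ (never twice, since each $\mathcal{G}(s_i^{n-k})$ is a set of distinct strings), and therefore with total multiplicity exactly ${n\choose k}$ — no more and no less — so that the subtraction cleanly turns ${m-n+k\choose k}$ into ${m-n+k\choose k}-1$. Everything else is a direct reuse of the counting already carried out for the normalisation factor $\mathbf{N}$.
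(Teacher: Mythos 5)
Your proposal is correct and follows essentially the same route as the paper's proof: both reduce the claim to evaluating the triple sum by comparing it to the unrestricted sum from the normalisation computation, both observe that the $j\neq l$ restriction deletes exactly ${n\choose k}$ copies of $p(s^n_l)$ (one per element of $\mathcal{L}(s^n_l)$, since $s^n_l\in\mathcal{G}(s_i^{n-k})$ exactly once for each such $s_i^{n-k}$), and both conclude via $\sum_l p(s^n_l)=1$ that the restricted sum equals $N'_k$, giving $\mathbf{E}_{D^k_R}(I_{s^n_l,k})=p_{unif}$. The multiplicity bookkeeping you flag as the main obstacle is handled in the paper exactly as you describe.
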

The deviation of interference terms around this expected value is upper bounded according to the following inequality.
\begin{theorem} \label{nonHaarbound}
The deviation of interference terms around $p_{unif}$ for an arbitrary matrix is bounded 
\begin{equation}
\begin{split}
\text{Pr}\Big(\big|I_{s^n_l, k}-  p_{unif}\big| \geq \epsilon_{bias,s^n_l} \Big) & \leq \frac{ p_{unif}}{\epsilon_{bias,s^n_l}^2} ,\\
\end{split}
\end{equation}
where $\epsilon_{bias,s^n_l}$ is a positive real number.
\end{theorem}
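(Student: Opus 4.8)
The plan is to recognise Theorem \ref{nonHaarbound} as a direct application of Chebyshev's inequality to the random variable $Y$ drawn from $D^k_R$, where $Y$ takes the value $I_{s^n_l,k}$ with uniform probability $p_{unif}$ over the $\binom{m}{n}$ choices of $s^n_l$. The expectation of $Y$ is already pinned down by Lemma \ref{expectation_arbitrary}, namely $\mathbf{E}_{D^k_R}(Y) = p_{unif}$, so the only missing ingredient is a bound on the variance of $Y$.

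First I would invoke Chebyshev's inequality in the form $\Pr(|Y - \mathbf{E}(Y)| \geq \epsilon_{bias,s^n_l}) \leq \mathrm{Var}(Y)/\epsilon_{bias,s^n_l}^2$, substituting $\mathbf{E}(Y)=p_{unif}$ from Lemma \ref{expectation_arbitrary}. This immediately reduces the whole statement to establishing $\mathrm{Var}(Y) \leq p_{unif}$.

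The key step is then this variance bound. I would observe that each interference term $I_{s^n_l,k}$, as written in eqn. \ref{eqinterference}, is a genuine convex combination of the $n$-photon probabilities $\{p(s^n_j)\}$: it is an average of $N'_k$ terms each of the form $p(s^n_j)/N'_k$, so its weights are nonnegative and sum to one. Hence $0 \leq I_{s^n_l,k} \leq 1$, meaning $Y$ takes values in $[0,1]$. For such a variable one has the pointwise inequality $Y^2 \leq Y$, which upon taking expectations gives $\mathbf{E}(Y^2) \leq \mathbf{E}(Y) = p_{unif}$. Therefore $\mathrm{Var}(Y) = \mathbf{E}(Y^2) - p_{unif}^2 \leq p_{unif} - p_{unif}^2 \leq p_{unif}$, and substituting into Chebyshev yields the claimed bound $\Pr(|I_{s^n_l,k} - p_{unif}| \geq \epsilon_{bias,s^n_l}) \leq p_{unif}/\epsilon_{bias,s^n_l}^2$.

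There is essentially no serious obstacle here; the argument is routine once the right random variable and the $Y^2 \leq Y$ trick are identified. The only point requiring care is confirming that $Y \in [0,1]$, i.e. that the interference term really is a genuine convex combination (weights nonnegative and summing to one) rather than merely a nonnegative sum, since it is precisely this that makes the cheap variance bound $\mathrm{Var}(Y)\leq p_{unif}$ available. This also clarifies why the arbitrary-matrix case is simpler than the Haar case of Theorem \ref{Haarbound}: there the randomness is over the Gaussian matrix ensemble and the variance of permanents must be estimated directly, producing the extra factor of $n$, whereas here the randomness is merely the uniform choice of bit string and the boundedness of $Y$ does all the work.
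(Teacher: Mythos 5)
Your proposal is correct and follows essentially the same route as the paper: Chebyshev's inequality applied to the uniformly-chosen interference term, with the mean supplied by Lemma \ref{expectation_arbitrary} and the variance bounded by $p_{unif}(1-p_{unif}) \leq p_{unif}$ using the fact that $I_{s^n_l,k} \in [0,1]$. The only cosmetic difference is that the paper obtains this variance bound by citing the Bhatia-Davis inequality with $M\leq 1$, $m \geq 0$, whereas you derive the same special case directly from the pointwise inequality $Y^2 \leq Y$.
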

While both these upper bounds use Chebyshev's inequality \cite{lin_probability_2011}, the Thm. \ref{Haarbound} bound applies only to Haar random matrices, while the Thm. \ref{nonHaarbound} bound applies to arbitrary matrices. 
However, to upper bound the variance for the Haar random case we use exact moments originally derived in \cite{nezami_permanent_2021}. 
While in the case of arbitrary matrices the Bhatia-Davis inequality \cite{bhatia_better_2000} is instead used, which results in a looser bound. 

We note that a tighter, although distribution-dependent, bound may be found than the one provided in Thm. \ref{nonHaarbound}. 
This result follows from Lemmas \ref{Var_recycle_upperbound} and \ref{Var_interfer_upperbound}, proven in Appendix \ref{appintbounds}, which will now be stated, that use the definition of the variance that for a set of real values $\{x_i\}_i$ with mean $\mu$ and cardinality $N$, $\mathsf{Var}(\{x_i\}_i):=N^{-1}\sum_i(x_i - \mu)^2$.

\begin{lemma} \label{Var_recycle_upperbound}
The variance of the set of recycled probabilities is less than or equal to the variance of the set of ideal probabilities, that is
\begin{equation}
\mathsf{Var}\big(\{p(s^{n}_l)\}_l\big) \geq \mathsf{Var}\big(\{p_{R}^k(s^{n}_l)\}_l\big).
\end{equation}
\end{lemma}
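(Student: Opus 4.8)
The plan is to recognise the recycled probabilities as the image of the ideal probabilities under a \emph{doubly stochastic} linear map, and then to invoke the elementary fact that convex averaging never increases variance. Both sets are indexed by the same $\binom{m}{n}$ output bit strings and both sum to one, so the argument reduces to comparing two equal-length, equal-mean lists.

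First I would record the exact linear relation between the two sets. Substituting the decomposition of each lossy probability into ideal ones (eqn.\ \ref{eqn:n_k_probs}) into the definition of the recycled probabilities (eqn.\ \ref{normalised_recycled_probs}) gives, for every $l$,
\[
p_R^k(s^n_l) = \sum_{j=1}^{\binom{m}{n}} A_{lj}\, p(s^n_j), \qquad A_{lj} := \frac{c_{lj}}{\binom{m-n+k}{k}\binom{n}{k}},
\]
where $c_{lj}$ is the number of times $s^n_j$ occurs in the double sum $\sum_{s_i^{n-k}\in\mathcal{L}(s^n_l)}\sum_{s^n_{j'}\in\mathcal{G}(s_i^{n-k})}$. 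Every entry $A_{lj}$ is manifestly non-negative.

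The heart of the matter is to check that $A$ is doubly stochastic. Row-stochasticity, $\sum_j A_{lj}=1$, is exactly the statement that each $p_R^k(s^n_l)$ is a convex combination of ideal probabilities: for fixed $l$ the double sum has $\binom{n}{k}\binom{m-n+k}{k}$ terms counted with multiplicity, which after dividing by $\binom{m-n+k}{k}\binom{n}{k}$ sums to one. Column-stochasticity, $\sum_l A_{lj}=1$, is precisely the counting already performed in the normalisation derivation (eqn.\ \ref{recycling_normalisation_factor}), where each distinct $s^n_j$ was shown to appear exactly $\binom{n}{k}\binom{m-n+k}{k}$ times across all $l$; hence $\sum_l c_{lj}=\binom{n}{k}\binom{m-n+k}{k}$ and $\sum_l A_{lj}=1$. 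I expect this bookkeeping --- confirming that the per-column multiplicities are exactly right --- to be the only delicate point; everything else is formal. Column-stochasticity also guarantees that the two lists share the common mean $\mu=p_{unif}=1/\binom{m}{n}$, since the mean of $A\mathbf{p}$ equals that of $\mathbf{p}$.

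With $A$ doubly stochastic the conclusion follows quickly. Writing $p_R^k(s^n_l)-\mu = \sum_j A_{lj}\bigl(p(s^n_j)-\mu\bigr)$ (using $\sum_j A_{lj}\mu=\mu$) and applying Jensen's inequality to the convex map $x\mapsto x^2$ with weights $A_{lj}$ gives
\[
\bigl(p_R^k(s^n_l)-\mu\bigr)^2 \leq \sum_j A_{lj}\bigl(p(s^n_j)-\mu\bigr)^2 .
\]
Summing over $l$, exchanging the order of summation, and using column-stochasticity $\sum_l A_{lj}=1$ yields
\[
\sum_l\bigl(p_R^k(s^n_l)-\mu\bigr)^2 \leq \sum_j\Bigl(\sum_l A_{lj}\Bigr)\bigl(p(s^n_j)-\mu\bigr)^2 = \sum_j\bigl(p(s^n_j)-\mu\bigr)^2,
\]
and dividing by $N=\binom{m}{n}$ gives exactly $\mathsf{Var}\big(\{p_R^k(s^n_l)\}_l\big)\leq \mathsf{Var}\big(\{p(s^n_l)\}_l\big)$. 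Equivalently, one could note that a doubly stochastic $A$ has $\ell_2\to\ell_2$ operator norm at most $1$ (via Birkhoff--von Neumann, or a Schur test bounding $\|A\|_2\leq\sqrt{\|A\|_1\|A\|_\infty}=1$) and apply this contraction to the centred vector $\mathbf{p}-\mu\mathbf{1}$; the Jensen route above avoids even that.
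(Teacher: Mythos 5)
Your proof is correct and takes essentially the same route as the paper's: both rest on Jensen's inequality applied to the convex combination expressing each $p_R^k(s^n_l)$ in terms of the ideal probabilities, combined with the counting fact that each $p(s^n_j)$ appears exactly $\binom{n}{k}\binom{m-n+k}{k}$ times across all $l$. Your doubly-stochastic-matrix packaging simply makes explicit the column-sum bookkeeping that the paper compresses into its "after simplifying" step (and you correctly use convexity of $x\mapsto(x-\mu)^2$, where the paper's text mistakenly says "concave").
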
 
\begin{lemma} \label{Var_interfer_upperbound}
The variance of the set of interference terms is less than or equal to the variance of the set of ideal probabilities, that is
\begin{equation}
\mathsf{Var}\big(\{p(s^{n}_l)\}_l\big) \geq \mathsf{Var}\big(\{I_{s^n_l, k}\}_l\big).
\end{equation}
\end{lemma}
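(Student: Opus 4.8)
The plan is to recast the statement as a contraction property of a doubly stochastic matrix acting on the vector of ideal probabilities. Write $M := {m \choose n}$, let $\mathbf{p} := \big(p(s^n_1), \dots, p(s^n_M)\big)$ be the vector of ideal $n$--photon probabilities, and let $\mathbf{I} := \big(I_{s^n_1,k}, \dots, I_{s^n_M,k}\big)$ be the vector of interference terms. From the definition in eqn. \ref{eqinterference}, each $I_{s^n_l,k}$ is a nonnegatively weighted sum of the $p(s^n_j)$, so there is a matrix $A$ with $\mathbf{I} = A\mathbf{p}$ whose entry $A_{lj}$ (for $j \neq l$) equals $\frac{1}{N'_k}$ times the number of lossy bit strings $s^{n-k}_i \in \mathcal{L}(s^n_l)$ for which $s^n_j \in \mathcal{G}(s^{n-k}_i)$, and with $A_{ll}=0$ because of the $j \neq l$ exclusion.

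The first key step is to show that $A$ is doubly stochastic. Nonnegativity of the entries is immediate. The integer defining $A_{lj}$ counts the $(n-k)$--photon bit strings obtainable by deleting $k$ photons from both $s^n_l$ and $s^n_j$, which is manifestly symmetric under exchange of $l$ and $j$, so $A = A^{\mathsf{T}}$. For the row sums, fix $l$ and sum over $j \neq l$: for each of the $|\mathcal{L}(s^n_l)| = {n \choose k}$ lossy strings this counts the $\big({m-n+k \choose k}-1\big)$ elements of $\mathcal{G}(s^{n-k}_i)$ other than $s^n_l$ itself, using that $s^n_l \in \mathcal{G}(s^{n-k}_i)$ always. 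Hence $\sum_j A_{lj} = \frac{1}{N'_k}{n \choose k}\big({m-n+k \choose k}-1\big) = 1$ by the definition $N'_k = \big({m-n+k \choose k}-1\big){n \choose k}$. Symmetry then yields unit column sums, so $A$ is doubly stochastic.

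Next, both vectors share the common mean $\mu = p_{unif} = \tfrac{1}{M}$: for $\mathbf{p}$ because the $p(s^n_l)$ sum to one, and for $\mathbf{I}$ from the unit column sums (equivalently from Lemma \ref{expectation_arbitrary}). Writing $\mathbf{v} := \mathbf{p} - \mu\mathbf{1}$ with $\mathbf{1}$ the all-ones vector, the relation $A\mathbf{1} = \mathbf{1}$ gives $\mathbf{I} - \mu\mathbf{1} = A\mathbf{v}$. With the population variance $\mathsf{Var}(\{x_l\}_l) = M^{-1}\sum_l (x_l-\mu)^2$, the claim is precisely $\|A\mathbf{v}\|_2^2 \leq \|\mathbf{v}\|_2^2$. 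This is the standard fact that a doubly stochastic matrix is an $\ell_2$ contraction: each component $(A\mathbf{v})_l = \sum_j A_{lj} v_j$ is a convex combination of the $v_j$, so Jensen's inequality gives $(A\mathbf{v})_l^2 \leq \sum_j A_{lj} v_j^2$; summing over $l$ and invoking the unit column sums yields $\|A\mathbf{v}\|_2^2 \leq \sum_j v_j^2 \sum_l A_{lj} = \|\mathbf{v}\|_2^2$. Dividing by $M$ gives $\mathsf{Var}(\{I_{s^n_l,k}\}_l) \leq \mathsf{Var}(\{p(s^n_l)\}_l)$.

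The main obstacle I anticipate is the combinatorial bookkeeping establishing that $A$ is doubly stochastic: verifying the row-sum count and the symmetry of $A$ correctly against the $j \neq l$ exclusion and the precise normalisation $N'_k$. Once $A$ is known to be doubly stochastic, the variance inequality reduces to a short contraction argument via Jensen. I also expect the companion Lemma \ref{Var_recycle_upperbound} to follow by the identical route, replacing $A$ by the doubly stochastic matrix with strictly positive diagonal that maps $\mathbf{p}$ to the recycled-probability vector.
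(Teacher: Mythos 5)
Your proof is correct and is essentially the paper's own argument in matrix clothing: the appendix proof applies Jensen's inequality to each interference term viewed as a convex combination of the ideal probabilities (your unit row sums) and then aggregates over $l$ using the fact that each $p(s^n_j)$ appears with total weight one across the double sum (your unit column sums). The doubly-stochastic $\ell_2$-contraction packaging is a clean way to organise the same two steps, and your combinatorial bookkeeping for the symmetry and row sums of $A$ checks out against the definitions of $\mathcal{L}(s^n_l)$, $\mathcal{G}(s^{n-k}_i)$, and $N'_k$.
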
 
Using these lemmas, an upper bound on the largest probability of the output distribution may be used to derive a tighter upper bound on the variance using the Bhatia-Davis inequality \cite{bhatia_better_2000}. Let $p_{\text{upper}}$ be an experimentally derived upper bound on the largest probability in the ideal $n$--photon output distribution $p_{\text{max}}$, such that $p_{\text{upper}}\geq p_{\text{max}}$. 
Following similar steps as for the proof for Thm. \ref{nonHaarbound} results in an upper bound on the confidence of $1 - \frac{ p_{unif} p_{{upper}}} {\epsilon_{bias,s^n_l}^2} - \delta\Big(1 - \frac{ p_{unif} p_{{upper}}}{\epsilon_{bias,s^n_l}^2}\Big) $, where $\delta$ is the confidence parameter for the $p_{upper}$ estimator. This result is stated formally in the following.

\begin{theorem} \label{p_upper_BD_bound}
\textit{The deviation of interference terms around $p_{unif}$ for an arbitrary matrix is bounded }
\begin{equation*}
\begin{split}
\text{Pr}\Big(\big|I_{s^n_l, k}-  p_{unif}\big| \geq \epsilon_{bias,s^n_l} \Big) & \leq \frac{ p_{unif} p_{{upper}}} {\epsilon_{bias,s^n_l}^2} + \delta\Big(1 - \frac{ p_{unif} p_{{upper}}}{\epsilon_{bias,s^n_l}^2}\Big)  ,\\
\end{split}
\end{equation*}
where \textit{$\epsilon_{bias,s^n_l}$ is a positive real number, and $p_{\text{upper}}$ is an empirically computed upper bound on the largest probability of the ideal $n$ output photon probability distribution with confidence $1-\delta$.}
\end{theorem}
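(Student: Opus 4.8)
The plan is to combine the variance-transfer result of Lemma \ref{Var_interfer_upperbound} with the Bhatia--Davis inequality \cite{bhatia_better_2000} and Chebyshev's inequality \cite{lin_probability_2011}, following the same route as Thm. \ref{nonHaarbound}, and then to carefully account for the fact that $p_{upper}$ is only a \emph{probabilistic} upper bound on $p_{\text{max}}$, valid with confidence $1-\delta$. First I would bound the variance of the \emph{ideal} distribution. Viewing $\{p(s^n_l)\}_l$ as the values of a random variable that takes each value with uniform probability $p_{unif}={m \choose n}^{-1}$, its mean is exactly $p_{unif}$ (the ideal probabilities sum to one over the ${m \choose n}$ no-collision bit strings), its minimum is at least $0$, and its maximum is $p_{\text{max}}$. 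Bhatia--Davis then gives $\mathsf{Var}\big(\{p(s^n_l)\}_l\big)\leq (p_{\text{max}}-p_{unif})(p_{unif}-0)\leq p_{\text{max}}\,p_{unif}$.

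Next, Lemma \ref{Var_interfer_upperbound} transfers this bound to the interference terms, yielding $\mathsf{Var}\big(\{I_{s^n_l,k}\}_l\big)\leq p_{\text{max}}\,p_{unif}$. Combined with $\mathbf{E}_{D^k_R}\big(I_{s^n_l,k}\big)=p_{unif}$ from Lemma \ref{expectation_arbitrary}, Chebyshev's inequality applied to $Y$ drawn from $D^k_R$ gives $\text{Pr}\big(|I_{s^n_l,k}-p_{unif}|\geq \epsilon_{bias,s^n_l}\big) \leq p_{\text{max}}\,p_{unif}/\epsilon_{bias,s^n_l}^2$. This is precisely the distribution-dependent tightening of Thm. \ref{nonHaarbound}, but it still references the exact (and in practice unknown) quantity $p_{\text{max}}$, so it is not yet usable.

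The main obstacle, and the step that produces the extra $\delta$ term, is replacing $p_{\text{max}}$ by the empirically computed $p_{upper}$. I would introduce two events: the event $B$ that the estimator is correct, i.e.\ $p_{upper}\geq p_{\text{max}}$, which holds with probability at least $1-\delta$; and the concentration event $A=\{|I_{s^n_l,k}-p_{unif}|<\epsilon_{bias,s^n_l}\}$ governed by the uniform random draw of the bit string from $D^k_R$. Conditioning on $B$, the variance bound becomes $\mathsf{Var}\big(\{I_{s^n_l,k}\}_l\big)\leq p_{upper}\,p_{unif}$, so the Chebyshev estimate over the $D^k_R$ draw yields $\text{Pr}(A\mid B)\geq 1-C$, where $C:=p_{unif}\,p_{upper}/\epsilon_{bias,s^n_l}^2$. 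The point requiring care is the multiplicative combination of the two confidences: since $B$ concerns the independent sampling used to estimate $p_{upper}$ while $A$ concerns the draw from $D^k_R$, these events are independent, and hence $\text{Pr}(A)\geq \text{Pr}(A\mid B)\,\text{Pr}(B)\geq (1-C)(1-\delta)$ rather than a cruder union bound. Taking complements gives the claimed bound $\text{Pr}(A^c)\leq 1-(1-C)(1-\delta)=C+\delta(1-C)=p_{unif}p_{upper}/\epsilon_{bias,s^n_l}^2+\delta\big(1-p_{unif}p_{upper}/\epsilon_{bias,s^n_l}^2\big)$, which is exactly the stated inequality. I expect the only genuinely delicate issue to be justifying this independence (equivalently, that the conditional Chebyshev bound is valid on the event $B$ for the realized value of $p_{upper}$), everything else being a direct chain of the already-established lemmas.
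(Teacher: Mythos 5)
Your proposal is correct and follows essentially the same route as the paper's proof: Bhatia--Davis applied to the ideal distribution, the variance transfer of Lemma \ref{Var_interfer_upperbound}, Chebyshev's inequality with the mean from Lemma \ref{expectation_arbitrary}, and a multiplicative combination of the Chebyshev confidence with the $1-\delta$ confidence of the $p_{upper}$ estimator. If anything, your explicit treatment of the two events $A$ and $B$ and the independence assumption is slightly more careful than the paper's, which states the multiplication of confidences without formalising it.
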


A proof of this result is given in appendix \ref{appintbounds}. We note also that as $p_{\text{max}}\leq 1$ with confidence $1$, so that $\delta=0$ in Thm. \ref{p_upper_BD_bound}, Thm. \ref{nonHaarbound} follows as a corollary.

With assumptions about the structure of the unitaries, like in \cite{berkowitz_stability_2018}, it is possible to deterministically and exponentially upper bound the interference terms.

For a large class of unitary matrices, we show that the bias error scales as an inverse exponential in $n$. As will be seen later, this shows that our mitigation techniques outperform postselection in estimating output probabilities and expectation values of linear optical circuits for up to inverse exponential precisions. Here we use the the operator 2-norm and the infinity norm. For an $n\times n$ matrix $A$ the operator 2-norm is defined $\|A\|_2 := \sup_{\|\vec{x}\|_2\leq 1, \vec{x} \in \mathbb{C}^n}\|A\vec{x}\|_2$, where $\|\vec{v}\|_p$ is the $l_p$ norm, that is $\|\vec{v}\|_p = \big(\sum_p|v_i|^p\big)^{1/p}$. And the infinity norm which is defined as $\|\vec{v}\|_{\infty} := \max_i |v_i|$. Let $h^A_{\infty}:=\frac{1}{n} \sum_{i=1, \dots, n} \|\mathbf{A}_i\|_{\infty}$, where $\mathbf{A}_i$ is the $i$th row of $A$.  In Appendix \ref{appbreaking}, we use a result of \cite{berkowitz_stability_2018} to show the following.
\begin{theorem} \label{exp_barrier_thm}
For the  class of unitary matrices $U$ with submatrices $A$ such that $p_{max}:=\mathsf{max}_{s^n_l}(p(s^n_l))=|\mathsf{Per}(A)|^2$, and where these matrices $A$ satisfy $\frac{h^A_{\infty}}{\|A\|_2}\ll 1$, the bias error $\epsilon_{bias,s^n_l}$ is bounded
\begin{equation}
\epsilon_{bias,s^n_l} \in  O(e^{-2\times 10^{-5} n}).
\end{equation}
\end{theorem}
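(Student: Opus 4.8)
The plan is to reduce the theorem to a single deterministic estimate on the maximum probability $p_{max}$ and then import the stability bound of \cite{berkowitz_stability_2018}. First I would exploit the fact, visible from the definition \eqref{eqinterference} of the interference term
\begin{equation*}
I_{s^n_l,k}=\sum_{s_i^{n-k} \in \mathcal{L}(s^n_l)}\sum_{ s^n_j \in \mathcal{G}(s_i^{n-k}), j \neq l}p(s^n_j)\frac{1}{N'_k},
\end{equation*}
that it is a \emph{convex combination} of ideal $n$--photon probabilities: the number of ordered pairs $(s_i^{n-k},s^n_j)$ appearing in the double sum is exactly $N'_k=\big(\binom{m-n+k}{k}-1\big)\binom{n}{k}$, each carrying weight $1/N'_k$, so the weights are nonnegative and sum to one. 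Consequently $0 \leq I_{s^n_l,k} \leq \max_j p(s^n_j) \leq p_{max}$, and since $0\leq p_{unif}\leq p_{max}$ as well, we obtain the uniform deterministic bound
\begin{equation*}
\epsilon_{bias,s^n_l}=\big|I_{s^n_l,k}-p_{unif}\big|\leq p_{max}
\end{equation*}
for \emph{every} bit string $s^n_l$. This step removes all dependence on $l$ and on $k$, reducing the theorem to proving $p_{max}\leq O(e^{-0.000002 n})$.

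Next I would bound $p_{max}=|\mathsf{Per}(A)|^2$ directly. Since $A$ is an $n\times n$ submatrix of the unitary $U$, its operator norm satisfies $\|A\|_2\leq 1$, and by hypothesis the rows of $A$ are spread out in the sense that $h^A_{\infty}/\|A\|_2 \ll 1$. This is exactly the regime in which the permanent-stability estimate of \cite{berkowitz_stability_2018} applies: it yields an exponential suppression of $|\mathsf{Per}(A)|$ relative to $\|A\|_2^n$ governed by the small ratio $h^A_{\infty}/\|A\|_2$. Tracking the explicit numerical constants from that estimate through the substitution $\|A\|_2\leq 1$ gives $|\mathsf{Per}(A)|\leq e^{-0.000001 n}$, and squaring produces $p_{max}=|\mathsf{Per}(A)|^2\leq e^{-0.000002 n}$. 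Combining with the reduction above yields $\epsilon_{bias,s^n_l}\leq p_{max}\leq O(e^{-0.000002 n})$, as claimed.

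The main obstacle is the second step: correctly casting the hypothesis $h^A_{\infty}/\|A\|_2\ll 1$ into the precise form required by the stability result of \cite{berkowitz_stability_2018} and faithfully propagating its (rather delicate) numerical constants, which is where the peculiar exponent $0.000002$ originates. In particular I would need to check that the normalisation $\|A\|_2\leq 1$ inherited from unitarity is compatible with the scaling assumed there, and that the suppression is genuinely exponential in $n$ rather than merely in some sublinear function of $n$. By contrast, the convex-combination reduction of the first step is elementary and, crucially, is precisely what lets a bound on the single extremal quantity $p_{max}$ control $\epsilon_{bias,s^n_l}$ simultaneously for all outputs; the variance Lemmas \ref{Var_recycle_upperbound}--\ref{Var_interfer_upperbound} are not needed here, since they serve the probabilistic tightening of Thm.~\ref{nonHaarbound} rather than this deterministic estimate.
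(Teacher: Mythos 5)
Your proposal is correct and follows essentially the same route as the paper's proof: bound $I_{s^n_l,k}$ above by $p_{max}$ (the convex-combination observation makes this explicit where the paper calls it immediate), identify $p_{max}=|\mathsf{Per}(A)|^2$ with $\|A\|_2\leq\|U\|_2\leq 1$, and invoke the stability bound of \cite{berkowitz_stability_2018} with $h^A_{\infty}/\|A\|_2\ll 1$ to get $p_{max}\leq 4e^{-0.000002n}$. Your one-line conclusion that $|I_{s^n_l,k}-p_{unif}|\leq p_{max}$ because both quantities lie in $[0,p_{max}]$ is a marginally cleaner packaging of the paper's two-case comparison of $|p_{max}-p_{unif}|$ against $p_{unif}$, but the substance is identical.
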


Interestingly, the probabilistic bounds of Theorems \ref{Haarbound} and \ref{nonHaarbound} also give, with high-confidence, exponentially low bounds on the bias error. 
Indeed, by setting for example $\epsilon_{bias,s^n_l} \in \Omega(\sqrt{p_{unif}})$ in Theorem \ref{nonHaarbound}, we observe that $|I_{s^n,k}-p_{unif}|$ is upper bounded by $\Omega(\sqrt{p_{unif}})$ with confidence $1-o(1)$. 
Intuitively, and as will be seen later in more details, both the probabilistic and deterministic upper bounds on the bias error indicate that recycling mitigation outperforms post-selection for up to inverse-exponential in $n$ precision. 

Appendix \ref{sec:prospects} discusses possible directions to improve our performance guarantees, and contains a technical result about sums of permanents of i.i.d. Gaussian matrices \cite{aaronson_computational_2011} that might find use beyond this work.

\subsection{Generating the loss-mitigated outputs}
\label{sec:generating}

We now present two methods for constructing loss mitigated outputs from the recycled probabilities, these we refer to as \emph{linear solving} and \emph{extrapolation}.
In linear solving, 
the interference term within each recycled probability is substituted for its expected value, and the resulting expressions are then solved to find estimators of the ideal probabilities. While in extrapolation, the decay of the ideal signal in the set of recycled distributions with $k$ is used to compute estimators of the ideal probabilities.

While the postprocessing required to generate each mitigated value may be performed efficiently, the size of the output distribution is exponential in $m$ and $n$. Meaning the classical postprocessing  cost (i.e. the memory cost) is proportional to the number of mitigated probabilities that are to be generated. So that to generate a full mitigated output distribution this scales as $m^n$, while for a subset of size $N_s$ the postprocessing cost is then proportional to $N_s$.

\subsubsection{Linear solving}

The linear solving method involves substituting the interference term within each recycled probability for an approximate value, and then solving the resulting expressions for the ideal probabilities. As the expectation of interference terms over the Haar measure and over the recycled distribution are $p_{unif} + O(m^{-n+1}) $ and $p_{unif}$, respectively, see Lemmas \ref{expectation_haar} and \ref{expectation_arbitrary}, the term $\frac{N'_k}{N_k}p_{unif}$ is used for this substitution. This allows the upper bounding of the error introduced by the substitution using Thms. \ref{Haarbound} and \ref{nonHaarbound}. 
Each recycled probability constructed from the $n-k$ output photon statistics may then be written 
\begin{equation}
    \tilde{p}_R^k(s^n_l)=\frac{p(s^n_l)}{{m-n+k \choose k}}+\frac{N'_k}{N_k}\Big(p_{unif}+\epsilon_{bias,s^n_l}\Big)+\epsilon_{stat.,s^n_l},
\end{equation}
where $\frac{N'_k}{N_k}\epsilon_{bias,s^n_l}$ is the bias error introduced by replacing the interference term in recycled probability $p_R^k(s^n_l)$ with $p_{unif}$. And $\epsilon_{stat.,s^n_l}$ is the statistical error from estimating the recycled probabilities from a finite number of samples. These new expressions can then be solved to generate the mitigated outputs
\begin{equation}
\label{eqmitlinsolv}
    p_{\text{miti}}(s^n_l) = {{m-n+k \choose k}}\bigg|\tilde{p}_R^k(s^n_l) - \frac{N'_k}{N_k}p_{unif} \bigg| .
\end{equation}
We now provide pseudocode with the steps required to use the linear solving method to generate the mitigated output.  

\vspace{1em}

\begin{algorithm*}[H]
\label{algolinsolv}
\DontPrintSemicolon
\SetAlgoLined
\SetKwInOut{Input}{input}\SetKwInOut{Output}{output}
\SetKwRepeat{Repeat}{repeat}{until}
\Input{Recycled probability estimator $\tilde{p}^k_R(s^n_l)$, and uniform probability $p_{unif}$.}
\BlankLine\
Initialise variable $p_{\text{miti}}(s^n_l)\gets 0$ for the mitigated output to be computed. \;
Update mitigated output variable as $p_{\text{miti}}(s^n_l) \gets {{m-n+k \choose k}}\big|\tilde{p}_R^k(s^n_l) - \frac{N'_k}{N_k}p_{unif} \big|$.\;
\Output{Mitigated output $p_{mit}(s^n_l)$\;}
\caption{Linear solving method}
\end{algorithm*}

\vspace{1em}

There exists a regime of recycling mitigation usefulness, where the combined bias and the statistical errors present in the mitigated probabilities are lower than the statistical errors of the postselected distribution. Indeed, from Equation \ref{eqmitlinsolv}, it can be seen that the bias error of linear solving is upper bounded by $$M_{bias} \in O\bigg({m-n+k \choose k}\bigg) \epsilon_{bias,s^n_l},$$ we formally prove this in Appendix \ref{applinearsolv}. 
Similarly, and as stated  in Section \ref{sec:prelim}, we show in Appendix \ref{applinearsolv} that statistical errors for a recycled probability constructed from $n-k$ photon statistics in linear solving is upper bounded by $M_{stat, k} \in O\bigg(\sqrt{\frac{{m \choose n}}{{n \choose k}(1-\eta)^{n-k}\eta^kN_{tot}}}\bigg)$, the ${m \choose n}$ can be replaced with $N_s$, if one is interested in mitigating a subset $N_s$ of probabilities. With these upper bounds in hand, one can find a condition on the number of samples up to  which recycling mitigation  outperforms postselection by solving for $N_{tot}$ in the following inequality
\begin{equation*}
    M_{bias}+M_{stat,k} \leq M_{stat,k=0}.
\end{equation*}
More details on this can be found in Appendix \ref{applinearsolv}.

\bigskip
We now introduce the notion of dependency. This quantifies the correlation of the interference terms with the ideal probability within the recycled probabilities. A positive correlation means that the signal for the ideal probability is greater than ${{m-n+k} \choose k}^{-1}$, which can be used to improve the performance of linear solving. Each recycled probability can be rewritten to include a dependency term $d_k(s_l^n)$ quantifying this correlation, by reformulating the interference term as a linear function of $p(s_l^n)$ and $p_{unif}$. The expression
\begin{equation}
I_{s^n_l, k} =  \big(1-d_k(s^n_l)\big)p_{unif} + d_k(s^n_l)p(s^n_l) 
\end{equation} 
defines the dependency term $d_k(s^n_l)$ of each recycled probability, and $k\leq n-1$. An average dependency term over the distribution, denoted $d_k$, may be  calculated from the recycled probabilities (see eqn. \ref{eqexpdk} and appendix \ref{applinearsolv}). Each recycled probability constructed from the $n-k$ output photon statistics may then be expressed in the form
\begin{equation}
\begin{split}
\tilde{p}_R^k(s^n_l) & = \frac{p(s^n_l)}{{m-n+k \choose k}}+\frac{N'_k}{N_k}\big(\big((1-d_k)p_{unif}+d_kp(s^n_l) + \epsilon_{bias,s^n_l} \big) + \epsilon_{stat.,s^n_l},\\
\end{split}
\end{equation}
where $\epsilon_{bias,s^n_l}$ is the bias error introduced by replacing the interference term in recycled probability $p_R^k(s^n_l)$ with $\big((1-d_k)p_{unif}+d_kp(s^n_l)\big)$. 
Note that if the computed estimator for $d_k$ is negative or greater than $1$ 
then the dependency approach should be aborted and the original version of linear solving used. We conjecture it is always the case that $1 \geq d_k \geq 0$. The following pseudocode details the steps required to perform the linear solving with dependency method and generate the mitigated output. 

We note that although the upper-bounds derived in Appendix \ref{applinearsolv} on the statistical and bias errors of linear solving with dependency are similar to those for linear solving without dependency, nevertheless  numerical simulations show that linear solving with dependency reliably outperforms linear solving without dependency (see Fig. \ref{fig:performancecomps} (a) and (b)). 

\vspace{1em}

\begin{algorithm*}[H]
\label{algolinsolvdep}
\DontPrintSemicolon
\SetAlgoLined
\SetKwInOut{Input}{input}\SetKwInOut{Output}{output}
\SetKwRepeat{Repeat}{repeat}{until}
\Input{Recycled probability estimator $\tilde{p}^k_R(s^n_l)$, uniform probability $p_{unif}$, an estimator for the absolute average deviation for the $n$ output photon distribution $\tilde{D}_0$, and an estimator for the absolute average deviation for the $n-k$--photon recycled distribution $\tilde{D}_k$ (eqn.  (\ref{eqabsavdk})).}
\BlankLine\
Initialise variables for the mitigated output $p_{mit}(s^n_l) \gets 0$ and the average dependency term $d_k \gets 0$.\;
Update the average dependency term variable $d_k \gets \frac{1}{{{m-n+k} \choose k} -1}\bigg(\frac{{{m-n+k} \choose k}\tilde{D}_k}{\tilde{D}_0} - \frac{1}{{{m-n+k} \choose k}}\bigg)$.\\
Update the mitigated output variable $p_{mit}(s^n_l) \gets \Bigg|\frac{\tilde{p}_R^k(s^n_l) +  \frac{N'_k}{N_k}\big((-1 + d_k )p_{unif} \big) }{{{m-n+k \choose k}}^{-1} + \frac{N'_k}{N_k}d_k }\Bigg|$\;
\Output{Mitigated output $p_{mit}(s^n_l)$\;}
\caption{Linear solving with dependency method}
\end{algorithm*}

\vspace{1em}

\subsubsection{Extrapolation}

We now present methods by which extrapolation may be used to generate loss-mitigated outputs. 
From the definition of the recycled probability,
\begin{equation} \label{decay_equation}
    p_R^k(s^n_l)=\frac{p(s^n_l)}{{m-n+k \choose k}}+\frac{N'_k}{N_k}I_{s^n_l, k},
\end{equation}
the magnitude of the ideal probability signal within 
the recycled probabilities 
is proportional to ${m-n+k \choose k}^{-1}$. 
As $k$ increases, the ideal signal magnitudes decrease as the recycled distributions converge towards uniform. 
The rate of decay of the ideal signal with $k$ can be computed and then used to extrapolate mitigated outputs. 
We present two variations of extrapolation in which different types of dependence of the ideal probability signal on number of lost photons $k$ are considered. A linear dependence is used for a linear extrapolation method, and an exponential dependence for an exponential extrapolation method. 
These two functions were considered primarily for heuristic reasons as seeming to reflect the decay behaviour observed in the recycled distributions.
Both the linear and exponential extrapolation methods involve two iterations of optimisation. The first iteration computes an average decay parameter using the set of average absolute deviations of the recycled distributions. Where, for the set of recycled probabilities $\{p^k_R(s^n_l)\}_{l}$ constructed from $k$ photon statistics, the average absolute deviation is defined as 
\begin{equation}
\label{eqabsavdk}
\begin{split}
D_k & := {m \choose n}^{-1}\sum_{l }\abs{{p}_{R}^k(s^n_l) - p_{unif}}.\\
\end{split}
\end{equation}
The second iteration then uses the decay parameter to compute the mitigated values. 

Linear extrapolation applies a linear model function to compute mitigated outputs. Here the least squares method is used to identify optimal parameters to fit a linear model function to data. 
Parameter optimisation is performed by minimising the sum of the squared residuals, where a residual is the difference between a data point and the model. 
A data set of $N$ points is denoted $\{x_i,y_i\}_{i=1}^N$, where $\{x_i\}_{i=1}^N$ are the independent variables and $\{y_i\}_{i=1}^N$ are the dependent variables. The model function $f(x,\boldsymbol{\alpha})$ is optimised by varying the $\boldsymbol{\alpha}$ parameters to approximate the relation between independent and dependent variables found in the data set. 
The residual for each data point is defined $r_i := y_i - f(x_i,\boldsymbol{\alpha})$.
The sum of the squared residuals is minimised to generate the optimal parameters 
\begin{equation}
\boldsymbol{\alpha}_{\text{min}}=\mathop{\arg \min}\limits_{\boldsymbol{\alpha}}\sum_{i=1}^N r_i^2,
\end{equation}
which are used to generate the optimised model function $f(x,\boldsymbol{\alpha}_{\text{min}})$. This can then be used to make predictions about data outside the range of the data set used for optimisation.

In linear extrapolation, the linear model function used for the first iteration of linear least squares is 
\begin{equation}
f(x,g_{\text{avg}}) = -g_{\text{avg}}x + \tilde{D}_0,
\end{equation}
where $\tilde{D}_0$ is the average absolute deviation of the $n$-photon distribution from uniform computed from the  statistics corresponding to postselecting on detecting all $n$ photons, and $g_{\text{avg}}$ is the optimal global linear decay parameter.  
The set $\{k,\tilde{D}_k\}_{k=1}^K$ is used as the data set to compute $g_{\text{avg}}$. Where $\{\tilde{D}_k\}_{k=1}^K$ is the set of average absolute deviations from uniform for the different distributions, for $K\leq n$, and $\tilde{D}_k$ is the estimated (from statistics where $k$ photons were lost) absolute average deviation for the $n-k$--photon recycled distribution. 
After the optimal decay parameter is identified, another iteration of least squares is performed with updated model functions this time to generate the mitigated values. 
The data set used in this step is $\{k, \tilde{p}^k_R(s^n_l)- p_{unif}\}^K_{k=1}$. For the second iteration of linear least squares, each output bit string is assigned a linear model function of the form 
\begin{equation}
f_{s_n}(x,\alpha_{s^n_l})= \text{sgn}(p_{unif}-\tilde{p}^{k=1}_R(s^n_l))g_{\text{avg}}x + \alpha_{s^n_l}.
\end{equation}
For each output bit string $s_n$ an optimal $\alpha_{s^n_l}$ is computed, generating the set $\{\alpha_{s^n_l}\}_l$, and the set of mitigated outputs is then $\{\alpha_{s^n_l}+p_{unif}\}_{l}$. 
In Appendix \ref{appextrap}, error bounds are derived for the use of linear extrapolation to perform recycling mitigation, these indicate the existence of a regime where the mitigation outperforms postselection.
As with the analytical linear solving error bounds, the linear extrapolation error bounds allow the computation of an estimate of the number of samples up to which recycling mitigation with linear extrapolation outperforms postselection by solving for $N_{tot}$ in the following inequality
\begin{equation*}
    M_{bias}+M_{stat,k} \leq M_{stat,k=0}.
\end{equation*}
We now provide pseudocode for applying the linear extrapolation method to generate mitigated outputs.

\vspace{1em}

\begin{algorithm*}[H]
\label{algolslin}
\DontPrintSemicolon
\SetAlgoLined
\SetKwInOut{Input}{input}\SetKwInOut{Output}{output}
\SetKwRepeat{Repeat}{repeat}{until}
\Input{The number of data points $n_d \in \{n_d \in \mathbb{Z}^+ | n_d<n\}$ to be used in both iterations of least squares, the data set $\{k,\tilde{D}_k\}_{k=1}^{n_d}$ used to compute the gradient parameter $g_{\text{avg}}$ in the first iteration of least squares, $\tilde{D}_0$, and  the data set $\{k,\tilde{p}^k_R(s^n_l)-p_{unif}\}_{k=1}^{n_d}$ used to compute the mitigated output in the second iteration of least squares.}
\BlankLine\
Initialise an average decay parameter variable $\tilde{g}_{\text{avg}} \gets 0$, a prefactor variable $\alpha_{s^n_l} \gets 0$, and a mitigated output variable $p_{mit}(s^n_l) \gets 0$.  \;
Use least squares method with model function $f(x_i,g_{\text{avg}}) = -\tilde{g}_{\text{avg}}x_i + \tilde{D}_0$ and data set $\{x_i,y_i\}_{i=1}^{n_d}:=\{k,\tilde{D}_k\}_{k=1}^{n_d}$ to compute the value of the average decay parameter (slope), and assign this to variable $\tilde{g}_{\text{avg}}$.\;
Use least squares method with model function $f_{s^n_l}(x_i,\alpha_{s^n_l})= \text{sgn}(p_{unif}-\tilde{p}_R^{k=1}(s^n_l))\tilde{g}_{\text{avg}}x_i + \alpha_{s^n_l}$ and data set $\{x_i,y_i\}_{i=1}^{n_d}:=\{k,\tilde{p}^k_R(s^n_l)-p_{unif}\}_{k=1}^{n_d}$ to compute the value of the y-axis intercept, and assign this to variable $\alpha_{s^n_l}$.\;
Update mitigated output variable as $p_{mit}(s^n_l) \gets p_{unif} + \alpha_{s^n_l}$.\;
\Output{Mitigated output $p_{mit}(s^n_l)$\;}
\caption{Linear extrapolation method}
\end{algorithm*}

\vspace{1em}

The method for exponential extrapolation is broadly similar. However, non-linear least squares or a non-linear numerical optimisation method (e.g. the Levenberg–Marquardt algorithm \cite{levenberg_method_1944, marquardt_algorithm_1963}) is instead used to compute the decay factor and the mitigated probabilities. The exponential model function used for the first step is
\begin{equation}
f(x,\alpha_{\text{avg}}) = \tilde{D}_0 e^{-\alpha_{\text{avg}}x},
\end{equation}
where $\alpha_{\text{avg}}$ is the optimal global exponential decay parameter \footnote{Note that, in order to satisfy $f(n,\alpha_{\mathsf{avg}})=0$, the fitting function should be $f(x,\alpha_{\mathsf{avg}})=\tilde{D}_0 e^{-\alpha_{\text{avg}}x}+ \beta$ with $\beta<0$. However, as we are intersted in the decay of $f(.)$ with increasing $k$, and we look at (far from $n$) values of $k \in O(1)$ in the fitting, we can exclude $\beta$ in the above expression. }. Where again the set $\{k,\tilde{D}_k\}_{k=1}^K$ is used as the data, this time to compute the optimal value of $\alpha_{\text{avg}}$. 
And then each output bit string for which a mitigated probability is to be generated is assigned a model function of the form 
\begin{equation}
f_{s_n}(x,\Lambda_{s_n})= \Lambda_{s^n_l} e^{-\alpha_{\text{avg}}x}+p_{unif}.
\end{equation}
An optimal prefactor, denoted $\Lambda_{s^n_l}^{\text{opt}}$, is computed for each bit string $s^n_l$ using numerical optimisation. This generates the set of prefactors $\{\Lambda_{s^n_l}^{\text{opt}}\}_l$, and the set of mitigated outputs is then $\{\Lambda_{s^n_l}^{\text{opt}}+p_{unif}\}_l$. 
The following pseudocode gives the steps required to use the exponential extrapolation method to generate the mitigated outputs. 

\begin{algorithm*}[H]
\label{algolsexp}
\DontPrintSemicolon
\SetAlgoLined
\SetKwInOut{Input}{input}\SetKwInOut{Output}{output}
\SetKwRepeat{Repeat}{repeat}{until}
\Input{The number of data points $n_d \in \{n_d \in \mathbb{Z}^+ | n_d<n\}$ to be used in both iterations of least squares, the set $\{k,\tilde{D}_k\}_{k=1}^{n_d}$ used to compute the gradient parameter $g_{\text{avg}}$ in the first iteration of least squares, $\tilde{D}_0$, and the data set $\{k,\tilde{p}^k_R(s^n_l)-p_{unif}\}_{k=1}^{n_d}$ used to compute the prefactor value in the second iteration of least squares.}
\BlankLine\
Initialise an average decay parameter variable $\alpha_{avg} \gets 0$, a prefactor variable $\Lambda_{s^n_l} \gets 0$, and a mitigated output variable $p_{mit}(s^n_l) \gets 0$.  \;
Use least squares method with model function $f(x,\alpha_{\text{avg}}) = \tilde{D}_0 e^{-\alpha_{\text{avg}}x}$ and data set $\{x_i,y_i\}_{i=1}^{n_d}:=\{k,\tilde{D}_k\}_{k=1}^{n_d}$ to compute the value of the average decay parameter and assign this to variable ${\alpha}_{\text{avg}}$.\;
Use least squares method with model function $f_{s_n}(x,\Lambda_{s^n_l})= \Lambda_{s^n_l} e^{-\alpha_{\text{avg}}x}+p_{unif}$ and data set $\{x_i,y_i\}_{i=1}^{n_d}:=\{k,\tilde{p}^k_R(s^n_l)-p_{unif}\}_{k=1}^{n_d}$ to compute the value of the prefactor and assign this to variable $\Lambda_{s^n_l}$.\;
Update mitigated output variable as $p_{mit}(s^n_l) \gets p_{unif} + \Lambda_{s^n_l}$.\;
\Output{Mitigated output $p_{mit}(s^n_l)$\;}
\caption{Exponential extrapolation method}
\end{algorithm*}

Note that the ideal signal magnitude decays proportionally with ${{m-n+k} \choose k}^{-1} \sim m^{-k}$ in eqn. \ref{decay_equation}, which intuitively motivates the choice of an exponential model function to reflect this decay behaviour. 
In the next sections, we provide numerical evidence indicating that there exists a non-trivial sampling regime where exponential extrapolation outperforms postselection. We also note that in the numerical simulations, extrapolation using an exponential model function consistently outperforms linear extrapolation (see Fig. \ref{fig:performancecomps} (c) and (d)). This may be a consequence of the exponential model function better reflecting the ideal signal decay behaviour.

As a final remark we comment on how one can estimate $D_k$ in practice. 
In \cite{aaronson_bosonsampling_2014}, it was shown that $D_0$ is lower bounded by a constant. 
The results of numerical simulations plotted in Fig. \ref{fig:D_k_plots} highlight that the same appears to hold for $D_k$ when $k \in O(1)$. 
Intuitively, this makes sense as $D_k$, similar to $D_0$, are expected to contain probabilities that are hard to simulate classically (see Section \ref{sec:properties_mitigated}) , and are thus far from the uniform distribution.  
In practice, this means that in order to estimate $D_k$ to a good ($\epsilon \in O(1)$) precision, one only needs to compute an $O(1)$ fraction of probabilities $p_R(s^n_l)$, with $s^n_l$ picked uniformly randomly, and compute the mean of $|p_R(s^n_l)-p_{unif}|$. 
This gives a  high-confidence estimate of $D_k$ by Hoeffding's inequality. 
Consequently, this means that estimating the $D_k$ to be used in extrapolation recycling mitigation does not require significantly more statistics than what is needed in linear solving.

\begin{figure}[]
\centering
\subfigure[]{
\includegraphics[width=.75\textwidth]{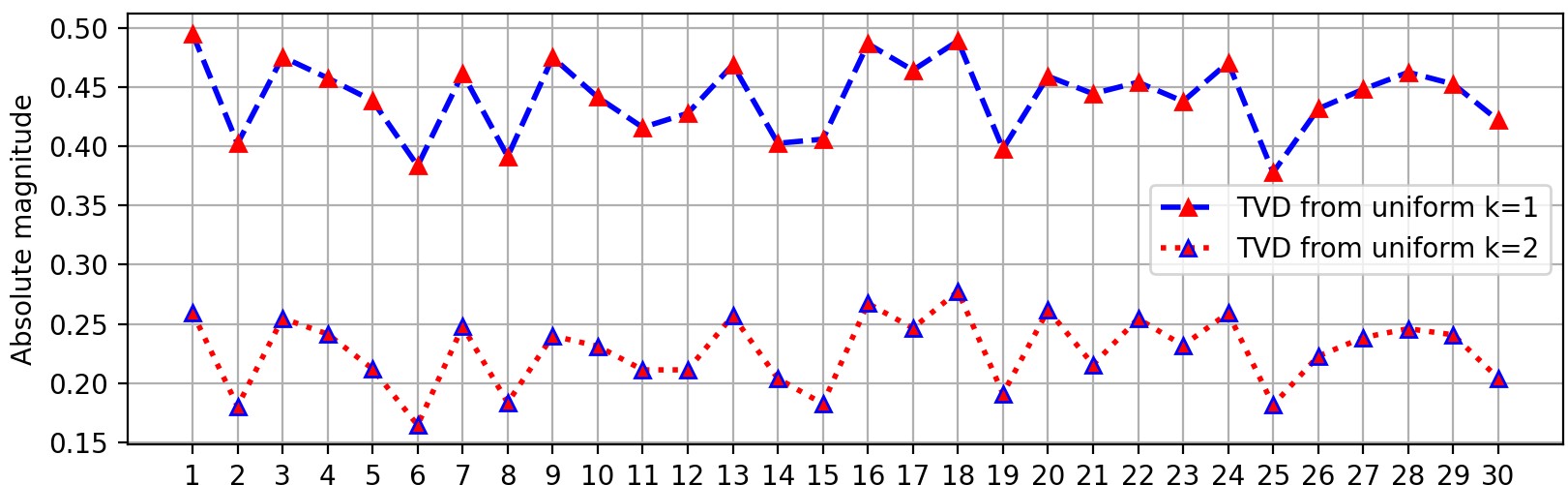}
}
\subfigure[]{
\includegraphics[width=.75\textwidth]{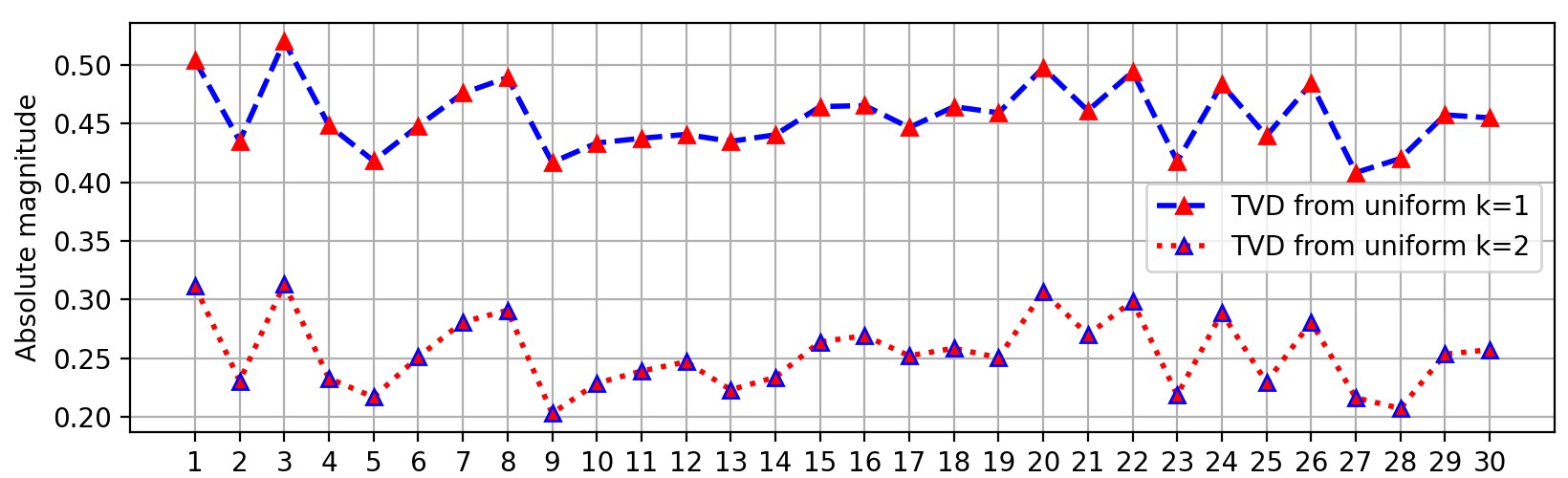}
}
\subfigure[]{
\includegraphics[width=.75\textwidth]{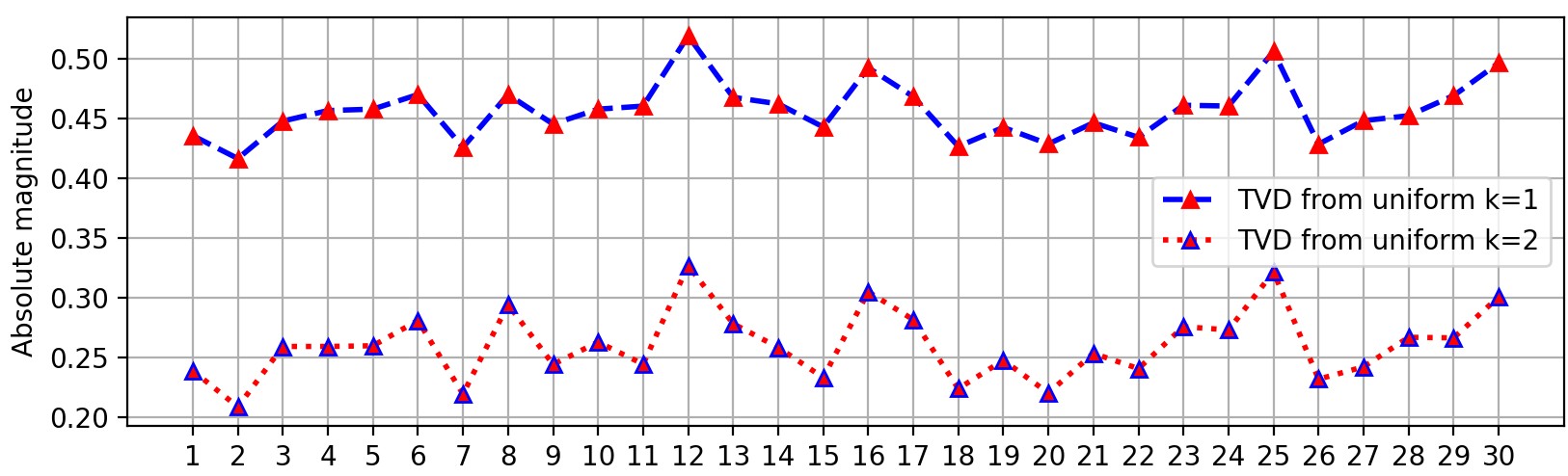}
}
\subfigure[]{
\includegraphics[width=.75\textwidth]{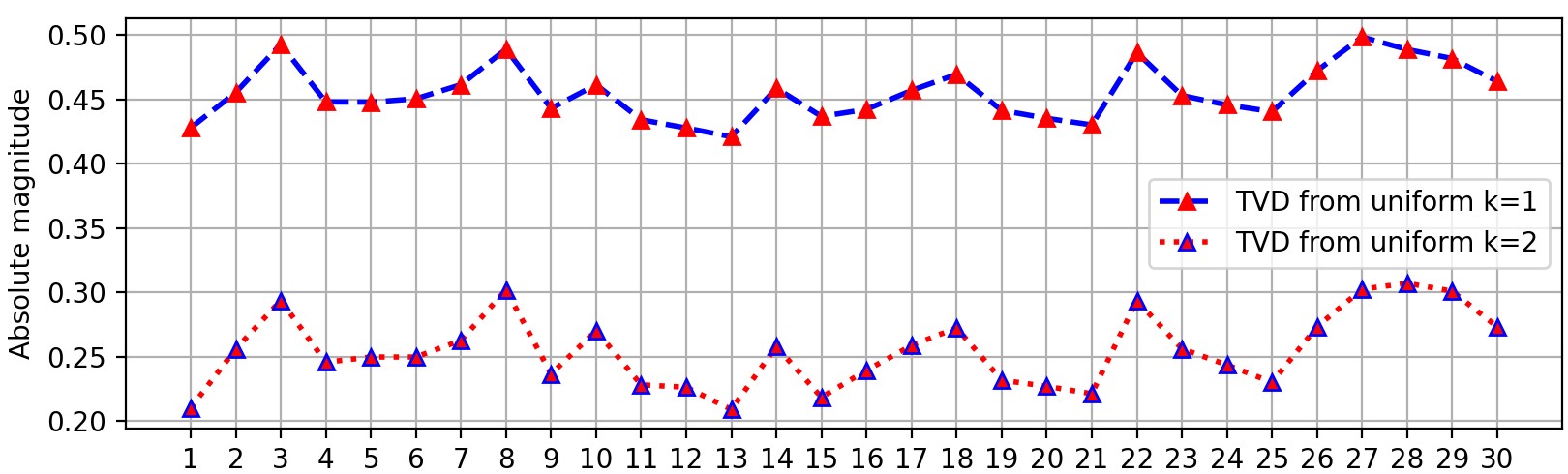}
}
\subfigure[]{
\includegraphics[width=.75\textwidth]{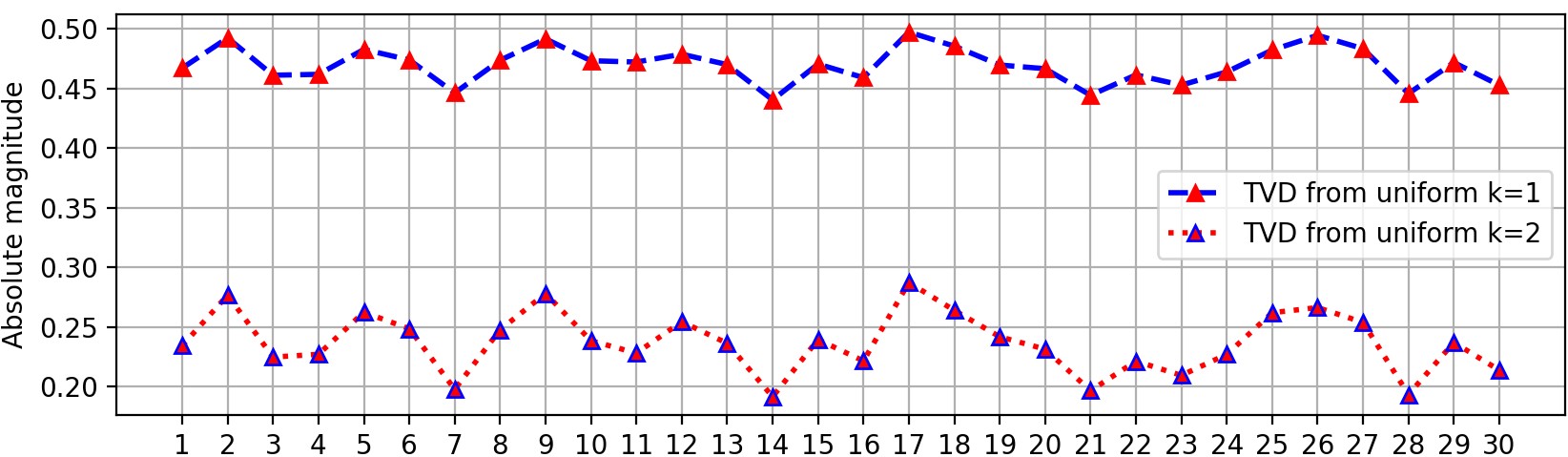}
}
\caption{The results of numerical simulations where $D_k$ for $k\in \{1,2\}$ is plotted for $30$ random unitaries with $m=20$ for (a) $n=3$, (b) $n=4$, (c) $n=5$, (d) $n=6$, and (e) $n=7$. From these results, as with \cite{aaronson_bosonsampling_2014} where it was shown that $D_0$ is lower bounded by a constant, it appears this also holds for $D_k$ when $k \in O(1)$. }
\label{fig:D_k_plots}
\end{figure}

\subsection{Normalising the mitigated distribution}
\label{sec:normalising}
The postprocessing techniques we have presented  take as input a set of recycled probabilities $\{p_R(s^n_l)\}_l$ and output a set of mitigated values $\{p_{mit}(s^n_l)\}_l$, or a subset of these values. We have made the distinction between \emph{values} and \emph{probabilities}, as the outputs  $\{p_{mit}(s^n_l)\}_l$ satisfy $p_{mit}(s^n_l)>0$ but are not normalised in general. That is, $\sum_lp_{mit}(s^n_l)=N$ with no guarantee that $N = 1$. 
Analytical and numerical calculations performed in previous sections have shown that, with high probability, $$||\vec{p}_{mit}-\vec{p}_{id}||_1 \leq ||\vec{p}_{post}-\vec{p}_{id}||_1,$$  and $|p_{mit}(s^n_l)-p(s^n_l)| \leq |p_{post}(s^n_l)-p(s^n_l)|$ up to a certain number of samples. $\vec{p}_{mit}$, $\vec{p}_{id}$, and $\vec{p}_{post}$, are vectors containing respectively the mitigated values, the ideal $n$--photon probabilities, and the probabilities obtained by postselection, and $||.||_1$ is the usual $l_1$ norm, $||\vec{v}||_1:=\sum_i|v_i|$.

In practice, one is usually interested in computing the expectation value of some observable $\mathsf{O}$, defined as $\langle \mathsf{O} \rangle=\sum_ip_iw_i$, where $w_i$ are some weights, and $\{p_i\}$ a subset of probabilities of the quantum circuit. In this case we can replace the $p_i$'s by the unnormalised mitigated values and obtain guarantees similar to those stated above. 
However, if we are interested in mitigating the entire distribution, then we would need to normalise $\vec{p}_{mit}$. The easiest way to do this, and what we do for our numerical simulations,  is to define, $\vec{p}_{mit,nor}:=\frac{1}{N}\vec{p}_{mit}$. One can easily check that $||\vec{p}_{mit,nor}||_1=1$, and therefore that it is a vector of normalised probabilities.

We will show that such a normalisation provably produces a distribution $\vec{p}_{mit,nor}$ which is closest possible to $\vec{p}_{mit}$
in $l_1$-norm.

Let $\vec{q}_m$ be a vector of positive values which is  a \emph{closest} (in $l_1$-norm) normalised vector to $\vec{p}_{mit}.$ More precisely,  $\vec{q}_m$ satisfies $$||\vec{q}_m-\vec{p}_{mit}||_1=\mathsf{min}_{\vec{q}\big|||\vec{q}||_1=1}||\vec{q}-\vec{p}_{mit}||_1.$$ 
An immediate observation, by definition, is that
$$||\vec{q}_m-\vec{p}_{mit}||_1 \leq ||\vec{p}_{id}-\vec{p}_{mit}||_1.$$ 
We now prove
\begin{lemma} \label{normalise_miti}
A valid choice of $\vec{q}_m$ is $\vec{q}_m=\vec{p}_{mit,nor}$.
\end{lemma}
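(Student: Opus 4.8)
The plan is to compute the $l_1$-distance achieved by the proposed candidate $\vec{p}_{mit,nor}$ explicitly, and then invoke the reverse triangle inequality to show that no normalised vector can do better. The whole argument rests on one observation: because every entry of $\vec{p}_{mit}$ is positive (guaranteed by the absolute values appearing in the postprocessing formulas of section \ref{sec:generating}), we have $\|\vec{p}_{mit}\|_1 = \sum_l p_{mit}(s^n_l) = N$, so the normalising constant $N$ in the definition $\vec{p}_{mit,nor} := \frac{1}{N}\vec{p}_{mit}$ is exactly the $l_1$-norm of $\vec{p}_{mit}$.

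First I would check that $\vec{p}_{mit,nor}$ is an admissible competitor: its entries are positive (a positive multiple $1/N$ of positive entries) and $\|\vec{p}_{mit,nor}\|_1 = \frac{1}{N}\|\vec{p}_{mit}\|_1 = 1$, so it lies in the feasible set over which the minimum defining $\vec{q}_m$ is taken. Next I would evaluate its distance to $\vec{p}_{mit}$: since $\vec{p}_{mit,nor} - \vec{p}_{mit} = (\tfrac{1}{N}-1)\vec{p}_{mit}$, we get
\begin{equation*}
\|\vec{p}_{mit,nor} - \vec{p}_{mit}\|_1 = \Big|\tfrac{1}{N}-1\Big|\,\|\vec{p}_{mit}\|_1 = |1-N|.
\end{equation*}

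Then I would establish the matching lower bound for an \emph{arbitrary} normalised $\vec{q}$. By the reverse triangle inequality for the $l_1$-norm,
\begin{equation*}
\|\vec{q} - \vec{p}_{mit}\|_1 \geq \big|\,\|\vec{q}\|_1 - \|\vec{p}_{mit}\|_1\,\big| = |1 - N|,
\end{equation*}
using $\|\vec{q}\|_1 = 1$ and $\|\vec{p}_{mit}\|_1 = N$. Combining the two displays shows that $\vec{p}_{mit,nor}$ attains the minimal possible distance $|1-N|$, so it is a valid choice of $\vec{q}_m$.

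There is essentially no hard step here; the only subtlety worth flagging is keeping track of the fact that $N = \|\vec{p}_{mit}\|_1$, which uses the positivity of the mitigated values. If one wanted \emph{uniqueness} of the minimiser rather than merely exhibiting one, that would require a separate argument, but the lemma as stated only asserts that $\vec{p}_{mit,nor}$ is \emph{a} closest normalised vector, which the bound above settles directly.
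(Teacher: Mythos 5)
Your proof is correct and follows essentially the same route as the paper's: compute $\|\vec{p}_{mit,nor}-\vec{p}_{mit}\|_1=|1-N|$ exactly, then use the reverse triangle inequality to show every normalised $\vec{q}$ is at least that far from $\vec{p}_{mit}$. The only difference is that you make explicit the (correct) observations that $N=\|\vec{p}_{mit}\|_1$ relies on positivity of the mitigated values and that $\vec{p}_{mit,nor}$ is an admissible competitor, which the paper leaves implicit.
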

\begin{proof}
    $||\vec{p}_{mit,nor}-\vec{p}_{mit}||_1=||\vec{p}_{mit,nor}-\vec{p}_{mit,nor}N||_1=|1-N|=\big|1-||\vec{p}_{mit}||_1\big|$.

    For any normalized vector of positive values $\vec{q}$, we can use a reverse triangle inequality to show
    $||\vec{q}-\vec{p}_{mit}||_1 \geq \big | ||\vec{q}||_1-||\vec{p}_{mit}||_1 \big | \geq \big|1-||\vec{p}_{mit}||_1\big| \geq ||\vec{p}_{mit,nor}-\vec{p}_{mit}||_1$. Thus $\vec{p}_{mit,nor}$ is a valid choice of $\vec{q}_m$, by definition of $\vec{q}_m$.
\end{proof}

\section{Numerical simulations}

\label{sec:numerics2}

\begin{figure}[]
\centering
\subfigure[]{
\includegraphics[width=.48\textwidth]{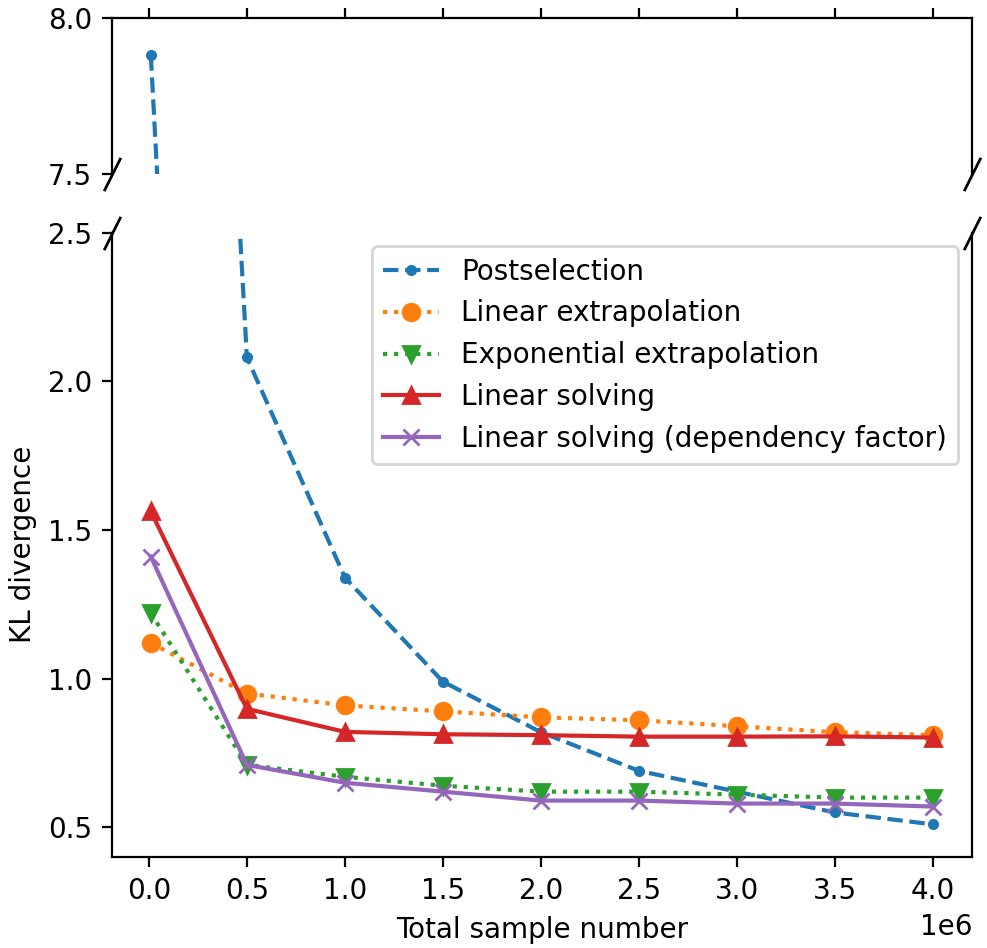}
}
\subfigure[]{
\includegraphics[width=.48\textwidth]{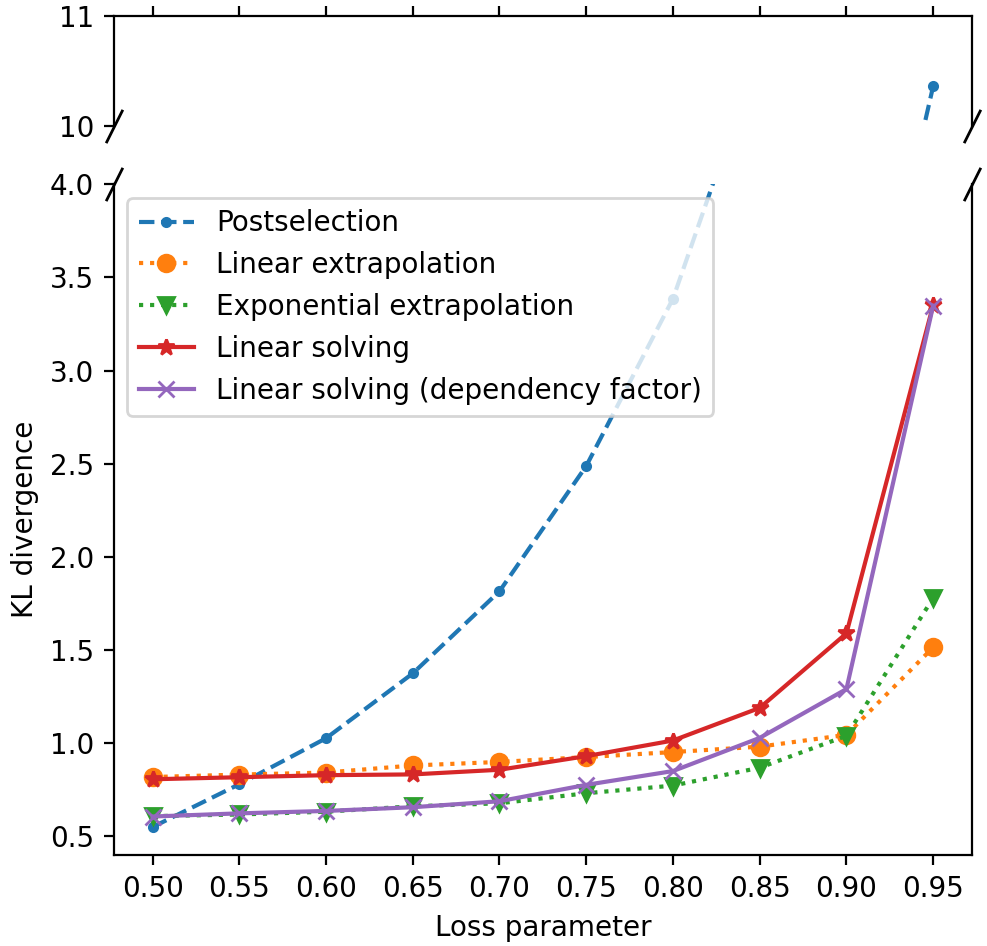}
}
\caption{A numerical performance comparison of the different methods of recycling mitigation and postselection for random unitary circuits with $m=20$ modes and $n=4$ photons.  (a) For a uniform loss parameter of $\eta=0.8$, the KL divergence from the ideal output distribution for linear extrapolation, exponential extrapolation, linear solving and linear solving with dependency versions of recycling mitigation and postselection is plotted against total sample number. (b) For a total number of samples of $N_{tot}=1 \times 10^5$ and a uniform loss parameter in the range $\eta \in [0.5,0.9]$, the KL divergence from the ideal output distribution of linear extrapolation, exponential extrapolation, linear solving and linear solving with dependency versions of recycling mitigation and postselection is plotted against total sample number.}

\label{fig:performancecomps}
\end{figure}

In the numerical simulations recycling mitigation is used to mitigate the effects of uniform photon loss error on the output of an otherwise ideal simulation of a linear optical quantum circuit. In the simulations, random matrices are chosen for each experiment which are decomposed and implemented with a linear interferometer, and a uniform photon loss model is applied with a defined loss parameter. The simulations were performed using Perceval \cite{heurtel_perceval_2023}, a pythonic framework for the simulation of photonic quantum circuits. Uniform photon loss channels commute with the interferometer, and so all loss channels, including loss due to imperfect sources and measurement, can be propagated to the end of the circuit. And so the effects of loss may be modelled by an ideal photon source, an ideal interferometer $M$, and a combined photon loss channel acting immediately before an ideal measurement operation. This means that rather than an output photon from the circuit incident on a detector being in the state $\ket{1}\bra{1}$, it is instead $$\ket{1}\bra{1} \to (1-p)\ket{1}\bra{1}+p\ket{0}\bra{0}.$$ With the probability of photon loss, $p$, the same for all output modes. 
Noise of this form may be considered analogous to the types of measurement noise commonly considered in circuit model quantum computing, as this is often modelled as an error channel followed by an ideal measurement.

For all the experiments in Fig. \ref{fig:performancecomps}, parameter settings of 20 modes and 4 photons were used. Fig. \ref{fig:performancecomps} (a) plots the performance of linear solving recycling mitigation, both with and without using the dependency factor, and postselection against the total number of samples used for a fixed uniform loss parameter of $\eta=0.8$. Linear solving outperformed postselection up to $\approx 1.1 \times 10^6$ samples, while linear solving with dependency outperformed postselection up to $\approx 3.1 \times 10^6$ samples. Note that ${20 \choose 4}^2 \approx 23 \times 10^6$, and thus recycling mitigation is outperforming postselection for sample sizes of the order of ${m \choose n}^2={20 \choose 4}^2$. Fig. \ref{fig:performancecomps} (b) plots the performance of the linear solving methods and postselection against changing loss parameter for a fixed number of samples of $1 \times 10^5$. Linear solving outperformed postselection for loss above $\approx 0.6$, and linear solving with the dependency factor above loss of $\approx 0.5$. Fig. \ref{fig:performancecomps} (c) plots the performance of linear extrapolation, exponential extrapolation and postselection against the total number of samples used for a fixed uniform loss parameter of $\eta=0.8$. Linear extrapolation outperformed postselection up to $\approx 1.7 \times 10^6$ samples, while exponential extrapolation does so up to $\approx 3.0 \times 10^6$ samples. Fig. \ref{fig:performancecomps} (d) plots the performance of the extrapolation methods and postselection against changing loss parameter for a fixed number of samples of $1 \times 10^5$. Linear extrapolation outperformed postselection for loss above $\approx 0.5$, and linear solving with the dependency factor above loss of $\approx 0.5$. These results indicate the exponential function is a better model for the physical behaviour of the decay than the linear function, we believe this may be explained in terms of the prefactor of the ideal output probability in eqn. \ref{decay_equation} decreasing exponentially (specifically $\propto (m-n+k)^{-k}\big)$ with increasing $k$. 
In current quantum linear optical devices losses including source, interferometer and measurement are commonly above $50$\% \cite{maring_general-purpose_2023}. Therefore, the evidence from these simulations indicates that recycling mitigation may be usefully applied to the current generation of linear optical devices. Also, in the previous section an upper bound on the number of samples was computed for a particular experiment.

An intuitive explanation for the results in Fig. \ref{fig:performancecomps} is that the mitigated outputs have a lower statistical error and so converge more quickly than the postselected outputs for increasing sample number and decreasing loss parameter. Furthermore, the bias errors in the mitigated outputs mean that they do not converge to the ideal outputs, whereas the postselected outputs are unbiased estimators. 
And so for sufficiently high sample number or low loss parameter postselection will eventually outperform recycling mitigation. There are, however, large sampling and loss regimes for which recycling mitigation reliably outperforms postselection.

Theorems \ref{Haarbound} and \ref{nonHaarbound} in section \ref{sec:bounding} provide statistical upper bounds on the deviation of the interference terms from $p_{unif}$. An inspection of these bounds shows that this deviation is with high confidence $\Omega(\sqrt{p_{unif}})$.

We conjecture that this bound could be considerably tightened. In support of this we ran numerical experiments for computations of size $m=16$ modes and $n=4$ photons, computing the average of $|I_{s^n_l,k}-p_{unif}|$  over all bit strings $\{s^n_l\}_l$. This computation was repeated for 20 randomly selected linear optical intereferometers $U$.
This data is plotted in Fig. \ref{fig:mean_abs_int_dev_unif}. As can be seen from the figure it seems that the analytically computed bounds on the bias error are loose. Indeed, it seems that $|I_{s^n_l, k}-p_{unif}| \in      o(p_{unif})$, in line with conjecture \ref{conjbias}.

Finally, note that $|I_{s^n_l, k}-p_{unif}|$ seems to be generally decreasing with increasing $k$. However, ${m -n +k \choose k} |I_{s^n_l, k}-p_{unif}|$ actually increases with increasing $k$, as can be verified by a direct calculation using data from Figure \ref{fig:mean_abs_int_dev_unif}. Ultimately, since ${m -n +k \choose k} |I_{s^n_l, k}-p_{unif}|$ is directly related to the bias error incurred when computing the mitigated probabilities from the recycled probabilities (see appendix \ref{applinearsolv}), increasing values of $k$ in the recycling mitigation will lead to worse performances, as predicted by Lemma \ref{ryser_lemma}.

\begin{figure}[]
     \centering
         \includegraphics[width=0.65\textwidth]{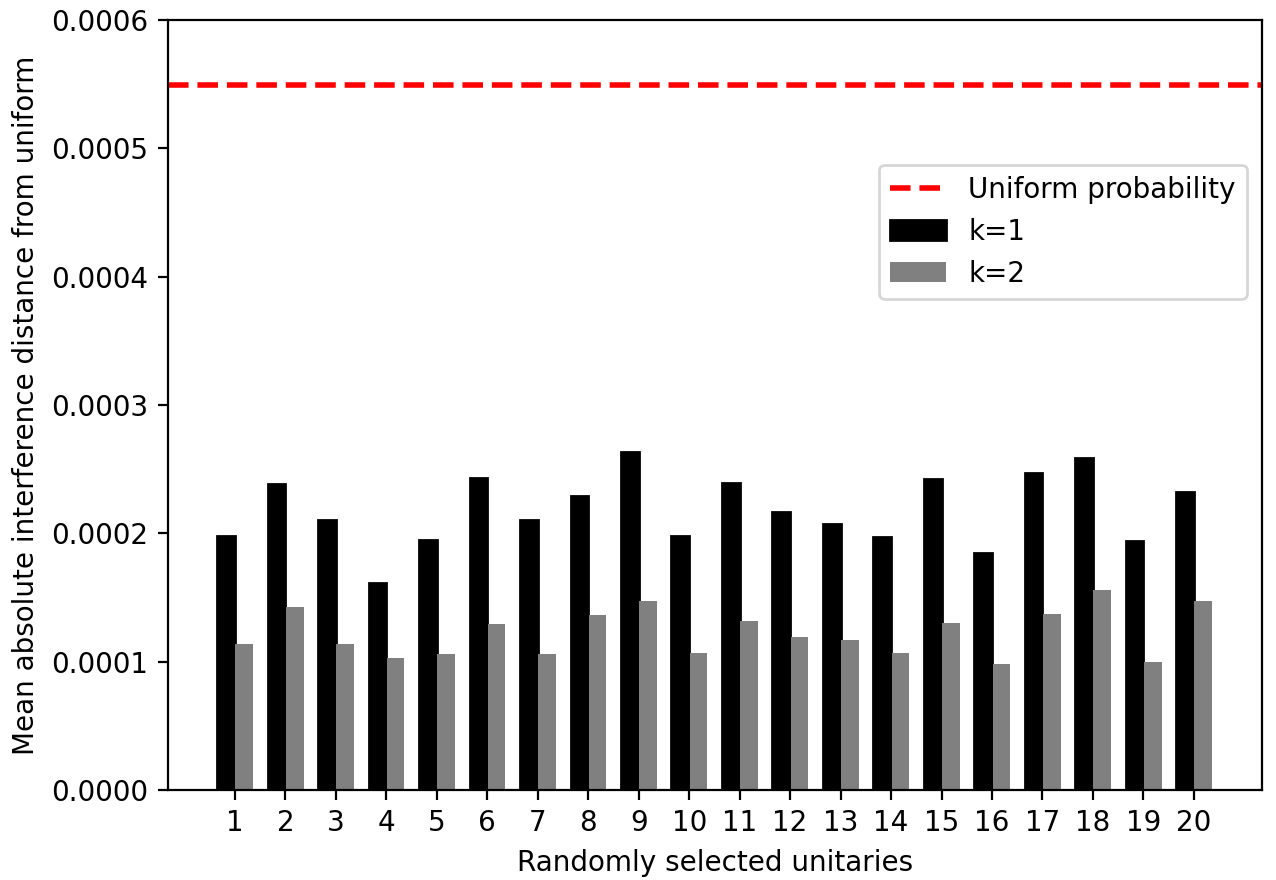}
         
        \caption{The mean absolute value of the  distance of the interference terms from the uniform probability were computed for 20  randomly selected unitaries. For all unitaries, and for $k=1$ and $k=2$, the magnitude of the computed values were observed to be exponentially small in terms of $m$ and $n$ (being of the order of ${m \choose n}^{-1}$). This indicates that it may be possible to derive tighter analytical bounds than those stated in thm. \ref{Haarbound} and thm. \ref{nonHaarbound}.}
    \label{fig:mean_abs_int_dev_unif}
\end{figure}

\section{Properties of the mitigated probabilities in the absence of statistical error}
\label{sec:properties_mitigated}

\begin{figure}[]
\centering
\subfigure[]{
\includegraphics[width=.48\textwidth]{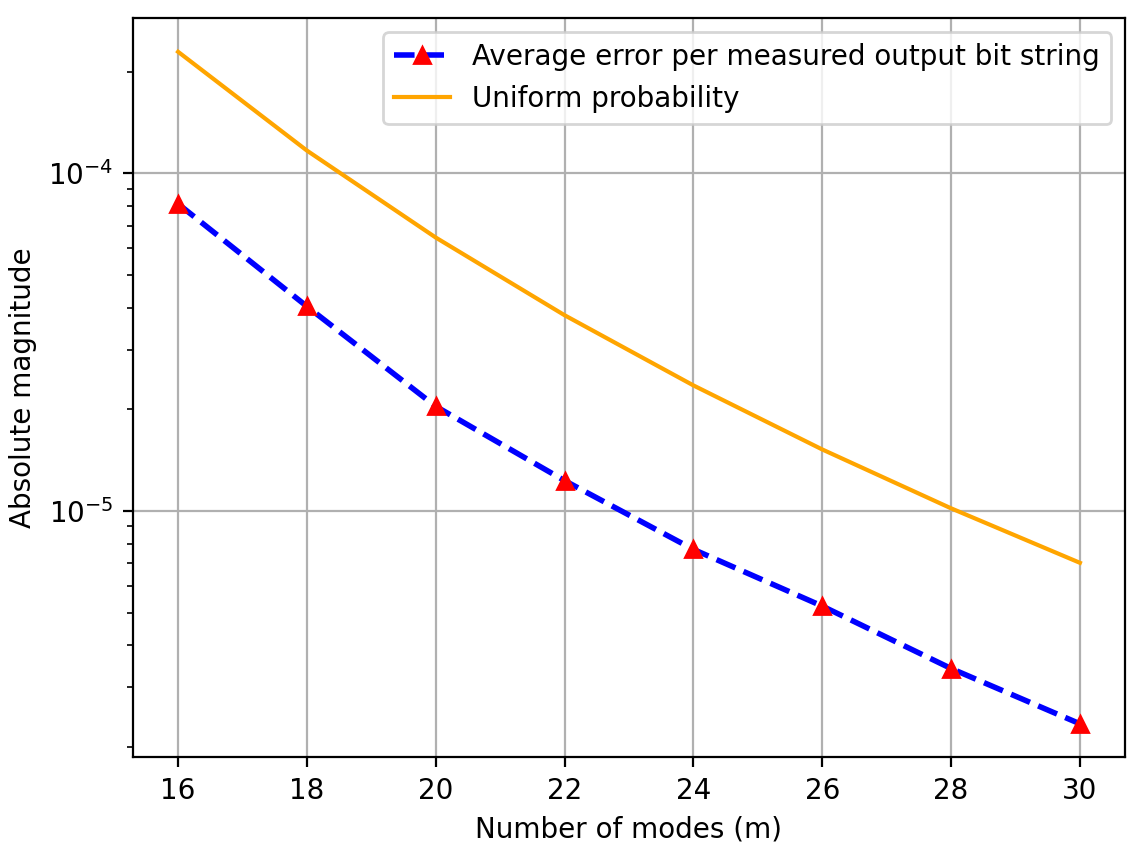}
}
\subfigure[]{
\includegraphics[width=.48\textwidth]{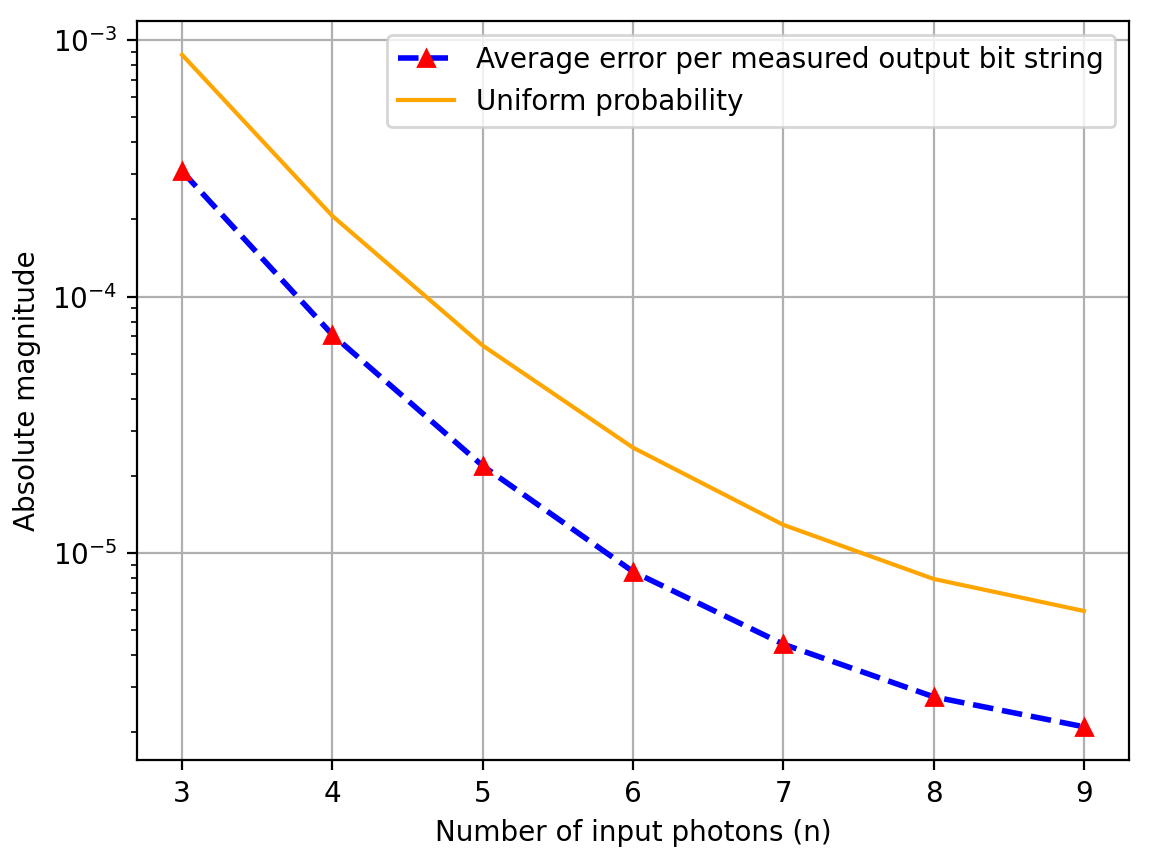}
}
\caption{The numerical simulations that produced these plots involved exact computation of the mitigated probabilities and so only the bias error is present. The magnitude of the average error per output bit string is compared with that of the uniform probability. For all chosen parameters the average error per bit string is below the uniform probability.  (a) The average error per bit string plotted for a range of modes $[16,30]$ for a fixed number of input photons $n=5$. (b) The average error per bit string plotted for a range of input photons $[3,9]$ for a fixed number of modes photons $m=20$.}

\label{fig:avg_error_comparison_with_uniform_prob}
\end{figure}
As seen previously, recycling mitigation is a biased error mitigation. Given a $U$ and a bit string $s$, this means that in the limit of infinite samples the mitigated probability $p(s)$ is related to the ideal probability $p_{id}(s)$ by 
\begin{equation*}
    |p(s)-p_{id}(s)|=\epsilon_{{mitbias},s}.
\end{equation*}
Numerical simulations in Fig. \ref{fig:avg_error_comparison_with_uniform_prob} using the exponential extrapolation  recycling mitigation technique with values of  $k$  independent of $m,n$ show that
$E_{s}(\epsilon_{{mitbias},s}) < p_{unif}$ both in the case of fixed $m$ and varying $n$, as well as fixed $n$ and varying $m$. 
Although the plots in Fig. \ref{fig:avg_error_comparison_with_uniform_prob} are for a fixed $U$, we observe similar curves when testing with different values of $U$. 
Our numerical simulations suggest that Conjecture \ref{conjbias} is true. 
We will now assume that this conjecture is true and derive from it a condition on the mitigated probabilities.

Let  $m=\Omega(n^5)$, then $p_{id}(s) \approx \frac{|\mathsf{Per}(X_s)|^2}{m^n}$ with $X_s$ an i.i.d Gaussian matrix \cite{aaronson_computational_2011},  also $p_{unif} \approx \frac{n!}{m^n}$. 
Thus, Conjecture \ref{conjbias} implies, by using a Markov inequality with $\alpha>1$,

\begin{equation*}
Pr_{s}(|m^np(s)-|\mathsf{Per}(X_s)|^2| \leq \delta(n)\alpha n! ) \geq 1-\frac{1}{\alpha}.
\end{equation*}
Choosing $\alpha$ such that $\alpha \delta (n):=\gamma(n) <1 $, we have that

\begin{equation*}
Pr_{s}(|m^np(s)-|\mathsf{Per}(X_s)|^2| \leq \gamma(n) n! ) \geq 1-\frac{1}{\alpha}.
\end{equation*}
Since this equation holds with high probability $p$ over the choice of $U$, then we can say that
\begin{equation}
\label{eqgpe}
Pr_{X \in \mathcal{G}^{n \times n}}(|m^np(s)-|\mathsf{Per}(X)|^2| \leq \gamma(n) n! ) \geq (1-\frac{1}{\alpha})p,
\end{equation}
where $\mathcal{G}^{n \times n}$ is the ensemble of $n \times n$ Gaussian matrices.

We will now argue that the existence of a $\mathsf{poly}(n)$-time classical algorithm $C$ that can compute any mitigated probability is an unlikely complexity theoretic conequence. 
Indeed, if $C$ existed, then eqn. \ref{eqgpe} directly  implies that $C$ can be used to solve the  $|GPE_{\pm}|^2$ problem with some restrictions on the parameters \cite{aaronson_computational_2011}.  More precisely, computing the mitigated probabilities allows us to estimate $|\mathsf{Per}(X)|^2$ to additive error $  \gamma(n) n ! < n !$ with probability $(1-\frac{1}{\alpha})p \approx 1 - \frac{1}{\alpha}$ over $X \in \mathcal{G}^{n \times n}$. 
The $|GPE_{\pm}|^2$ is about classically estimating $|\mathsf{Per}(X)^2|$ to within additive error $\epsilon n !$ and probability $1-\delta$ over $X \in \mathcal{G}^{n \times n}$ for any $\epsilon, \delta$, and is strongly believed to be $\sharp$P-hard.  
Furthermore, in \cite{oszmaniec_classical_2018} it was argued that $|GPE_{\pm}|^2$ with some restrictions on the parameters similar to those in eqn. \ref{eqgpe} should also be hard to do classically. This means that there likely is no such classical algorithm $C$.

In conclusion, despite recycling mitigation being a biased error mitigation technique, we have presented numerical evidence to argue that the bias errors are small enough to render the mitigated probabilities hard to compute classically.

\section{Evidence that zero noise extrapolation (ZNE) techniques present no advantage over postselection }
\label{sec:ZNE}

In this section, we provide strong evidence that techniques based on ZNE when applied to mitigating photon loss in DVLOQC in general provide no advantage over postselection.

Suppose we are interested in computing a specific marginal probability $p(n_1...n_l|n)$ of observing $n_i$ photons in mode $i \in \{1,\dots,l\}$, with $l \leq m$, and $\sum_{i=1,\dots,l}n_i=c$, with $c \leq n$.  
The notation $|n$ indicates that we are computing the \emph{ideal} marginal probability, when no photon is lost.
Let $p(n_1 \dots n_l \cap j)$ be the probability of observing the output $(n_1,\dots,n_l)$ and detecting $j$ photons in all $m$ modes, with $j \in \{c,\dots,n\}$. When postselecting on no photons being lost, we are computing $$p(n_1 \dots n_l \cap n)=(1-\eta)^np(n_1 \dots n_l |n).$$ 
However, if we compute $p(n_1 \dots n_l)$ without caring  about whether no photon is lost, 
we end up computing 
\begin{equation}
\label{eq1}
    p_{\eta}(n_1...n_l)=\sum_{i=0, \dots, n-c}(1-\eta)^{n-i}\eta^{i}p(n_1 \dots n_l | n-i).
\end{equation}
Extrapolation techniques consist of estimating $p_{\eta}(n_1 \dots n_l)$ for  different values $\{\eta_i\}$ of loss, then deducing from these an estimate of $p(n_1 \dots n_l |n)$. One example of how this can be done is the Richardson extrapolation technique, at the heart of the zero noise extrapolation (ZNE) approach \cite{endo_practical_2018}. Interestingly, it is possible to derive a condition indicating that these techniques offer no advantage over postselection in terms of estimating $p(n_1 \dots n_l |n)$.
Let $\eta=\mathsf{min}_i \eta_i$, and suppose we collect $O(\frac{1}{\epsilon^2_{max}})$ samples from our device, where $0<\epsilon_{max} \leq 1$, then $O\big(\frac{(1-\eta)^n}{\epsilon^2_{max}}\big)$ of these will be samples where no loss has occurred. Using these postselected samples one can compute, with high confidence, an estimate $\tilde{p}(n_1 \dots n_l |n)$ of $p(n_1 \dots n_l |n)$ such that
$|\tilde{p}(n_1 \dots n_l |n)-p(n_1 \dots n_l |n)| \leq \frac{\epsilon_{max}}{\sqrt{(1-\eta)^n}}$, by Hoeffding's inequality.
For  Richardson extrapolation,  in Appendix \ref{appnogo} we use $O\big(n\frac{1}{\epsilon^2_{max}}\big)$ samples, and  compute an estimate $\tilde{p}_{extrap}(n_1 \dots n_l |n)$,  with  $ E_{extrap}:=|\tilde{p}_{extrap}(n_1 \dots n_l |n)-p(n_1 \dots n_l |n)| $ the error incurred. We then explicitly compute  $\mathsf{M}(E_{extrap})$, an upper bound on $E_{extrap}$, in terms of $\epsilon_{max}$ and the $\eta_i$'s, and show the following
\begin{theorem}
\label{thnogozne}
    For all $n \geq n_0$, with $n_0$ a positive integer, $\mathsf{M}(E_{extrap}) \geq \frac{\epsilon_{max}}{\sqrt{(1-\eta)^n}}$.
\end{theorem}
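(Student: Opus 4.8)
The plan is to recast Richardson extrapolation as polynomial interpolation and reduce the theorem to a lower bound on the sum of the extrapolation coefficients. Writing $g(\eta) := p_\eta(n_1\dots n_l)$, eqn.~(\ref{eq1}) shows that $g$ is a polynomial of degree at most $n$ in $\eta$ and that $g(0) = p(n_1\dots n_l\mid n)$ is exactly the ideal marginal we wish to recover. Richardson extrapolation samples $g$ at loss values $\eta_1,\dots,\eta_d$ (all achievable only by \emph{adding} loss, so $\eta = \min_i \eta_i$ and every $\eta_j \in [\eta,1]$) and returns the value at $0$ of the interpolating polynomial, namely $\tilde p_{extrap} = \sum_{j=1}^d c_j\,\tilde g(\eta_j)$ with Lagrange weights $c_j = \prod_{k\neq j}\frac{\eta_k}{\eta_k-\eta_j}$. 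I would take $d = n+1$ so that interpolation is exact on degree-$n$ polynomials; then the interpolant reproduces $g$ and the extrapolation carries \emph{no} bias, isolating the statistical error.

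Next I would bound the statistical contribution. Splitting the $O(n/\epsilon_{max}^2)$ samples evenly gives $O(1/\epsilon_{max}^2)$ per node, so by Hoeffding each estimate obeys $|\tilde g(\eta_j)-g(\eta_j)| \le \epsilon_{max}$ with high confidence (here $g(\eta_j)\in[0,1]$ is a plain detection frequency, requiring no postselection). Since the bias vanishes, $E_{extrap} = |\sum_j c_j(\tilde g(\eta_j)-g(\eta_j))|$, and worst-case alignment of the independent errors with the signs of the $c_j$ yields the explicit upper bound $\mathsf{M}(E_{extrap}) = \epsilon_{max}\sum_{j=1}^d |c_j|$. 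The theorem then reduces to showing $\sum_j|c_j| \ge (1-\eta)^{-n/2}$.

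The crux is this coefficient-sum lower bound, and I would establish it with a Chebyshev extremal-polynomial argument that is uniform over all node placements. Let $b := \max_j \eta_j \le 1$ and let $q$ be the degree-$n$ Chebyshev polynomial $T_n$ of the first kind composed with the affine map sending $[\eta,b]$ onto $[-1,1]$, so that $|q(x)|\le 1$ on $[\eta,b]$ and in particular $|q(\eta_j)|\le 1$. Because interpolation at $n+1$ nodes is exact on degree-$n$ polynomials, $\sum_j c_j q(\eta_j) = q(0)$, whence $\sum_j|c_j| \ge |q(0)| = |T_n(\xi)|$ with $\xi = -\frac{b+\eta}{b-\eta} < -1$. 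Using $|T_n(\xi)| \ge \tfrac12\big(|\xi|+\sqrt{\xi^2-1}\big)^n$ and simplifying gives $\sum_j|c_j| \ge \tfrac12\big(\frac{\sqrt b+\sqrt\eta}{\sqrt b-\sqrt\eta}\big)^n$. This base is decreasing in $b$, so over the admissible range $b\le 1$ it is minimised at $b=1$, yielding $\sum_j|c_j|\ge \tfrac12\big(\frac{1+\sqrt\eta}{1-\sqrt\eta}\big)^n$.

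Finally I would compare the two exponential rates. A short computation shows $\frac{1+\sqrt\eta}{1-\sqrt\eta} = \rho\,(1-\eta)^{-1/2}$ with $\rho = (1+\sqrt\eta)\sqrt{\tfrac{1+\sqrt\eta}{1-\sqrt\eta}} > 1$ for every $\eta>0$ (equivalently $(1+\sqrt\eta)^3 \ge 1-\sqrt\eta$). Hence $\sum_j|c_j| \ge \tfrac12\rho^n(1-\eta)^{-n/2} \ge (1-\eta)^{-n/2}$ as soon as $n \ge n_0 := \lceil \log 2/\log\rho\rceil$, and therefore $\mathsf{M}(E_{extrap}) \ge \epsilon_{max}(1-\eta)^{-n/2} = \epsilon_{max}/\sqrt{(1-\eta)^n}$, matching the postselection error. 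The main obstacle is exactly this uniform lower bound on $\sum_j|c_j|$: the Chebyshev envelope is what lets one avoid committing to a particular node grid while still beating the rate $(1-\eta)^{-1/2}$. The remaining work (handling $d \neq n+1$, where for $d<n+1$ a nonzero bias only enlarges $\mathsf{M}(E_{extrap})$, and tracking the Hoeffding confidence prefactors) is routine.
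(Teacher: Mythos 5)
Your proof is correct, but it takes a genuinely different route from the paper's. The paper writes the extrapolation matrix as $L=DW$ with $D$ diagonal and $W$ Vandermonde, invokes Gautschi's explicit upper bound on $\|W^{-1}\|_{\infty}$, defines $\mathsf{M}(E_{extrap})$ as $\epsilon_{max}$ times that bound, and then proves by elementary product inequalities (Thms.~\ref{th1} and \ref{th2}) that this quantity exceeds $(1-\eta_0)^{-n}\geq (1-\eta_0)^{-n/2}$; it separately verifies that the Richardson weights $\gamma_i$ coincide with $L^{-1}_{1i}$. You instead go straight to the Lagrange-interpolation representation, observe that the tightest worst-case bound is exactly $\epsilon_{max}\sum_j|c_j|$ (attained by adversarial sign-aligned errors), and lower-bound $\sum_j|c_j|$ by $|T_n|$ evaluated outside $[-1,1]$ using exactness of degree-$n$ interpolation, giving $\sum_j|c_j|\geq\tfrac12\big(\tfrac{1+\sqrt{\eta}}{1-\sqrt{\eta}}\big)^n$ uniformly over node placements in $[\eta,1]$. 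This buys something real: the paper lower-bounds a \emph{particular upper bound} on $E_{extrap}$ (which is why it only claims ``strong evidence'' and supplements with numerics), whereas your Chebyshev argument lower-bounds the exact worst-case error itself, and hence a fortiori any valid $\mathsf{M}(E_{extrap})$, including the paper's; your base $\tfrac{1+\sqrt{\eta}}{1-\sqrt{\eta}}=(1+\sqrt{\eta})^2(1-\eta)^{-1}$ is also quantitatively sharper than the paper's $(1-\eta_0)^{-1}$. The identifications check out ($c_j=\prod_{k\neq j}\tfrac{\eta_k}{\eta_k-\eta_j}$ matches the paper's $\gamma_{j}$, and $|\xi|+\sqrt{\xi^2-1}=\tfrac{\sqrt{b}+\sqrt{\eta}}{\sqrt{b}-\sqrt{\eta}}$ is correct), and for the Richardson/ZNE scheme that Thm.~\ref{thnogozne} actually addresses you have $d=n+1$, so your deferred ``$d\neq n+1$'' case is not needed; note only that $n_0=\lceil\log 2/\log\rho\rceil$ diverges as $\eta\to 0$, which is consistent with the theorem's $n\geq n_0$ phrasing but worth stating explicitly.
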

Thm. \ref{thnogozne}, proven in Appendix \ref{appnogo},  is strong evidence that techniques based on Richardson extrapolation offer no advantage over postselection. Our main technical contribution in proving Thm. \ref{thnogozne} is to link determining the error of ZNE methods to computing the norm of the inverse of a Vandermonde matrix whose entries are determined by $\eta$, we then use existing results to upper bound this norm \cite{gautschi_inverses_1978}.  

To strengthen our theoretical  analysis, we also numerically compute the  number of violations of the inequality $E_{extrap} \geq \frac{\epsilon_{max}}{\sqrt{(1-\eta_0)^n}}$ and plot these in Figure \ref{figuniferr}. 
We took $\epsilon_{max}=0.01$, $\eta_0=0.01$, $\eta_{n-c}=0.95$, and $\eta_i$ for $1<i<n-c$ equally spaced. We varied the value of $n-c$
between 3 and 14, with $c=\mathsf{ceil}(n/3)$
and $\mathsf{ceil}(.)$ is the ceiling function. For each value of $n-c$ we performed 3000 runs, where at each run we took  $n-c+1$ values of $\{\epsilon_i\}$
chosen uniformly randomly from $[-\epsilon_{max}, \epsilon_{max}]$. As can be observed in Figure \ref{figuniferr}, the number of violations approaches zero with increasing $n$, confirming that extrapolation performs worse than post-selection after some value of $n$. A similar behaviour is observed for different values of $\epsilon_{max}, \eta_0$ and $\eta_{n-c}$.

\begin{figure}[]
\includegraphics[scale=0.37]{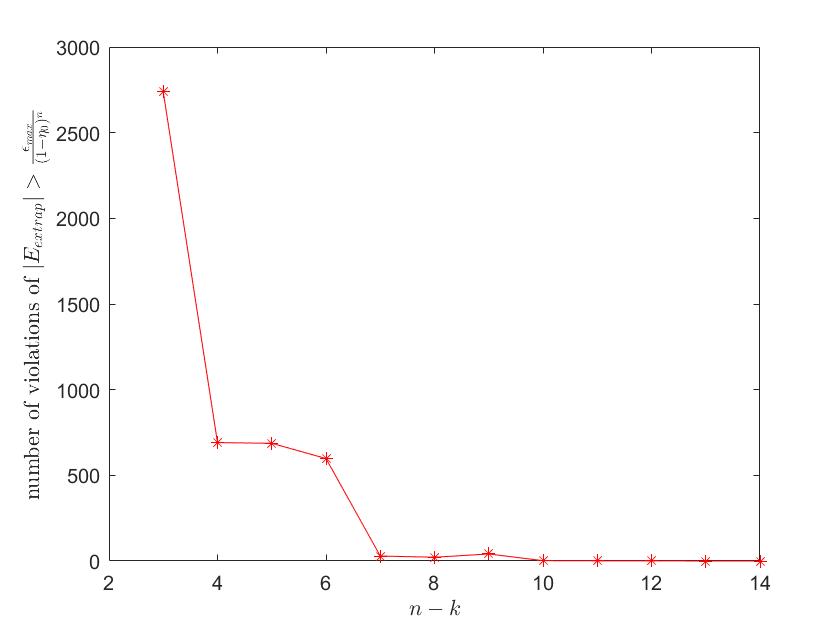}
\centering
\caption{Number of violations of Thm. \ref{thnogozne} inequality plotted versus $n-c$ .}
\label{figuniferr}
\end{figure}
\bigskip

\section{Discussion}
\label{sec:discussion}

In summary, we have presented a family of techniques, collectively referred to as \textit{recycling mitigation}, for mitigating the effects of photon loss on the outputs of linear optical quantum circuits in the discrete variable setting. 
We provided analytical and numerical evidence that these techniques outperform postselection - currently the standard method for mitigating loss in linear optical circuits. 

\newpage

There are many possible directions of future work. 
For instance, carrying out of a rigorous analysis of the set-up where recycling mitigation is applied to linear optical computation with adaptive measurements and feedforward, such as in \cite{chabaud2021quantum}. 
This would be interesting for a number of reasons, not least of which is that in this computational set-up, with a sufficient number of adaptive measurement steps, a quantum advantage for probability estimation is feasible \cite{chabaud2021quantum}.  As an estimate, if $n_{ad}$ is the number of photons consumed in the $r$ feed-forward steps in the absence of loss, and $\eta$ is the loss per mode, the overall probability that the $r$ feed-forward steps are executed correctly is $(1-\eta)^{n_{ad}}$. 
This introduces an unavoidable additional exponential sampling overhead to recycling mitigation. 
Nevertheless, it would be interesting benchmark the performance of recycling mitigation against only postselection using a set-up of this kind, especially in the possible regime where there is not sufficient feed-forward for universality, but where classical simulation methods are also inefficient \cite{chabaud2021quantum}.

\bigskip 
It would also be interesting to investigate how the method might be combined with early fault-tolerance schemes \cite{katabarwa2024early} for optical quantum computing.
Specifically, in investigating the resource trade-off between error mitigation and correction needed to achieve a certain quality of computational output.

\bigskip 
Another interesting question, motivated by finding ways to eliminate the bias error in our introduced mitigation techniques is whether one can \emph{exactly} represent an $n$--photon probability as a sum of $n-k$--photon probabilities. This can be done for the case of minors of unitary matrices with real and positive entries, such as those used to solve graph problems in DVLOQC \cite{mezher_solving_2023}. As an example, consider a Laplace expansion of the permanent of an $n \times n$ minor $U_{\mathbf{s}, \mathbf{t}}:=(u_{ij})_{i,j \in \{1, \dots, n\}}$ of a linear optical unitary $U$, this reads 
\begin{equation*}
    \mathsf{Per}(U_{\mathsf{s}, \mathsf{t}})=u_{11}\mathsf{Per}\begin{pmatrix}
        u_{22}& \dots & u_{2n} \\
        .&.&. \\
        .&.&.\\
        .&.&.\\
        u_{n2}& \dots & u_{nn}
    \end{pmatrix} + \dots +u_{1n}\mathsf{Per}\begin{pmatrix}
        u_{21}& \dots & u_{2,n-1} \\
        .&.&. \\
        .&.&.\\
        .&.&.\\
        u_{n1}& \dots & u_{n,n-1}
    \end{pmatrix}.
\end{equation*}
If $U_{\mathsf{s}, \mathsf{t}}$ is composed of positive real entries, 
then $\mathsf{Per}(U_{\mathsf{s}, \mathsf{t}})$ is proportional to the square root of a probability $\sqrt{p(s|t)}$ of a certain output of linear optical circuit $U$ with $n$ input photons \cite{aaronson_computational_2011}. In turn, each of the permanents of $n-1 \times n-1$ matrices on the right hand side of the above equation is proportional to the square root of a probability of a certain output of a linear optical circuit with $n-1$ input photons. 
Thus, in the case of minors with positive real entries, there is an exact way to represent an $n$--photon probability 
as a sum of $n-1$ photon  probabilities. In this case as well, the process can be generalised, by successive Laplace expansions, to expressing $n$--photon probabilities exactly as a sum of $n-k$--photon probabilities.

This idea of exactly representing $n$--photon probabilities as sums of $n-k$--photon probabilities is interesting since, in the presence of photon loss, $n-k$--photon experiments in general produce more converged statistics than $n$--photon experiments, for a comparable number of runs of the experiments in both cases. Furthermore, since the $n$--photon probabilities are exactly expressible as a sum of $n-k$--photon probabilities, one can use the $n-k$--photon experiments to compute the $n$--photon probabilities and \emph{indefinitely} (because of lower statistical error) outperform postselection. It is an interesting question to determine whether similar results hold for broader classes of matrices.

\bigskip 
Finally, it could be interesting to extend our techniques to the case where we allow collisions in the output. The construction of the recycled probabilities presented generalises straightforwardly to the collision regime, the main technical challenge is therefore to adapt our proofs to this regime.

\subsection*{Code availability}

Open-source code to perform recycling mitigation may be found at the following Github location:

\url{https://github.com/Quandela/Perceval/tree/main/perceval/error_mitigation}

\vspace{1em}

\subsection*{Acknowledgments}

The authors would like to thank Hugo Thomas for suggesting the approach taken to proving Lemma \ref{normalise_miti}. Furthermore, the authors would like to thank Enguerrand Monard, Pierre-Emmanuel Emeriau, Shane Mansfield, Stephen Wein, Alexia Salavrakos, Emilio Annoni, Robert Booth, and Elham Kashefi for helpful discussions. The authors would also like to thank Raúl García-Patrón for providing helpful feedback on an earlier version of this manuscript. This work has been co-funded by the European Commission as part of the EIC accelerator program under the grant agreement 190188855 for SEPOQC project, by the Horizon-CL4 program under the grant agreement 101135288 for EPIQUE project, and by the OECQ Project financed by the French State as part of France 2030.

\vspace{1em}

\bibliography{main.bib}

\begin{thebibliography}{10}

\bibitem{knill_scheme_2001}
E.~Knill, R.~Laflamme, and G.~J. Milburn.
\newblock ``A scheme for efficient quantum computation with linear optics''.
\newblock \href{https://dx.doi.org/10.1038/35051009}{Nature {\bf 409}, 46--52}~(2001).

\bibitem{raussendorf_topological_2007}
R.~Raussendorf, J.~Harrington, and K.~Goyal.
\newblock ``Topological fault-tolerance in cluster state quantum computation''.
\newblock \href{https://dx.doi.org/10.1088/1367-2630/9/6/199}{New Journal of Physics {\bf 9}, 199}~(2007).

\bibitem{de_gliniasty_spin-optical_2023}
Gr{\'{e}}goire de~Gliniasty, Paul Hilaire, Pierre-Emmanuel Emeriau, Stephen~C. Wein, Alexia Salavrakos, and Shane Mansfield.
\newblock ``A {S}pin-{O}ptical {Q}uantum {C}omputing {A}rchitecture''.
\newblock \href{https://dx.doi.org/10.22331/q-2024-07-24-1423}{{Quantum} {\bf 8}, 1423}~(2024).

\bibitem{bartolucci_fusion-based_2023}
Sara Bartolucci, Patrick Birchall, Hector Bombín, Hugo Cable, Chris Dawson, Mercedes Gimeno-Segovia, Eric Johnston, Konrad Kieling, Naomi Nickerson, Mihir Pant, Fernando Pastawski, Terry Rudolph, and Chris Sparrow.
\newblock ``Fusion-based quantum computation''.
\newblock \href{https://dx.doi.org/10.1038/s41467-023-36493-1}{Nature Communications {\bf 14}, 912}~(2023).

\bibitem{raussendorf_one-way_2001}
Robert Raussendorf and Hans~J. Briegel.
\newblock ``A {One}-{Way} {Quantum} {Computer}''.
\newblock \href{https://dx.doi.org/10.1103/PhysRevLett.86.5188}{Physical Review Letters {\bf 86}, 5188--5191}~(2001).

\bibitem{aaronson_computational_2011}
Scott Aaronson and Alex Arkhipov.
\newblock ``The computational complexity of linear optics''.
\newblock In Proceedings of the forty-third annual {ACM} symposium on {Theory} of computing.
\newblock \href{https://dx.doi.org/10.1145/1993636.1993682}{Pages 333--342}.
\newblock {STOC} '11New York, NY, USA~(2011). Association for Computing Machinery.

\bibitem{senellart_high-performance_2017}
Pascale Senellart, Glenn Solomon, and Andrew White.
\newblock ``High-performance semiconductor quantum-dot single-photon sources''.
\newblock \href{https://dx.doi.org/10.1038/nnano.2017.218}{Nature Nanotechnology {\bf 12}, 1026--1039}~(2017).

\bibitem{reck_experimental_1994}
Michael Reck, Anton Zeilinger, Herbert~J. Bernstein, and Philip Bertani.
\newblock ``Experimental realization of any discrete unitary operator''.
\newblock \href{https://dx.doi.org/10.1103/PhysRevLett.73.58}{Physical Review Letters {\bf 73}, 58--61}~(1994).

\bibitem{hadfield_single-photon_2009}
Robert~H. Hadfield.
\newblock ``Single-photon detectors for optical quantum information applications''.
\newblock \href{https://dx.doi.org/10.1038/nphoton.2009.230}{Nature Photonics {\bf 3}, 696--705}~(2009).

\bibitem{wang_toward_2018}
Hui Wang, Wei Li, Xiao Jiang, Y.-M. He, Y.-H. Li, X.~Ding, M.-C. Chen, J.~Qin, C.-Z. Peng, C.~Schneider, M.~Kamp, W.-J. Zhang, H.~Li, L.-X. You, Z.~Wang, J.P. Dowling, S.~Höfling, Chao-Yang Lu, and Jian-Wei Pan.
\newblock ``Toward {Scalable} {Boson} {Sampling} with {Photon} {Loss}''.
\newblock \href{https://dx.doi.org/10.1103/PhysRevLett.120.230502}{Physical Review Letters {\bf 120}, 230502}~(2018).

\bibitem{brod_photonic_2019}
Daniel~J. Brod, Ernesto~F. Galvão, Andrea Crespi, Roberto Osellame, Nicolò Spagnolo, and Fabio Sciarrino.
\newblock ``Photonic implementation of boson sampling: a review''.
\newblock \href{https://dx.doi.org/10.1117/1.AP.1.3.034001}{Advanced Photonics {\bf 1}, 034001}~(2019).

\bibitem{renema_classical_2019}
Jelmer Renema, Valery Shchesnovich, and Raul Garcia-Patron.
\newblock ``Classical simulability of noisy boson sampling''~(2019).
\newblock  \href{http://arxiv.org/abs/1809.01953}{arXiv:1809.01953}.

\bibitem{temme_error_2017}
Kristan Temme, Sergey Bravyi, and Jay~M. Gambetta.
\newblock ``Error {Mitigation} for {Short}-{Depth} {Quantum} {Circuits}''.
\newblock \href{https://dx.doi.org/10.1103/PhysRevLett.119.180509}{Physical Review Letters {\bf 119}, 180509}~(2017).

\bibitem{endo_practical_2018}
Suguru Endo, Simon~C. Benjamin, and Ying Li.
\newblock ``Practical {Quantum} {Error} {Mitigation} for {Near}-{Future} {Applications}''.
\newblock \href{https://dx.doi.org/10.1103/PhysRevX.8.031027}{Physical Review X {\bf 8}, 031027}~(2018).

\bibitem{giurgica-tiron_digital_2020}
Tudor Giurgica-Tiron, Yousef Hindy, Ryan LaRose, Andrea Mari, and William~J. Zeng.
\newblock ``Digital zero noise extrapolation for quantum error mitigation''.
\newblock In 2020 {IEEE} {International} {Conference} on {Quantum} {Computing} and {Engineering} ({QCE}).
\newblock \href{https://dx.doi.org/10.1109/QCE49297.2020.00045}{Pages 306--316}.
\newblock ~(2020).

\bibitem{he_zero-noise_2020}
Andre He, Benjamin Nachman, Wibe~A. de~Jong, and Christian~W. Bauer.
\newblock ``Zero-noise extrapolation for quantum-gate error mitigation with identity insertions''.
\newblock \href{https://dx.doi.org/10.1103/PhysRevA.102.012426}{Physical Review A {\bf 102}, 012426}~(2020).

\bibitem{strikis_learning-based_2021}
Armands Strikis, Dayue Qin, Yanzhu Chen, Simon~C. Benjamin, and Ying Li.
\newblock ``Learning-{Based} {Quantum} {Error} {Mitigation}''.
\newblock \href{https://dx.doi.org/10.1103/PRXQuantum.2.040330}{PRX Quantum {\bf 2}, 040330}~(2021).

\bibitem{mari_extending_2021}
Andrea Mari, Nathan Shammah, and William~J. Zeng.
\newblock ``Extending quantum probabilistic error cancellation by noise scaling''.
\newblock \href{https://dx.doi.org/10.1103/PhysRevA.104.052607}{Physical Review A {\bf 104}, 052607}~(2021).

\bibitem{van_den_berg_probabilistic_2023}
Ewout van~den Berg, Zlatko~K. Minev, Abhinav Kandala, and Kristan Temme.
\newblock ``Probabilistic error cancellation with sparse {Pauli}–{Lindblad} models on noisy quantum processors''.
\newblock \href{https://dx.doi.org/10.1038/s41567-023-02042-2}{Nature Physics {\bf 19}, 1116--1121}~(2023).

\bibitem{bonet-monroig_low-cost_2018}
X.~Bonet-Monroig, R.~Sagastizabal, M.~Singh, and T.~E. O'Brien.
\newblock ``Low-cost error mitigation by symmetry verification''.
\newblock \href{https://dx.doi.org/10.1103/PhysRevA.98.062339}{Physical Review A {\bf 98}, 062339}~(2018).

\bibitem{sagastizabal_experimental_2019-1}
R.~Sagastizabal, X.~Bonet-Monroig, M.~Singh, M.~A. Rol, C.~C. Bultink, X.~Fu, C.~H. Price, V.~P. Ostroukh, N.~Muthusubramanian, A.~Bruno, M.~Beekman, N.~Haider, T.~E. O'Brien, and L.~DiCarlo.
\newblock ``Experimental error mitigation via symmetry verification in a variational quantum eigensolver''.
\newblock \href{https://dx.doi.org/10.1103/PhysRevA.100.010302}{Physical Review A {\bf 100}, 010302}~(2019).

\bibitem{obrien_error_2021}
Thomas~E. O’Brien, Stefano Polla, Nicholas~C. Rubin, William~J. Huggins, Sam McArdle, Sergio Boixo, Jarrod~R. McClean, and Ryan Babbush.
\newblock ``Error {Mitigation} via {Verified} {Phase} {Estimation}''.
\newblock \href{https://dx.doi.org/10.1103/PRXQuantum.2.020317}{PRX Quantum {\bf 2}, 020317}~(2021).

\bibitem{mezher_mitigating_2022}
Rawad Mezher, James Mills, and Elham Kashefi.
\newblock ``Mitigating errors by quantum verification and postselection''.
\newblock \href{https://dx.doi.org/10.1103/PhysRevA.105.052608}{Physical Review A {\bf 105}, 052608}~(2022).

\bibitem{koczor_exponential_2021}
Bálint Koczor.
\newblock ``Exponential {Error} {Suppression} for {Near}-{Term} {Quantum} {Devices}''.
\newblock \href{https://dx.doi.org/10.1103/PhysRevX.11.031057}{Physical Review X {\bf 11}, 031057}~(2021).

\bibitem{koczor_dominant_2021}
Bálint Koczor.
\newblock ``The dominant eigenvector of a noisy quantum state''.
\newblock \href{https://dx.doi.org/10.1088/1367-2630/ac37ae}{New Journal of Physics {\bf 23}, 123047}~(2021).

\bibitem{huggins_virtual_2021}
William~J. Huggins, Sam McArdle, Thomas~E. O’Brien, Joonho Lee, Nicholas~C. Rubin, Sergio Boixo, K.~Birgitta Whaley, Ryan Babbush, and Jarrod~R. McClean.
\newblock ``Virtual {Distillation} for {Quantum} {Error} {Mitigation}''.
\newblock \href{https://dx.doi.org/10.1103/PhysRevX.11.041036}{Physical Review X {\bf 11}, 041036}~(2021).

\bibitem{yamamoto_error-mitigated_2022}
Kaoru Yamamoto, Suguru Endo, Hideaki Hakoshima, Yuichiro Matsuzaki, and Yuuki Tokunaga.
\newblock ``Error-{Mitigated} {Quantum} {Metrology} via {Virtual} {Purification}''.
\newblock \href{https://dx.doi.org/10.1103/PhysRevLett.129.250503}{Physical Review Letters {\bf 129}, 250503}~(2022).

\bibitem{mcclean_decoding_2020}
Jarrod~R. McClean, Zhang Jiang, Nicholas~C. Rubin, Ryan Babbush, and Hartmut Neven.
\newblock ``Decoding quantum errors with subspace expansions''.
\newblock \href{https://dx.doi.org/10.1038/s41467-020-14341-w}{Nature Communications {\bf 11}, 636}~(2020).

\bibitem{yoshioka_generalized_2022}
Nobuyuki Yoshioka, Hideaki Hakoshima, Yuichiro Matsuzaki, Yuuki Tokunaga, Yasunari Suzuki, and Suguru Endo.
\newblock ``Generalized {Quantum} {Subspace} {Expansion}''.
\newblock \href{https://dx.doi.org/10.1103/PhysRevLett.129.020502}{Physical Review Letters {\bf 129}, 020502}~(2022).

\bibitem{yang_dual-gse_2023}
Bo~Yang, Nobuyuki Yoshioka, Hiroyuki Harada, Shigeo Hakkaku, Yuuki Tokunaga, Hideaki Hakoshima, Kaoru Yamamoto, and Suguru Endo.
\newblock ``Resource-efficient generalized quantum subspace expansion''.
\newblock \href{https://dx.doi.org/10.1103/PhysRevApplied.23.054021}{Phys. Rev. Appl. {\bf 23}, 054021}~(2025).

\bibitem{ohkura_leveraging_2023}
Yasuhiro Ohkura, Suguru Endo, Takahiko Satoh, Rodney Van~Meter, and Nobuyuki Yoshioka.
\newblock ``Leveraging hardware-control imperfections for error mitigation via generalized quantum subspace''~(2023).
\newblock  \href{http://arxiv.org/abs/2303.07660}{arXiv:2303.07660}.

\bibitem{maciejewski_mitigation_2020}
Filip~B. Maciejewski, Zoltán Zimborás, and Michał Oszmaniec.
\newblock ``Mitigation of readout noise in near-term quantum devices by classical post-processing based on detector tomography''.
\newblock \href{https://dx.doi.org/10.22331/q-2020-04-24-257}{Quantum {\bf 4}, 257}~(2020).

\bibitem{arrasmith_development_2023}
Andrew Arrasmith, Andrew Patterson, Alice Boughton, and Marco Paini.
\newblock ``Development and demonstration of an efficient readout error mitigation technique for use in nisq algorithms''~(2023).
\newblock  \href{http://arxiv.org/abs/2303.17741}{arXiv:2303.17741}.

\bibitem{mills_simplifying_2023}
James Mills, Debasis Sadhukhan, and Elham Kashefi.
\newblock ``Simplifying errors by symmetry and randomisation''~(2023).
\newblock  \href{http://arxiv.org/abs/2303.02712}{arXiv:2303.02712}.

\bibitem{bravyi_mitigating_2021}
Sergey Bravyi, Sarah Sheldon, Abhinav Kandala, David~C. Mckay, and Jay~M. Gambetta.
\newblock ``Mitigating measurement errors in multiqubit experiments''.
\newblock \href{https://dx.doi.org/10.1103/PhysRevA.103.042605}{Physical Review A {\bf 103}, 042605}~(2021).

\bibitem{endo_hybrid_2021}
Suguru Endo, Zhenyu Cai, Simon~C. Benjamin, and Xiao Yuan.
\newblock ``Hybrid {Quantum}-{Classical} {Algorithms} and {Quantum} {Error} {Mitigation}''.
\newblock \href{https://dx.doi.org/10.7566/JPSJ.90.032001}{Journal of the Physical Society of Japan {\bf 90}, 032001}~(2021).

\bibitem{li_efficient_2017}
Ying Li and Simon~C. Benjamin.
\newblock ``Efficient {Variational} {Quantum} {Simulator} {Incorporating} {Active} {Error} {Minimization}''.
\newblock \href{https://dx.doi.org/10.1103/PhysRevX.7.021050}{Physical Review X {\bf 7}, 021050}~(2017).

\bibitem{su_error_2021}
Daiqin Su, Robert Israel, Kunal Sharma, Haoyu Qi, Ish Dhand, and Kamil Brádler.
\newblock ``Error mitigation on a near-term quantum photonic device''.
\newblock \href{https://dx.doi.org/10.22331/q-2021-05-04-452}{Quantum {\bf 5}, 452}~(2021).

\bibitem{gautschi_inverses_1978}
Walter Gautschi.
\newblock ``On inverses of {Vandermonde} and confluent {Vandermonde} matrices {III}''.
\newblock \href{https://dx.doi.org/10.1007/BF01432880}{Numerische Mathematik {\bf 29}, 445--450}~(1978).

\bibitem{de_palma_limitations_2023}
Giacomo De~Palma, Milad Marvian, Cambyse Rouzé, and Daniel~Stilck França.
\newblock ``Limitations of {Variational} {Quantum} {Algorithms}: {A} {Quantum} {Optimal} {Transport} {Approach}''.
\newblock \href{https://dx.doi.org/10.1103/PRXQuantum.4.010309}{PRX Quantum {\bf 4}, 010309}~(2023).

\bibitem{takagi_universal_2023}
Ryuji Takagi, Hiroyasu Tajima, and Mile Gu.
\newblock ``Universal {Sampling} {Lower} {Bounds} for {Quantum} {Error} {Mitigation}''.
\newblock \href{https://dx.doi.org/10.1103/PhysRevLett.131.210602}{Physical Review Letters {\bf 131}, 210602}~(2023).

\bibitem{quek_exponentially_2023}
Yihui Quek, Daniel Stilck~Fran{\c{c}}a, Sumeet Khatri, Johannes~Jakob Meyer, and Jens Eisert.
\newblock ``Exponentially tighter bounds on limitations of quantum error mitigation''.
\newblock \href{https://dx.doi.org/https://doi.org/10.1038/s41567-024-02536-7}{Nature Physics {\bf 20}, 1648--1658}~(2024).

\bibitem{tsubouchi_universal_2023}
Kento Tsubouchi, Takahiro Sagawa, and Nobuyuki Yoshioka.
\newblock ``Universal {Cost} {Bound} of {Quantum} {Error} {Mitigation} {Based} on {Quantum} {Estimation} {Theory}''.
\newblock \href{https://dx.doi.org/10.1103/PhysRevLett.131.210601}{Physical Review Letters {\bf 131}, 210601}~(2023).

\bibitem{zimboras2025myths}
Zolt{\'a}n Zimbor{\'a}s, B{\'a}lint Koczor, Zo{\"e} Holmes, Elsi-Mari Borrelli, Andr{\'a}s Gily{\'e}n, Hsin-Yuan Huang, Zhenyu Cai, Antonio Ac{\'\i}n, Leandro Aolita, Leonardo Banchi, et~al.
\newblock ``Myths around quantum computation before full fault tolerance: What no-go theorems rule out and what they don't''~(2025).
\newblock  \href{http://arxiv.org/abs/2501.05694}{arXiv:2501.05694}.

\bibitem{kim2023evidence}
Youngseok Kim, Andrew Eddins, Sajant Anand, Ken~Xuan Wei, Ewout Van Den~Berg, Sami Rosenblatt, Hasan Nayfeh, Yantao Wu, Michael Zaletel, Kristan Temme, et~al.
\newblock ``Evidence for the utility of quantum computing before fault tolerance''.
\newblock \href{https://dx.doi.org/10.1038/s41586-023-06096-3}{Nature {\bf 618}, 500--505}~(2023).

\bibitem{tindall2024efficient}
Joseph Tindall, Matthew Fishman, E~Miles Stoudenmire, and Dries Sels.
\newblock ``Efficient tensor network simulation of ibm’s eagle kicked ising experiment''.
\newblock \href{https://dx.doi.org/10.1103/PRXQuantum.5.010308}{PRX Quantum {\bf 5}, 010308}~(2024).

\bibitem{salavrakos_error-mitigated_2024}
Alexia Salavrakos, Tigran Sedrakyan, James Mills, Shane Mansfield, and Rawad Mezher.
\newblock ``Error-mitigated photonic quantum circuit born machine''.
\newblock \href{https://dx.doi.org/https://doi.org/10.1103/PhysRevA.111.L030401}{Physical Review A {\bf 111}, L030401}~(2025).

\bibitem{benedetti_parameterized_2019}
Marcello Benedetti, Erika Lloyd, Stefan Sack, and Mattia Fiorentini.
\newblock ``Parameterized quantum circuits as machine learning models''.
\newblock \href{https://dx.doi.org/10.1088/2058-9565/ab4eb5}{Quantum Science and Technology {\bf 4}, 043001}~(2019).

\bibitem{maring_general-purpose_2023}
Nicolas Maring, Andreas Fyrillas, Mathias Pont, Edouard Ivanov, Petr Stepanov, Nico Margaria, William Hease, Anton Pishchagin, Aristide Lema{\^\i}tre, Isabelle Sagnes, et~al.
\newblock ``A versatile single-photon-based quantum computing platform''.
\newblock \href{https://dx.doi.org/https://doi.org/10.1038/s41566-024-01403-4}{Nature Photonics {\bf 18}, 603--609}~(2024).

\bibitem{lee_error-mitigated_2022}
Donghwa Lee, Jinil Lee, Seongjin Hong, Hyang-Tag Lim, Young-Wook Cho, Sang-Wook Han, Hyundong Shin, Junaid~ur Rehman, and Yong-Su Kim.
\newblock ``Error-mitigated photonic variational quantum eigensolver using a single-photon ququart''.
\newblock \href{https://dx.doi.org/10.1364/OPTICA.441163}{Optica {\bf 9}, 88--95}~(2022).

\bibitem{heurtel_perceval_2023}
Nicolas Heurtel, Andreas Fyrillas, Grégoire~de Gliniasty, Raphaël~Le Bihan, Sébastien Malherbe, Marceau Pailhas, Eric Bertasi, Boris Bourdoncle, Pierre-Emmanuel Emeriau, Rawad Mezher, Luka Music, Nadia Belabas, Benoît Valiron, Pascale Senellart, Shane Mansfield, and Jean Senellart.
\newblock ``Perceval: {A} {Software} {Platform} for {Discrete} {Variable} {Photonic} {Quantum} {Computing}''.
\newblock \href{https://dx.doi.org/10.22331/q-2023-02-21-931}{Quantum {\bf 7}, 931}~(2023).

\bibitem{gan_fock_2022}
Beng~Yee Gan, Daniel Leykam, and Dimitris~G. Angelakis.
\newblock ``Fock {State}-enhanced {Expressivity} of {Quantum} {Machine} {Learning} {Models}''.
\newblock \href{https://dx.doi.org/10.1140/epjqt/s40507-022-00135-0}{EPJ Quantum Technology {\bf 9}, 16}~(2022).

\bibitem{mezher_solving_2023}
Rawad Mezher, Ana~Filipa Carvalho, and Shane Mansfield.
\newblock ``Solving graph problems with single photons and linear optics''.
\newblock \href{https://dx.doi.org/10.1103/PhysRevA.108.032405}{Physical Review A {\bf 108}, 032405}~(2023).

\bibitem{taylor_quantum_2024}
Adam Taylor, Gabriele Bressanini, Hyukjoon Kwon, and MS~Kim.
\newblock ``Quantum error cancellation in photonic systems: Undoing photon losses''.
\newblock \href{https://dx.doi.org/https://doi.org/10.1103/PhysRevA.110.022622}{Physical Review A {\bf 110}, 022622}~(2024).

\bibitem{nezami_permanent_2021}
Sepehr Nezami.
\newblock ``Permanent of random matrices from representation theory: moments, numerics, concentration, and comments on hardness of boson-sampling''~(2021).
\newblock  \href{http://arxiv.org/abs/2104.06423}{arXiv:2104.06423}.

\bibitem{hoeffding_collected_1994}
Wassily Hoeffding.
\newblock ``The collected works of wassily hoeffding''.
\newblock \href{https://dx.doi.org/10.1007/978-1-4612-0865-5}{Springer}. ~(1994).

\bibitem{singh_proof--work_2023}
Deepesh Singh, Gopikrishnan Muraleedharan, Boxiang Fu, Chen-Mou Cheng, Nicolas~Roussy Newton, Peter Rohde, and Gavin~Keith Brennen.
\newblock ``Proof-of-work consensus by quantum sampling''.
\newblock \href{https://dx.doi.org/10.1088/2058-9565/adae2b}{Quantum Science and Technology}~(2023).

\bibitem{oszmaniec_classical_2018}
Michał Oszmaniec and Daniel~J. Brod.
\newblock ``Classical simulation of photonic linear optics with lost particles''.
\newblock \href{https://dx.doi.org/10.1088/1367-2630/aadfa8}{New Journal of Physics {\bf 20}, 092002}~(2018).

\bibitem{stace_error_2010}
Thomas~M. Stace and Sean~D. Barrett.
\newblock ``Error correction and degeneracy in surface codes suffering loss''.
\newblock \href{https://dx.doi.org/10.1103/PhysRevA.81.022317}{Physical Review A {\bf 81}, 022317}~(2010).

\bibitem{garcia-patron_simulating_2019}
Raúl García-Patrón, Jelmer~J. Renema, and Valery Shchesnovich.
\newblock ``Simulating boson sampling in lossy architectures''.
\newblock \href{https://dx.doi.org/10.22331/q-2019-08-05-169}{Quantum {\bf 3}, 169}~(2019).

\bibitem{aaronson_bosonsampling_2016}
Scott Aaronson and Daniel~J. Brod.
\newblock ``{BosonSampling} with lost photons''.
\newblock \href{https://dx.doi.org/10.1103/PhysRevA.93.012335}{Physical Review A {\bf 93}, 012335}~(2016).

\bibitem{berkowitz_stability_2018}
Ross Berkowitz and Pat Devlin.
\newblock ``A stability result using the matrix norm to bound the permanent''.
\newblock \href{https://dx.doi.org/10.1007/s11856-018-1655-7}{Israel Journal of Mathematics {\bf 224}, 437--454}~(2018).

\bibitem{lin_probability_2011}
Zhengyan Lin and Zhidong Bai.
\newblock ``Probability {Inequalities}''.
\newblock \href{https://dx.doi.org/10.1007/978-3-642-05261-3}{Springer}. Berlin, Heidelberg~(2011).

\bibitem{bhatia_better_2000}
Rajendra Bhatia and Chandler Davis.
\newblock ``A {Better} {Bound} on the {Variance}''.
\newblock \href{https://dx.doi.org/10.1080/00029890.2000.12005203}{The American Mathematical Monthly {\bf 107}, 353--357}~(2000).

\bibitem{levenberg_method_1944}
Kenneth Levenberg.
\newblock ``A method for the solution of certain non-linear problems in least squares''.
\newblock \href{https://dx.doi.org/10.1090/qam/10666}{Quarterly of Applied Mathematics {\bf 2}, 164--168}~(1944).

\bibitem{marquardt_algorithm_1963}
Donald~W. Marquardt.
\newblock ``An {Algorithm} for {Least}-{Squares} {Estimation} of {Nonlinear} {Parameters}''.
\newblock \href{https://dx.doi.org/10.1137/0111030}{Journal of the Society for Industrial and Applied Mathematics {\bf 11}, 431--441}~(1963).

\bibitem{aaronson_bosonsampling_2014}
Scott Aaronson and Alex Arkhipov.
\newblock ``Bosonsampling is far from uniform''.
\newblock \href{https://dx.doi.org/10.26421/QIC14.15-16-7}{Quantum Information \& Computation {\bf 14}, 1383--1423}~(2014).

\bibitem{chabaud2021quantum}
Ulysse Chabaud, Damian Markham, and Adel Sohbi.
\newblock ``Quantum machine learning with adaptive linear optics''.
\newblock \href{https://dx.doi.org/10.22331/q-2021-07-05-496}{Quantum {\bf 5}, 496}~(2021).

\bibitem{katabarwa2024early}
Amara Katabarwa, Katerina Gratsea, Athena Caesura, and Peter~D Johnson.
\newblock ``Early fault-tolerant quantum computing''.
\newblock \href{https://dx.doi.org/10.1103/PRXQuantum.5.020101}{PRX Quantum {\bf 5}, 020101}~(2024).

\bibitem{ryser_combinatorial_1963}
Herbert~John Ryser.
\newblock ``Combinatorial {Mathematics}''.
\newblock \href{https://dx.doi.org/10.5948/UPO9781614440147}{American Mathematical Soc.} ~(1963).

\bibitem{jensen_sur_1906}
J.~L. W.~V. Jensen.
\newblock ``Sur les fonctions convexes et les inégalités entre les valeurs moyennes''.
\newblock \href{https://dx.doi.org/10.1007/BF02418571}{Acta Mathematica {\bf 30}, 175--193}~(1906).

\bibitem{aaronson_generalizing_2012}
Scott Aaronson and Travis Hance.
\newblock ``Generalizing and derandomizing gurvits's approximation algorithm for the permanent''~(2012).
\newblock  \href{http://arxiv.org/abs/1212.0025}{arXiv:1212.0025}.

\bibitem{janson_central_2021}
Svante Janson.
\newblock ``A central limit theorem for m-dependent variables''~(2021).
\newblock  \href{http://arxiv.org/abs/2108.12263}{arXiv:2108.12263}.

\bibitem{villalonga_efficient_2022}
Benjamin Villalonga, Murphy~Yuezhen Niu, Li~Li, Hartmut Neven, John~C. Platt, Vadim~N. Smelyanskiy, and Sergio Boixo.
\newblock ``Efficient approximation of experimental gaussian boson sampling''~(2022).
\newblock  \href{http://arxiv.org/abs/2109.11525}{arXiv:2109.11525}.

\end{thebibliography}
\newpage
\onecolumn
\addcontentsline{toc}{section}{Appendix} 
\part{Appendix} 
\parttoc 

\newpage
\section{Results on useful regimes of photon loss}
\label{applemm1}
\subsection{Proof of Lemma \ref{ryser_lemma}}

We now prove Lemma \ref{ryser_lemma} from the main text:

\vspace{1em}

\textit{Let $k=n-r$, there is a classical algorithm running in time $O(2^{r-1}r({n \choose n-r})^2)$  which exactly computes $\sum_{s_i^{n-k} \in \mathcal{L}(s^n_l)}p(s^{n-k}_i)$.}

\begin{proof}
    Each permanent $\mathsf{Per}(X_i)$ in $p(s^{n-k}_i)$ can be computed exactly via Ryser's algorithm \cite {ryser_combinatorial_1963} in time $O(2^{r-1}r)$. Thus, each $p(s^{n-k}_i)$ can be computed in $O({n \choose n-r}2^{r-1}r)$ time.  Further, we need $O({n \choose n-r}^22^{r-1}r)$ time to compute all the  $p(s^{n-k}_i)$'s, and therefore to compute $\sum_{s_i^{n-k} \in \mathcal{L}(s^n_l)}p(s^{n-k}_i)$, by an iterative approach where we first set a variable $\mathsf{sum}=0$, then iteratively add each computed $p(s^{n-k}_i)$ to $\mathsf{sum}$ as soon as we compute it.
\end{proof}

\section{Statistical error bounds for probability estimators}

We now give upper bounds for the statistical error of the estimators of the ideal probabilities from postselection, and for the estimators of the recycled probabilities. These are used in later proofs.

\subsection{Proof of Lemma \ref{recycled_stat_error_bound}}

\begin{lemma} \label{recycled_stat_error_bound}
With probability at least $1-2e^{-\alpha^2}$ for $\alpha>0$, the statistical error of each recycled probability, $\epsilon_{\text{hoeff},\tilde{p}_R^k(s^{n}_l)}$, is upper bounded
\begin{equation*}
|\epsilon_{\text{hoeff},\tilde{p}_R^k(s^{n}_l)}| \leq \alpha \frac{1}{{m-n+k \choose k}}\sqrt{\frac{{m \choose n}}{{n \choose k}(1-\eta)^{n-k}\eta^kN_{tot}}}.
\end{equation*}
\end{lemma}

\begin{proof}

The probability of losing $k$ out of $n$ photons is $$\text{Pr}(k)={n \choose k} \eta^k (1-\eta)^{n-k},$$ and so the number of samples from $n-k$--photon outputs is
$N_{tot,k}\approx{n \choose k} N_{tot} \eta^k (1-\eta)^{n-k}$, for $k \in \{0, \dots, n\}.$ 
There are ${m \choose n}$ recycled probabilities, and to guarantee the independence required for Hoeffding's inequality we use $N_{est,k}\approx \frac{{n \choose k} \eta^k(1-\eta)^{n-k}N_{tot}}{{m \choose n}}$ samples to estimate each $p_R^k(s^{n}_l)$. 
For each sample $w$ ranging from 1 to $N_{est,k}$ assign the value 1 to a random variable $X_w \in \{0,1\}$ if sample $w \in \mathcal{L}(s^n_l)$, and assign the value 0 to $X_w$ otherwise. The estimator $\tilde{p}_R^k(s^{n}_i)$ of $p_R^k(s^{n}_i)$ is then 
$$\tilde{p}_R^k(s^{n}_l)=\frac{\sum_w X_w}{{m-n+k \choose k}N_{est,k}},$$ 
where the ${m-n+k \choose k}^{-1}$ term is the normalisation factor detailed in eqn. \ref{recycling_normalisation_factor}. Finally, Hoeffding's inequality \cite{hoeffding_collected_1994} gives with confidence at least $1-2e^{-\alpha^2}$ that
\begin{equation*}
|\epsilon_{\text{hoeff},\tilde{p}_R^k(s^{n}_l)}| \leq \alpha \frac{1}{{m-n+k \choose k}}\sqrt{\frac{{m \choose n}}{{n \choose k}(1-\eta)^{n-k}\eta^kN_{tot}}}.
\end{equation*}

\end{proof}

Note that, the above proof holds when using the collected samples to estimate \emph{all} the probabilities, which is why we divided the $k$ photon samples by ${m \choose n}$. If we wish to estimate \emph{only one} output probability, then we need not divide by ${m \choose n}$ and we can use all the $N_{tot,k}$ samples.   Hoeffding's inequality gives us 
\begin{equation}
\label{eqstaterrth1}
|\epsilon_{\text{hoeff},\tilde{p}_R^k(s^{n}_l)}| \leq \alpha \frac{1}{{m-n+k \choose k}}\sqrt{\frac{1}{{n \choose k}(1-\eta)^{n-k}\eta^kN_{tot}}},
\end{equation}
with confidence at least $1-2e^{-\alpha^2}$.

\section{Interference deviation bounds}
\label{appintbounds}

\subsection{Proof of Lemma \ref{expectation_haar}}

We now prove Lemma \ref{expectation_haar} from the main text:

\vspace{1em}

\begin{proof}
Let  
$N'_k:=\big({m-n+k \choose k}-1\big){n \choose k}$. For $k>0$,  $I_{s^n_l,k}$ is a sum of $N'_k:=N_k-{n \choose k}$ terms of the form $p(s^n_j)\frac{1}{N_k}$. We can thus rewrite
\begin{equation*}
  I_{s^n_l,k}=  \sum_{s_i^{n-k} \in \mathcal{L}(s^n_l)}\sum_{ s^n_j \in \mathcal{G}(s_i^{n-k}), j \neq l}p(s^n_j)\frac{1}{N'_k}.
\end{equation*}
From \cite{aaronson_bosonsampling_2014}, when $m\gg n^2$, and $U$ is Haar random, which corresponds to boson sampling unitaries, each $p(s^n_j) \approx \frac{|\mathsf{Per}(X_i)|^2}{m^n}$, where $X_i$ is an $n \times n$ matrix of i.i.d. Gaussian random variables whose entries chosen from the complex normal distribution $\mathcal{N}_{\mathbb{C}}(0,1)$ of mean 0 and variance 1. $\mathsf{Per}(.)$ denotes the matrix permanent. Furthermore, from \cite{aaronson_bosonsampling_2014, nezami_permanent_2021}, we know that
\begin{equation*}
\mathrm{E}_{X \in \mathcal{G}_{n \times n}}(|\mathsf{Per}(X)|^2)=n!,
\end{equation*}
where $\mathrm{E}_{X \in \mathcal{G}_{n \times n}}(.)$
is the expectation value over the set of $n \times n$
Gaussian matrices with entries from $\mathcal{N}_{\mathbb{C}}(0,1)$. We can reexpress 
$$I_{s^n_l,k}= \frac{1}{N_k'}\sum_i\frac{|\mathsf{Per}(X_i)|^2}{m^n} .$$ Now, from the linearity of the expected value 

$$\mathrm{E}_{X \in \mathcal{G}_{n \times n}}(I_{s^n_l,k})=\mathrm{E}_{X \in \mathcal{G}_{n \times n}}\bigg( \frac{1}{N_k'}\sum_i\frac{|\mathsf{Per}(X_i)|^2}{m^n}\bigg)=\frac{1}{m^n N'_k}\sum_{i}\mathrm{E}_{X_i \in \mathcal{G}_{n \times n}}(|\mathsf{Per}(X_i)|^2)=\frac{n!}{m^n},$$
where the rightmost part follows from the fact that $\mathrm{E}_{X_i \in \mathcal{G}_{n \times n}}(|\mathsf{Per}(X_i)|^2)=n!$,  for all $i$.

Now $p_{unif} = {m \choose n}^{-1}$, and
\begin{equation*}
\begin{split}
{m \choose n}^{-1} &= \frac{n!(m-n)!}{m!} \\
&= \frac{n!}{(m)(m-1)\ldots(m-n+1)} \\
&= \frac{n!}{(m^n +f(m,n))} \\
\end{split}
\end{equation*}
with $f(m,n) \in O(m^{n-1}).$
This leads to 
\begin{equation*}
\begin{split}
 \frac{n!}{m^n} = {m \choose n}^{-1}\frac{(m^n + f(m,n)))}{m^n},
\end{split}
\end{equation*}
which then becomes
\begin{equation*}
\begin{split}
\frac{n!}{m^n} = p_{unif}\big(1 + g(m) \big).
\end{split}
\end{equation*}
with $g(m) \in O(m^{-1})$,
which was the result to be proved.
\end{proof}

\subsection{Proof of Thm.  \ref{Haarbound}}

We now prove Lemma \ref{Haarbound} from the main text:

\vspace{1em}

\textit{The deviation of interference terms around $p_{unif}$ for Haar random matrices is bounded}
\begin{equation*}
Pr\Big(\big|I_{s^n_l,k}-p_{unif}\big| \geq \epsilon_{bias,s^n_l}\Big)\leq \frac{np^2_{unif}}{\epsilon^2_{bias,s^n_l}},
\end{equation*}
\textit{where $\epsilon_{bias,s^n_l}$ is a positive real number.}

\begin{proof}
Let $X:=\sum_{s^{n-k}_i \in \mathcal{L}(s^n_l)}\sum_{s^n_j \in \mathcal{G}(s_i^{n-k}),j \neq l} p(s^n_j)$, $M:={n \choose k}({m-n+k \choose k}-1)$. For simplicity, relabel $p(s^n_j):=p_i$, so that $X$ can be rewritten as $X=\sum_{i=1, \dots, M}p_i$ is a sum of, possibly dependent, random variables $p_i$ and $\mathsf{E}_U(p_i):=\mathsf{E}_U(p)=\frac{n!}{m^n} \approx p_{unif}$, where $\mathsf{E}_U(p_i)$ is the expectation value over the Haar measure of the $m$-mode unitary group $\mathsf{U}(m)$.  From \cite{aaronson_computational_2011}, we know that this expectation value is given by $\frac{n!}{m^n}$ in the no-collision regime.
Let $\sigma^2=\mathsf{Var}(X)=\mathsf{E}_U(X^2)-(\mathsf{E}_U(X))^2$. Expanding out, we get that $\sigma^2=M\mathsf{E}_U(p^2)-M(\mathsf{E}_U(p))^2$+$2\sum_i\sum_{j>i}\mathsf{Cov}(p_i,p_j)$, where $\mathsf{Cov}(p_i,p_j)=\mathsf{E}_U(p_ip_j)-\mathsf{E}_U(p_i)\mathsf{E}_U(p_j)$ , and $\mathsf{E}_U(p^2)=\mathsf{E}_U(p_i^2)=\frac{(n+1)!n!}{m^{2n}} \approx (n+1) p^2_{unif}$, for all $i$, and this is from \cite{aaronson_computational_2011}.
 Using the Cauchy-Schwarz inequality, 
 \begin{equation*}
     |\mathsf{Cov}(p_i,p_j)| \leq \sqrt{\mathsf{Var}(p_i)\mathsf{Var}(p_j)} \leq \sqrt{(\mathsf{Var}(p))^2} \leq \mathsf{Var}(p) \leq \mathsf{E}_U(p^2)-(\mathsf{E}_U(p)) ^2
 \end{equation*}
 Thus,
 \begin{multline*}
     \sigma^2 \leq M(\mathsf{E}_U(p^2)-(\mathsf{E}_U(p))^2+ 2\sum_i \sum_{j>i} |\mathsf{Cov}(p_i,p_j)| \leq M(\mathsf{E}_U(p^2)-(\mathsf{E}_U(p))^2) +M(M-1)(\mathsf{E}_U(p^2)-(\mathsf{E}_U(p))^2) \leq \\ M^2(\mathsf{E}_U(p^2)-(\mathsf{E}_U(p))^2) \leq M^2np^2_{unif}.
 \end{multline*}
 Finally, using Chebyshev's inequality we have 
 \begin{equation*}
 Pr\Big(\big|X-\mathsf{E}_U(X)\big| \geq \epsilon_{bias,s^n_l}\Big) \leq \frac{\sigma^2}{\epsilon^2_{bias,s^n_l}},
 \end{equation*}
 and noting that $\mathsf{E}_U(X)=Mp_{unif}=\sum_i \mathsf{E}_U(p)$ from linearity of expectation value, then dividing both sides of $|X-\mathsf{E}_U(X)| \geq \epsilon_{bias,s^n_l}$ by $M$ and redefining $\epsilon_{bias,s^n_l}:=\frac{\epsilon_{bias,s^n_l}}{M}$, and finally using $\sigma^2 \leq M^2np_{unif}^2$, we get
 \begin{equation*}
    Pr\Big(\big|I_{s^n_l,k}-p_{unif}\big| \geq \epsilon_{bias,s^n_l}\Big) \leq \frac{np^2_{unif}}{\epsilon^2_{bias,s^n_l}}.
 \end{equation*}
 This completes the proof.
\end{proof}

\subsection{Proof of Lemma \ref{expectation_arbitrary}}

We now prove Lemma \ref{expectation_arbitrary} from the main text:

\vspace{1em}

\begin{equation*}
\begin{split}
\mathbf{E}_{D^k_R}\big(I_{s^n_l, k}\big) & = p_{unif},\\
\end{split}
\end{equation*}
\textit{where $\mathbf{E}_{D^k_R}(.)$ denotes the expectation value over $D^k_R$.}

\begin{proof}
Recalling that
\begin{equation*}
    p_R(s^n_l)=\frac{p(s^n_l)}{{m-n+k \choose k}}+\frac{N'_k}{N_k}I_{s^n_l, k},
\end{equation*}
and 
\begin{equation*}
    I_{s^n_l, k}=\sum_{s_i^{n-k} \in \mathcal{L}(s^n_l)}\sum_{ s^n_j \in \mathcal{G}(s_i^{n-k}), j \neq l}p(s^n_j)\frac{1}{\big({m-n+k \choose k}-1\big){n \choose k}}.
\end{equation*}

\begin{equation*}
\begin{split}
  \mathbf{E}_{D^k_R}\big(I_{s^n_l, k}\big)   &= \frac{1}{{m \choose n}}\sum_{l=1}^{{m \choose n}}\sum_{s_i^{n-k} \in \mathcal{L}(s^n_l)}\sum_{ s^n_j \in \mathcal{G}(s_i^{n-k}), j \neq l}p(s^n_j)\frac{1}{\big({m-n+k \choose k}-1\big){n \choose k}}\\
  &= \frac{1}{{m \choose n}\big({m-n+k \choose k}-1\big){n \choose k}}\sum_{l=1}^{{m \choose n}}\sum_{s_i^{n-k} \in \mathcal{L}(s^n_l)}\Bigg(\sum_{ s^n_j \in \mathcal{G}(s_i^{n-k})}p(s^n_j) - p(s^n_l)\Bigg)\\
  &= \frac{1}{{m \choose n}\big({m-n+k \choose k}-1\big){n \choose k}}\Bigg(\sum_{l=1}^{{m \choose n}}\sum_{s_i^{n-k} \in \mathcal{L}(s^n_l)}\sum_{ s^n_j \in \mathcal{G}(s_i^{n-k})}p(s^n_j) - \sum_{l=1}^{{m \choose n}}{n \choose k}p(s^n_l)\bigg)\\
  &= \frac{1}{{m \choose n}\big({m-n+k \choose k}-1\big){n \choose k}}\Bigg(\sum_{l=1}^{{m \choose n}}{m-n+k \choose k}{n \choose k}p(s^n_l) - \sum_{l=1}^{{m \choose n}}{n \choose k}p(s^n_l)\bigg)\\
  &= \frac{1}{{m \choose n}\big({m-n+k \choose k}-1\big){n \choose k}}\Bigg({m-n+k \choose k}{n \choose k} - {n \choose k}\bigg)\\
  &= p_{unif}.\\
\end{split}
\end{equation*}
Where between the third and fourth lines we have used the cardinality of the sets $|\mathcal{L}(s^n_l)|={n \choose k}$ and $|\mathcal{G}(s_i^{n-k})|={{m-n+k} \choose k}$, and between the fourth and fifth lines that $\sum_{l=1}^{{m \choose n}}p(s^n_l)=1$.
\end{proof}

\subsection{Proof of Thm.  \ref{nonHaarbound}}

 \label{proof_nonHaarbound}
We now prove Thm.  \ref{nonHaarbound} from the main text:

\vspace{1em}
\textit{The deviation of interference terms around $p_{unif}$ for an arbitrary matrix is bounded }
\begin{equation*}
\begin{split}
\text{Pr}\Big(\big|I_{s^n_l, k}-  p_{unif}\big| \geq \epsilon_{bias,s^n_l} \Big) & \leq \frac{ p_{unif}}{\epsilon_{bias,s^n_l}^2} ,\\
\end{split}
\end{equation*}
\textit{where $\epsilon_{bias,s^n_l}$ is a positive real number.}

\begin{proof}
    Let  $X:=I_{s^n_l,k}$ be a random variable defined over the uniform distribution of bit strings $s^n_l$ that returns interference terms as values, $\mu:=\mathsf{E}_{s^n_l}(X)=p_{unif}=\frac{1}{{m \choose n}}$, with $\mathsf{E}_{s^n_l}(.)$ denoting the expectation value of $X$ over the uniform distribution of $s^n_l$, it is equal to $p_{unif}$ from lemma \ref{expectation_arbitrary}. Let $\sigma^2:=\mathsf{Var}(X)$. From Chebyshev's inequality, 
    \begin{equation*}
        Pr(|X-\mu|\geq \epsilon_{bias,s^n_l}) \leq \frac{\sigma^2}{\epsilon^2_{bias,s^n_l}}
    \end{equation*}
    Let $M:=\mathsf{max}_{s^n_l}(X)$, $m:=\mathsf{min}_{s^n_l}(X)$, and note that $M \leq 1$ and $m \geq 0$. We now use the Bhatia-Davis inequality \cite{bhatia_better_2000}
    \begin{equation*}
    \sigma^2 \leq (M-\mu)(\mu-m)
    \end{equation*}
    with the upper and lower bounds on $M$ and $m$ to obtain 
    \begin{equation*}
        \sigma^2 \leq (1-p_{unif})(p_{unif}) \leq p_{unif}.
    \end{equation*}
    Replacing this in Chebyshev's inequality completes the proof.
\end{proof}

\subsection{Proofs of Lemmas \ref{Var_recycle_upperbound} and \ref{Var_interfer_upperbound}}

Both proofs will make use of Jensen's inequality \cite{jensen_sur_1906}: 
\begin{theorem}
{\normalfont \textbf{(Jensen's inequality)}} Let $f(x)$ be a convex function on $(a,b)$ and suppose $a< x_1 \leq x_2 \leq \ldots \leq x_n < b$. Then 
\begin{equation*}
\frac{f(x_1) + f(x_2) + \ldots + f(x_n)}{n} \geq f\Big(\frac{x_1 + x_2 + \ldots + x_n}{n}\Big).
\end{equation*}
Equality holds if, and only if, $x_1=x_2=\ldots=x_n$.
\end{theorem}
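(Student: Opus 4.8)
The plan is to prove the inequality by induction on the number of points $n$, using only the two-point definition of convexity as the engine. For $n=1$ the claim is a tautology, and for $n=2$ it is precisely midpoint convexity: taking $\lambda=\tfrac{1}{2}$ in the defining inequality $f(\lambda a + (1-\lambda)b) \le \lambda f(a) + (1-\lambda) f(b)$ gives $f\big(\tfrac{x_1+x_2}{2}\big) \le \tfrac{1}{2}(f(x_1)+f(x_2))$, which is the base case. The ordering hypothesis $x_1 \le \cdots \le x_n$ plays no role in the inequality itself, since convexity is symmetric in its arguments; it is only needed for bookkeeping in the equality analysis, a point I would make explicit.

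For the inductive step, I would assume the statement for $n-1$ points and set $\bar{x}:=\tfrac{1}{n}\sum_{i=1}^n x_i$. The key algebraic observation is the splitting
\begin{equation*}
\bar{x} = \frac{n-1}{n}\cdot\frac{x_1 + \cdots + x_{n-1}}{n-1} + \frac{1}{n}\,x_n,
\end{equation*}
which exhibits $\bar{x}$ as a convex combination, with weights $\tfrac{n-1}{n}$ and $\tfrac{1}{n}$, of the mean of the first $n-1$ points and of the last point. Applying two-point convexity to this combination and then the inductive hypothesis to the term $f\big(\tfrac{x_1+\cdots+x_{n-1}}{n-1}\big)$ yields
\begin{equation*}
f(\bar{x}) \le \frac{n-1}{n}\,f\!\Big(\tfrac{x_1+\cdots+x_{n-1}}{n-1}\Big) + \frac{1}{n}f(x_n) \le \frac{n-1}{n}\cdot\frac{1}{n-1}\sum_{i=1}^{n-1}f(x_i) + \frac{1}{n}f(x_n),
\end{equation*}
and the right-hand side collapses to $\tfrac{1}{n}\sum_{i=1}^n f(x_i)$, completing the induction. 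An alternative one-shot argument I would mention is the supporting-line proof: choose a supporting line $\ell(x)=f(\bar{x})+c(x-\bar{x})$ at $\bar{x}$, which exists for a convex function on an open interval, average the pointwise bound $f(x_i)\ge \ell(x_i)$ over $i$, and use $\tfrac{1}{n}\sum x_i = \bar{x}$ to annihilate the linear term.

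The main subtlety, and the step I expect to require the most care, is the equality condition, since as literally stated it fails for merely convex (for instance affine) functions and in fact needs \emph{strict} convexity. The clean way to handle it is to trace equality backward through the induction: equality in the final chain forces equality simultaneously in the two-point application and in the invoked inductive hypothesis. Under strict convexity the two-point step is an equality only when its two arguments coincide, that is $x_n = \tfrac{x_1+\cdots+x_{n-1}}{n-1}$, while the inductive hypothesis forces $x_1=\cdots=x_{n-1}$; combining these gives $x_1=\cdots=x_n$. I would therefore either adjoin the hypothesis of strict convexity to the equality clause, or restate the ``only if'' direction so that it is asserted under strict convexity, and flag this as precisely the place where the stated hypotheses must be sharpened.
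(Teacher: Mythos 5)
Your proof is correct, but there is nothing in the paper to compare it against: the paper imports this statement as a classical result, citing Jensen's 1906 paper, and uses it as a black box in the proofs of Lemmas \ref{Var_recycle_upperbound} and \ref{Var_interfer_upperbound} (and again in Lemma \ref{abs_avg_dev_stat_error}), so no internal proof exists. Your induction via the splitting $\bar{x}=\frac{n-1}{n}\cdot\frac{x_1+\cdots+x_{n-1}}{n-1}+\frac{1}{n}x_n$ is the standard argument and is sound, as is the supporting-line alternative (the subgradient exists because a convex function on an open interval has one-sided derivatives everywhere, and $\bar{x}$ lies in $(a,b)$ since all $x_i$ do). Your remark on the equality clause is a genuine catch rather than a pedantic one: as printed, with mere convexity, the ``only if'' direction is false --- any affine $f$ yields equality for arbitrary points --- and your repair, adjoining strict convexity and tracing equality back through both the two-point step (whose weight $\frac{n-1}{n}\in(0,1)$ forces $x_n$ to equal the mean of the first $n-1$ points) and the inductive hypothesis, is exactly right; note this flaw in the statement is immaterial to the paper's applications, which use only the inequality, never the equality case. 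You are also correct that the ordering hypothesis $x_1\leq\cdots\leq x_n$ is inert. If you want the sharpest version for merely convex $f$, equality holds if and only if $f$ is affine on $[x_1,x_n]$, of which the strict-convexity criterion is the special case where affineness forces all points to coincide.
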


\vspace{1em}

We now prove Lemma \ref{Var_recycle_upperbound} from the main text:

\vspace{1em}

\textit{
The variance of the set of recycled probabilities is less than or equal to the variance of the set of ideal probabilities, that is
\begin{equation*}
\mathsf{Var}\big(\{p(s^{n}_l)\}_l\big) \geq \mathsf{Var}\big(\{p_{R}^k(s^{n}_l)\}_l\big).
\end{equation*}
}

\begin{proof}
Let $\mu = p_{unif}$ and $f(x) = (x-\mu)^2$.
Defining the variance of the ideal $n$ output photon distribution as
\begin{equation*}
\begin{split}
 \mathsf{Var}\big(\{p(s^{n}_l)\}_l\big)  &:= {m \choose n}^{-1} \sum_l (p(s^{n}_l) - \mu)^2,\\
\end{split}
\end{equation*}
and the variance of the $n-k$ output photon recycled distribution as
\begin{equation*}
\begin{split}
  \mathsf{Var}\big(\{p_{R}^k(s^{n}_l)\}_l\big)   &:= {m \choose n}^{-1}\sum_l (p_{R}^k(s^{n}_l) - \mu)^2.\\
\end{split}
\end{equation*}
These definitions are equivalent to the variance of random variables uniformly distributed over the set of bit strings $\{s^n_l\}$ that return ideal $n$ output photon probabilities in the first case, and $n-k$ output photon recycled probabilities in the second case. 
Recall that the recycled probabilities are defined by the expression
\begin{equation*}
\begin{split}
  p_R^k(s^n_l)   &:= \sum_{s_i^{n-k} \in \mathcal{L}(s^n_l)}\sum_{ s^n_j \in \mathcal{G}(s_i^{n-k})}p(s^n_j)\frac{1}{{m-n+k \choose k}{n \choose k}}.\\
\end{split}
\end{equation*}
As $f(x)$ is concave, by applying Jensen's inequality it is true that 
\begin{equation*}
\sum_{s_i^{n-k} \in \mathcal{L}(s^n_l)}\sum_{ s^n_j \in \mathcal{G}(s_i^{n-k})}\frac{f(p(s^n_j))}{{m-n+k \choose k}{n \choose k}} \geq f\bigg(\sum_{s_i^{n-k} \in \mathcal{L}(s^n_l)}\sum_{ s^n_j \in \mathcal{G}(s_i^{n-k})}p(s^n_j)\frac{1}{{m-n+k \choose k}{n \choose k}}\bigg).
\end{equation*}
From which it follows
\begin{equation*}
{m \choose n}^{-1} \sum_l \sum_{s_i^{n-k} \in \mathcal{L}(s^n_l)}\sum_{ s^n_j \in \mathcal{G}(s_i^{n-k})}\frac{f(p(s^n_j))}{{m-n+k \choose k}{n \choose k}} \geq {m \choose n}^{-1} \sum_l f\bigg(\sum_{s_i^{n-k} \in \mathcal{L}(s^n_l)}\sum_{ s^n_j \in \mathcal{G}(s_i^{n-k})}p(s^n_j)\frac{1}{{m-n+k \choose k}{n \choose k}}\bigg),
\end{equation*}
and after simplifying
\begin{equation*}
{m \choose n}^{-1} \sum_l f(p(s^n_j)) \geq {m \choose n}^{-1} \sum_l f\bigg(\sum_{s_i^{n-k} \in \mathcal{L}(s^n_l)}\sum_{ s^n_j \in \mathcal{G}(s_i^{n-k})}p(s^n_j)\frac{1}{{m-n+k \choose k}{n \choose k}}\bigg).
\end{equation*}
And this is just: $\mathsf{Var}\big(\{p(s^{n}_l)\}_l\big) \geq \mathsf{Var}\big(\{p_{R}^k(s^{n}_l)\}_l\big)$, the result to be shown.
\end{proof}

\vspace{2em}
We now prove Lemma \ref{Var_interfer_upperbound} from the main text:

\vspace{1em}

\textit{
The variance of the set of interference terms is less than or equal to the variance of the set of ideal probabilities, that is
\begin{equation*}
\mathsf{Var}\big(\{p(s^{n}_l)\}_l\big) \geq \mathsf{Var}\big(\{I_{s^n_l, k}\}_l\big).
\end{equation*}
}

\begin{proof}
Let $\mu = p_{unif}$ and $f(x) = (x-\mu)^2$. 
Defining the variance of the ideal $n$ output photon distribution as
\begin{equation*}
\begin{split}
 \mathsf{Var}\big(\{p(s^{n}_l)\}_l\big)  &:= {m \choose n}^{-1} \sum_l (p(s^{n}_l) - \mu)^2,\\
\end{split}
\end{equation*}
and the variance of the interference terms for the $n-k$ output photon recycled distribution as
\begin{equation*}
\begin{split}
  \mathsf{Var}\big(\{I_{s^n_l, k}\}_l\big)   &:= {m \choose n}^{-1}\sum_l \big(I_{s^n_l, k} - \mu\big)^2.\\
\end{split}
\end{equation*}
These definitions are equivalent to the variance of random variables uniformly distributed over the set of bit strings $\{s^n_l\}$ that return ideal $n$ output photon probabilities in the first case, and interference terms from the $n-k$ output photon recycled distribution in the second case. 
Recall that the interference terms of a recycled distribution are defined by the expression
\begin{equation*}
\begin{split}
I_{s^n_l, k}  &= \sum_{s_i^{n-k} \in \mathcal{L}(s^n_l)}\sum_{ s^n_j \in \mathcal{G}(s_i^{n-k}), j \neq l} p(s^n_j)\frac{1}{\big({m-n+k \choose k}-1\big){n \choose k}}.\\
\end{split}
\end{equation*}
As $f(x)$ is concave, by applying Jensen's inequality it is true that 
\begin{equation*}
\sum_{s_i^{n-k} \in \mathcal{L}(s^n_l)}\sum_{ s^n_j \in \mathcal{G}(s_i^{n-k}), j \neq l} \frac{f(p(s^n_j))}{\big({m-n+k \choose k}-1\big){n \choose k}} \geq f\bigg(\sum_{s_i^{n-k} \in \mathcal{L}(s^n_l)}\sum_{ s^n_j \in \mathcal{G}(s_i^{n-k}), j \neq l}p(s^n_j)\frac{1}{\big({m-n+k \choose k}-1\big){n \choose k}}\bigg).
\end{equation*}
From which it follows
\begin{equation*}
\begin{split}
{m \choose n}^{-1} \sum_l \sum_{s_i^{n-k} \in \mathcal{L}(s^n_l)}\sum_{ s^n_j \in \mathcal{G}(s_i^{n-k}), j \neq l}&\frac{f(p(s^n_j))}{\big({m-n+k \choose k}-1\big){n \choose k}} \\
&\hspace{2em}\geq {m \choose n}^{-1} \sum_l f\bigg(\sum_{s_i^{n-k} \in \mathcal{L}(s^n_l)}\sum_{ s^n_j \in \mathcal{G}(s_i^{n-k}), j \neq l}p(s^n_j)\frac{1}{\big({m-n+k \choose k}-1\big){n \choose k}}\bigg),
\end{split}
\end{equation*}
and after simplifying
\begin{equation*}
{m \choose n}^{-1}  \sum_l f(p(s^n_j)) \geq {m \choose n}^{-1} \sum_l f\bigg(\sum_{s_i^{n-k} \in \mathcal{L}(s^n_l)}\sum_{ s^n_j \in \mathcal{G}(s_i^{n-k})}p(s^n_j)\frac{1}{\big({m-n+k \choose k}-1\big){n \choose k}}\bigg).
\end{equation*}
And this is just: $\mathsf{Var}\big(\{p(s^{n}_l)\}_l\big) \geq \mathsf{Var}\big(\{I_{s^n_l, k}\}_l)\}_l\big)$, the result to be shown.
\end{proof}

\subsection{Proof of Thm.  \ref{p_upper_BD_bound}}

We now prove Thm.  \ref{p_upper_BD_bound} from the main text:

\vspace{1em}

\textit{The deviation of interference terms around $p_{unif}$ for an arbitrary matrix is bounded }
\begin{equation*}
\begin{split}
\text{Pr}\Big(\big|I_{s^n_l, k}-  p_{unif}\big| \geq \epsilon_{bias,s^n_l} \Big) & \leq \frac{ p_{unif} p_{{upper}}} {\epsilon_{bias,s^n_l}^2} + \delta\Big(1 - \frac{ p_{unif} p_{{upper}}}{\epsilon_{bias,s^n_l}^2}\Big)  ,\\
\end{split}
\end{equation*}
where \textit{$\epsilon_{bias,s^n_l}$ is a positive real number, and $p_{\text{upper}}$ is an empirically computed upper bound on the largest probability of the ideal $n$ output photon probability distribution with confidence $1-\delta$.}

\begin{proof}

    Let  $X:=I_{s^n_l,k}$ be a random variable defined over the uniform distribution of bit strings $s^n_l$ that returns interference terms as values. $\mathsf{E}_{s^n_l}(X)$ denotes the expectation value of random variable $X$ over the uniform distribution of $s^n_l$, and let $\mu_X:=\mathsf{E}_{s^n_l}(X)=p_{unif}$, the latter equality comes from lemma \ref{expectation_arbitrary}. Let $Y:=p(s^{n}_l)$ be a random variable defined over the uniform distribution of bit strings $s^n_l$ that returns $n$--photon output state probabilities as values, and $\mu_Y:=\mathsf{E}_{s^n_l}(Y)=p_{unif}$, where $\mathsf{E}_{s^n_l}(Y)$ denotes the expectation value of random variable $Y$ over the uniform distribution of $s^n_l$. Using Chebyshev's inequality we have that
    \begin{equation*}
        Pr(|X-\mu_X|\geq \epsilon_{bias,s^n_l}) \leq \frac{\mathsf{Var}(X)}{\epsilon^2_{bias,s^n_l}}
    \end{equation*}
    By lemma \ref{Var_interfer_upperbound} we have that $\mathsf{Var}(X) \leq \mathsf{Var}(Y)$. Let $M:=\mathsf{max}_{s^n_l}(Y)$, $m:=\mathsf{min}_{s^n_l}(Y)$. An empirical upper bound on the largest probability on the largest probability, $p_{{upper}}$, may be computed from sample data such that $p_{{upper}}\geq M$. A Hoeffding inequality can be used to bound the statistical error of the empirical estimator of $M$, $\tilde{M}$, so that
    $$
    |\epsilon_{hoeff,M}| \leq \sqrt{\frac{\log(\frac{2}{\delta})}{2N_{est,M}}},
    $$
    where $N_{est,M}$ is the number of samples used to compute the estimator $\tilde{M}$, and $1-\delta$ is the confidence. Defining $\epsilon_{hoeff,M}^{max}:=\sqrt{\frac{\log(\frac{2}{\delta})}{2N_{est,M}}}$ and $p_{upper} := \tilde{M} + \epsilon_{hoeff,M}^{max}$. Then $p_{upper} > M$ with confidence $1-\delta$. The smallest probability is lower bounded $m \geq 0$. We now use the Bhatia-Davis inequality \cite{bhatia_better_2000}
    \begin{equation*}
    \mathsf{Var}(Y) \leq (M-\mu_Y)(\mu_Y-m) = (M-p_{unif})(p_{unif}-m)
    \end{equation*}
    the upper and lower bounds on $M$ and $m$ then lead to  
    \begin{equation*}
        (M-p_{unif})(p_{unif}-m) \leq (p_{{upper}}-p_{unif})(p_{unif})  \leq p_{{upper}}p_{unif}.
    \end{equation*}
    The confidence that $|I_{s^n_l, k}-  p_{unif}| < \epsilon_{bias,s^n_l}$
    would be $1-\frac{ p_{unif} p_{{upper}}} {\epsilon_{bias,s^n_l}^2}$, however this must be multiplied by the independent confidence of the statement $p_{upper} > M$, which is $1-\delta$, to get an overall confidence of $1 - \frac{ p_{unif} p_{{upper}}} {\epsilon_{bias,s^n_l}^2} - \delta+ \frac{ p_{unif} p_{{upper}}\delta}{\epsilon_{bias,s^n_l}^2}$. This means the Chebyshev inequality becomes
\begin{equation*}
\begin{split}
\text{Pr}\Big(\big|I_{s^n_l, k}-  p_{unif}\big| \geq \epsilon_{bias,s^n_l} \Big) & \leq \frac{ p_{unif} p_{{upper}}} {\epsilon_{bias,s^n_l}^2} + \delta\Big(1 - \frac{ p_{unif} p_{{upper}}}{\epsilon_{bias,s^n_l}^2}\Big) ,\\
\end{split}
\end{equation*}
and the result follows.
\end{proof}

\section{Linear solving}
\label{applinearsolv}

\subsection{Bounding errors in linear solving }
\label{appperfguarlinsolv}

We first state and prove two results that we will use. These are upper bounds for the statistical error and the bias error of the mitigated value respectively, these we state as lemmas \ref{miti_bias_error} and \ref{miti_stat_error}.

\begin{lemma}
\label{miti_bias_error}
    $\epsilon_{mit, bias} \leq 3({m-n+k \choose k}-1)|\epsilon_{bias,s^n_l}|$.
\end{lemma}
\begin{proof}
Without statistical error the mitigated value may be written $p_{mit}(s^n_l)=\Big|{m-n+k \choose k}p_R(s^n_l)-{m-n+k \choose k}\frac{N'_k}{N_k} p_{unif}\Big|$. 
Let $\epsilon_{mit, bias}:=|p_{mit}(s^n_l)-p(s^n_l)|$ and $A_k:={m-n+k \choose k}p_R(s^n_l)-{m-n+k \choose k}\frac{N'_k}{N_k} p_{unif}$, so that $\epsilon_{mit, bias}=||A_k|-p(s^n_l)|$, then $p_{mit}(s^n_l)=|A_k|$, and $p(s^n_l)=A_k-{m-n+k \choose k}\frac{N'_k}{N_k}\epsilon_{bias,s^n_l}$ (see main text). 
\vspace{1em}

Consider the first possible case where $A_k <0$, this means that 
\begin{equation*}
\begin{split}
\epsilon_{mit, bias}& = | -A_k-p(s^n_l)| \\
&=|-A_k-A_k+{m-n+k \choose k}\frac{N'_k}{N_k}\epsilon_{bias,s^n_l}| \\
&\leq 2|A_k| + {m-n+k \choose k}\frac{N'_k}{N_k}|\epsilon_{bias,s^n_l}|,
\end{split}
\end{equation*}
 by triangle inequality. 
Since $p(s^n_l) \geq 0$ by definition, then ${m-n+k \choose k}\frac{N'_k}{N_k}|\epsilon_{bias,s^n_l}|=-{m-n+k \choose k}\epsilon_{bias,s^n_l} \geq -A_k$. Since $|A_k|=-A_k$, it follows that $|A_k| \leq {m-n+k \choose k}|\frac{N'_k}{N_k}\epsilon_{bias,s^n_l}|$, and therefore that $\epsilon_{mit, bias} \leq 2|A_k| + {m-n+k \choose k}\frac{N'_k}{N_k}|\epsilon_{bias,s^n_l}| \leq 3{m-n+k \choose k}\frac{N'_k}{N_k}|\epsilon_{bias,s^n_l}|$. 
\vspace{1em}

Now, consider the second possible case where $A_k \geq 0$, then $|A_k|=A_k$, and therefore $\epsilon_{mit, bias}={m-n+k \choose k}\frac{N'_k}{N_k}|\epsilon_{bias,s^n_l}| \leq 3{m-n+k \choose k}\frac{N'_k}{N_k}|\epsilon_{bias,s^n_l}|.$ This completes the proof.
\end{proof}
\vspace{1em}

\begin{lemma}
\label{miti_stat_error}
$|\tilde{p}_{mit}(s^n_l)-{p}_{mit}(s^n_l)| \leq 3 {m-n+k \choose k} |\epsilon_{\text{hoeff},\tilde{p}_R^k(s^{n}_l)}|.$
\end{lemma}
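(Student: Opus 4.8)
The plan is to mirror the structure of the proof of Lemma \ref{miti_bias_error}, replacing the role of the bias error there by the statistical error on the recycled probability. First I would fix notation paralleling that proof: write $A_k := {m-n+k \choose k}p_R^k(s^n_l) - {m-n+k \choose k}\frac{N'_k}{N_k}p_{unif}$ for the exact argument of the outer absolute value, and $\tilde{A}_k := {m-n+k \choose k}\tilde{p}_R^k(s^n_l) - {m-n+k \choose k}\frac{N'_k}{N_k}p_{unif}$ for its empirical counterpart, so that $p_{mit}(s^n_l)=|A_k|$ and $\tilde{p}_{mit}(s^n_l)=|\tilde{A}_k|$. The one algebraic fact driving everything is that the $p_{unif}$ terms cancel in the difference, leaving $\tilde{A}_k - A_k = {m-n+k \choose k}\big(\tilde{p}_R^k(s^n_l)-p_R^k(s^n_l)\big)$, hence $|\tilde{A}_k-A_k| \leq {m-n+k \choose k}\,|\epsilon_{\text{hoeff},\tilde{p}_R^k(s^{n}_l)}|$ by Lemma \ref{recycled_stat_error_bound}.

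Next I would carry out a case analysis on the sign of $A_k$, exactly as in Lemma \ref{miti_bias_error}. When $A_k$ and $\tilde{A}_k$ share a sign, the quantity $\big||\tilde{A}_k|-|A_k|\big|$ collapses to $|\tilde{A}_k-A_k|$ and the constant is $1$, well inside the claimed factor $3$. The only delicate case is when the statistical fluctuation is large enough to flip the sign, say $A_k<0\leq\tilde{A}_k$; then $\big||\tilde{A}_k|-|A_k|\big| = |\tilde{A}_k+A_k| \leq 2|A_k| + |\tilde{A}_k-A_k|$ by the triangle inequality, and I would control the stray $2|A_k|$ term by the observation that $\tilde{A}_k\geq 0$ together with $\tilde{A}_k = A_k + {m-n+k \choose k}\epsilon_{\text{hoeff},\tilde{p}_R^k(s^{n}_l)}$ forces $|A_k| = -A_k \leq {m-n+k \choose k}\,|\epsilon_{\text{hoeff},\tilde{p}_R^k(s^{n}_l)}|$. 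Substituting this bound yields the factor $3$ in this case, and the sign-flipped case $\tilde{A}_k<0\leq A_k$ is symmetric.

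The main (and essentially only) obstacle is this sign-flip case; it is exactly the statistical analogue of the $A_k<0$ branch in Lemma \ref{miti_bias_error}, and the same trick --- using non-negativity of the competing absolute value to bound $|A_k|$ by the error itself --- resolves it. I would remark, finally, that the reverse triangle inequality $\big||\tilde{A}_k|-|A_k|\big|\leq|\tilde{A}_k-A_k|$ in fact holds unconditionally and already gives the sharper constant $1$ in place of $3$; the looser factor $3$ is stated only to keep the statistical and bias lemmas uniform, and either constant suffices for the performance guarantee of Thm. \ref{linsol}.
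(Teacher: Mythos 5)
Your proposal is correct and takes essentially the same route as the paper's proof: the same decomposition into $A_k$ and its empirical counterpart, the same sign case analysis, and the same trick of bounding $|A_k|$ by the statistical error in the sign-flip case to obtain the factor $3$. Your closing observation is also right --- the reverse triangle inequality $\big||\tilde{A}_k|-|A_k|\big|\leq|\tilde{A}_k-A_k|$ immediately gives the bound with constant $1$, so the paper's case analysis (and its factor of $3$) is strictly looser than necessary, though harmless for Thm.~\ref{linsol}.
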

\begin{proof}
First, note that the estimator of the mitigated value may be written $\tilde{p}_{mit}=|A_k+ \epsilon_{mit,stat}|,$ where $A_k$ is defined previously, and $\epsilon_{mit,stat}={m-n+k \choose k}\tilde{p}_R(s^n_l)-{m-n+k \choose k}p_R(s^n_l), $ and $|\epsilon_{mit,stat}| \leq {m-n+k \choose k}|\epsilon_{\text{hoeff},\tilde{p}_R^k(s^{n}_l)}|.$ And therefore $|\tilde{p}_{mit}-p_{mit}|=||A_k+\epsilon_{mit,stat}|-|A_k||.$

Consider the case where $$|A_k+ \epsilon_{mit,stat}|=A_k + \epsilon_{mit,stat}.$$
If $A_k>0$, then $|A_k|=A_k$ and therefore $|\tilde{p}_{mit}-p_{mit}|=|\epsilon_{mit,stat} |\leq {m-n+k \choose k}|\epsilon_{\text{hoeff},\tilde{p}_R^k(s^{n}_l)}|$. 
If $A_k < 0$, then $A_k=-|A_k|$, and furthermore, since $A_k + \epsilon_{mit,stat} \geq 0$, then $\epsilon_{mit,stat} \geq |A_k|$. 
In this sub-case, we have that $|\tilde{p}_{mit}-p_{mit}|=|-2|A_k|-\epsilon_{mit,stat}| \leq 2|A_k| + |\epsilon_{mit,stat}| \leq 3 \epsilon_{mit,stat} \leq 3 {m-n+k \choose k} |\epsilon_{hoeff}|$.

Now, consider the case where $$|A_k+ \epsilon_{mit,stat}|=-A_k-\epsilon_{mit,stat}.$$
If $A_k<0$, then $-A_k=|A_k|$, in this sub-case we have $|\tilde{p}_{mit}-p_{mit}|=|\epsilon_{mit,stat}| \leq {m-n+k \choose k} \epsilon_{hoeff}.$ 
Now, if $A_k>0$, then $-A_k=-|A_k|$ and furthermore, since $-A_k-\epsilon_{mit,stat} \geq 0$, then $-\epsilon_{mit,stat} \geq |A_k|$. 
In this sub-case, $|\tilde{p}_{mit}-p_{mit}|=|-2|A_k|-\epsilon_{mit,stat}| \leq 3 |\epsilon_{mit,stat}| \leq 3 {m-n+k \choose k} |\epsilon_{\text{hoeff},\tilde{p}_R^k(s^{n}_l)}|.$ 

In all possible cases, $$|\tilde{p}_{mit}-p_{mit}| \leq 3 {m-n+k \choose k} |\epsilon_{\text{hoeff},\tilde{p}_R^k(s^{n}_l)}|,$$ which was the result to be proved.
\end{proof}

\vspace{1em}
\subsection {Proof of Thm. \ref{thpost}}
\label{proofmainres}
From Lemmas \ref{miti_bias_error} and \ref{miti_stat_error} we have that the overall error for linear solving recycling mitigation is given by 
\begin{equation}
|\tilde{p}_{mit}(s^n_l)-p_{id}(s^n_l)| \leq 3 { m-n+k \choose k} (|\epsilon_{bias,s^n_l}|+|\epsilon_{\text{hoeff},\tilde{p}_R^k(s^{n}_l)}|).
\end{equation}
Now rearranging Thm. \ref{nonHaarbound} in terms of confidence parameter $\delta_{\text{bias}} \in (0,1)$, we get that
\begin{equation} \label{prev_eq_arg}
    \text{Pr}\bigg(|I_{s^n_l,k} - p_{unif}| \leq \sqrt{\frac{p_{\text{unif}}}{\delta_{\text{bias}}}} \bigg) \geq 1 - \delta_{\text{bias}}.
\end{equation}
Now eqn. \ref{eqstaterrth1} states that
\begin{equation}
|\epsilon_{\text{hoeff},\tilde{p}_R^k(s^{n}_l)}| \leq \alpha \frac{1}{{m-n+k \choose k}}\sqrt{\frac{1}{{n \choose k}(1-\eta)^{n-k}\eta^kN_{tot}}},
\end{equation}
with confidence at least $1-2e^{-\alpha^2}$.
Stated in the same form as eqn. \ref{prev_eq_arg}, this is
\begin{equation}
\text{Pr}\bigg(|\epsilon_{\text{hoeff},\tilde{p}_R^k(s^{n}_l)}| \leq \alpha \frac{1}{{m-n+k \choose k}}\sqrt{\frac{1}{{n \choose k}(1-\eta)^{n-k}\eta^kN_{tot}}} \bigg) \geq 1-2e^{-\alpha^2}.
\end{equation}
Now applying union bound, with $|\epsilon_{bias,s^n_l}|:=|I_{s^n_l,k} - p_{unif}|$, we have that 
\begin{equation}
    \text{Pr}\bigg(|\epsilon_{bias,s^n_l}|+|\epsilon_{\text{hoeff},\tilde{p}_R^k(s^{n}_l)}| \leq  \sqrt{\frac{p_{\text{unif}}}{\delta_{\text{bias}}}} + \alpha \frac{1}{{m-n+k \choose k}}\sqrt{\frac{1}{{n \choose k}(1-\eta)^{n-k}\eta^kN_{tot}}}\bigg) \geq 1 - \delta_{\text{bias}} - 2e^{-\alpha^2}.
\end{equation}
To obtain this expression in terms of a single confidence parameter we set $\delta_{bias}=2e^{-\alpha^2}$, and the previous expression becomes
\begin{equation}
    \text{Pr}\bigg(|\epsilon_{bias,s^n_l}|+|\epsilon_{\text{hoeff},\tilde{p}_R^k(s^{n}_l)}| \leq  \frac{e^{\alpha^2/2}}{\sqrt{2}}\sqrt{p_{\text{unif}}} + \alpha \frac{1}{{m-n+k \choose k}}\sqrt{\frac{1}{{n \choose k}(1-\eta)^{n-k}\eta^kN_{tot}}}\bigg) \geq 1 - 4e^{-\alpha^2}.
\end{equation}
We then get that
\begin{equation}
|\tilde{p}_{mit}(s^n_l)-p_{id}(s^n_l)| \leq 3 \bigg[ \frac{e^{\alpha^2/2}}{\sqrt{2}}\sqrt{p_{\text{unif}}} { m-n+k \choose k} + \alpha \sqrt{\frac{1}{{n \choose k}(1-\eta)^{n-k}\eta^kN_{tot}}} \bigg],
\end{equation}
which holds with confidence $1 - 4e^{-\alpha^2}$.

To simplify this we will now make use of the well-known inequality that, for integers $a \geq b \geq 1$,
\begin{equation}
    \Big(\frac{a}{b}\Big)^b \leq {a \choose b} \leq \Big(\frac{ea}{b}\Big)^b.
\end{equation}
Using this inequality, we have that
\begin{equation}
{ m-n+k \choose k} \leq \bigg(\frac{e(m-n+k)}{k}\bigg)^k,
\end{equation}
and
\begin{equation}
 \Big(\frac{n}{k}\Big)^k \leq { n \choose k},
\end{equation}
so that
\begin{equation}
 { n \choose k}^{-1} \leq \Big(\frac{k}{n}\Big)^k .
\end{equation}
This may be used to then derive the following:
\begin{equation}
|\tilde{p}_{mit}(s^n_l)-p_{id}(s^n_l)| \leq 3 \bigg[ \frac{e^{\alpha^2/2}}{\sqrt{2}}\sqrt{p_{\text{unif}}} \bigg(\frac{e(m-n+k)}{k}\bigg)^k + \alpha \Big(\frac{k}{n}\Big)^{k/2} \sqrt{\frac{1}{(1-\eta)^{n-k}\eta^kN_{tot}}} \bigg],
\end{equation}
which holds with confidence $1 - 4e^{-\alpha^2}$.
We now set $\delta = 4e^{-\alpha^2}$.
Then $\alpha = \sqrt{\ln(4/\delta)}$ and $\frac{e^{\alpha^2/2}}{\sqrt{2}} = \sqrt{\frac{2}{\delta}}$.
Substituting this, we then get the following bound
\begin{equation}
\big|\tilde{p}_{\mathrm{mit}}(s^n_l)-p_{\mathrm{id}}(s^n_l)\big| \leq 
3\Bigg[\sqrt{\frac{2p_{\mathrm{unif}}}{\delta}} \left(\frac{e(m-n+k)}{k}\right)^k + \sqrt{\ln(4/\delta)}\, \left(\frac{k}{n}\right)^{k/2} \sqrt{\frac{1}{(1-\eta)^{n-k}\eta^kN_{\mathrm{tot}}}} \Bigg],
\end{equation}
which holds with confidence $1-\delta$. 
Finally, since $p_{\mathrm{unif}} = \binom{m}{n}^{-1}$ and
\begin{equation}
\binom{m}{n} \ge \left(\frac{m}{n}\right)^n,
\end{equation}
we have
\begin{equation}
p_{\mathrm{unif}} \le \left(\frac{n}{m}\right)^n,
\qquad
\sqrt{p_{\mathrm{unif}}} \le \left(\frac{n}{m}\right)^{n/2}.
\end{equation}
Substituting this bound yields the expression for $f(n,m,k,\eta,N_{tot})$ in Theorem~\ref{thpost}.

\subsection{Linear solving with dependency}

We now consider the case where there is positive correlation of the interference to ideal probabilities within the recycled probabilities. 
This effect can be modelled as a linear dependence of interference terms on respective ideal probabilities. 
For a given recycled probability, a dependency term $d_k(s_l^n)$ can be used to express the interference term as a linear function of $p(s_l^n)$ and $p_{unif}$. Then
\begin{equation*}
 \big(1-d_k(s^n_l)\big)p_{unif} + d_k(s^n_l)p(s^n_l) = I_{s^n_l, k}
\end{equation*} 
defines the value $d_k(s^n_l)$, where $k\leq n-1$. 
For each interference term, $I_{s^n_l, k}$, there is a corresponding dependency term, $d_k(s^n_l)$. The dependency term, $d_k(s^n_l)$, encodes the dependence of interference term $I_{s^n_l, k}$ on ideal probability $p(s^n_l)$ for the recycled probability $p_R^k(s^n_l)$. The set of dependency terms is then denoted $\{d_k(s^n_l)\}_l$. 

The original linear solving method involves approximating the interference term as $p_{unif}$, and solving to find the mitigated probability. Now we propose first extracting an average dependency term from the distribution, which we will call the \textit{average dependency term} $d_k$, and then approximating the interference terms as $ \big(1-d_k\big)p_{unif} + d_k p(s^n_l) $ and solving as before. The motivation for this variation on the original protocol is to improve mitigation performance by capturing the enhanced ideal probability signal caused by this correlation behaviour.
In the original solving method, excluding statistical error, the recycled probability is decomposed as
\begin{equation*}
    p_R(s^n_l)=\frac{p(s^n_l)}{{m-n+k \choose k}}+\frac{N'_k}{N_k}(p_{unif}+\epsilon_{bias,s^n_l}),
\end{equation*}
where $\epsilon_{bias,s^n_l}$ is the bias error from approximating $I_{s^n_l, k}$ as $\frac{N'_k}{N_k}p_{unif}$. If the interference term $I_{s^n_l, k}$ is instead approximated as $\frac{N'_k}{N_k}\Big( \big(1-d_k\big)p_{unif} + d_k p(s^n_l) \Big)$, then the expression becomes
\begin{equation*}
\begin{split}
    p_R^k(s^n_l) & = \frac{p(s^n_l)}{{m-n+k \choose k}}+\frac{N'_k}{N_k}\big(\big((1-d_k)p_{unif}+d_kp(s^n_l)\big)+\epsilon_b\big)\\
    & = p(s^n_l)\bigg(\frac{1}{{m-n+k \choose k}} + \frac{N'_k}{N_k}d_k\bigg) + \frac{N'_k}{N_k}\big((1-d_k)p_{unif}+\epsilon_b\big),
\end{split}
\end{equation*}
where $\epsilon_b$ is the bias error in the interference terms away from the linear model. From the positivity of the interference terms $$\epsilon_b \geq -\big((1-d_k)p_{unif}+d_kp(s^n_l)\big),$$ where $k\leq n-1$. 
One factor motivating the inclusion of a dependency term is that positive correlation of the interference terms with the ideal probabilities means that rather than the signal of the ideal probability having magnitude $\frac{1}{{m-n+k \choose k}}$, its magnitude is instead $\frac{1}{{m-n+k \choose k}} + \frac{N'_k}{N_k}d_k$. And this stronger signal may be used to improve mitigation performance.  
We now show how to compute the dependency term $d_k$ from the absolute average deviation. Then compute bounds on the bias error and statistical error for the use of linear dependency in linear solving, which are then used to provide performance guarantees relative to postselection.

\subsubsection{Computing the average dependency term $d_k$}

One consequence of a general correlation of interference terms with ideal probabilities within recycled probabilities is that ${D}_k^{\text{no dep.}}\leq{D}_k$. The average dependency term $d_k$ is a weighted average of dependency terms, it can be computed from ${D}_k^{\text{no dep.}}$ and ${D}_k$. We know by definition that
\begin{equation*}
\begin{split}
{D}_0 & = \frac{1}{{|\boldsymbol{S}|}}\sum_{s^n_l \in \boldsymbol{S}}\abs{ p(s^n_l)  - p_{unif}}.\\
\end{split}
\end{equation*}
For the moment ignoring statistical error, the dependency factor can be calculated from the average distance of probabilities from the uniform probability for the $n-k$--photon recycled distribution
\begin{equation*}
\begin{split}
{D}_k & = \frac{1}{{|\boldsymbol{S}|}}\sum_{s^n_l \in \boldsymbol{S}}\abs{ \frac{{n \choose k}p(s^n_l)}{N_k} + \frac{N_k'}{N_k}\big((1-d_k(s^n_l))p_{unif}+d_k(s^n_l)p(s_l^n)\big)  - p_{unif}}\\
& = \frac{1}{{|\boldsymbol{S}|}}\sum_{s^n_l \in \boldsymbol{S}}\abs{ \frac{{n \choose k}p(s^n_l)}{N_k} + \frac{N_k'}{N_k}p_{unif} - \frac{N_k'}{N_k}d_k(s^n_l)p_{unif}+\frac{N_k'}{N_k}d_k(s^n_l) p(s^n_l)  - p_{unif}}\\
& = \frac{1}{{|\boldsymbol{S}|}}\sum_{s^n_l \in \boldsymbol{S}}\abs{ \bigg(\frac{{n \choose k}}{N_k} \bigg) \bigg(p(s^n_l) - p_{unif} \bigg) + d_k(s^n_l)\frac{N_k'}{N_k}\bigg(p(s^n_l) - p_{unif}\bigg)}\\
& = \frac{1}{{|\boldsymbol{S}|}}\sum_{s^n_l \in \boldsymbol{S}}\abs{ \bigg(\frac{{n \choose k}}{N_k} + d_k(s^n_l)\frac{N_k'}{N_k}\bigg) \bigg(p(s^n_l) - p_{unif} \bigg) }\\
\end{split}
\end{equation*}
The average dependency term, $d_k$, for the recycled distribution is defined by the expression
\begin{equation*}
\begin{split}
 \bigg(\frac{{n \choose k}}{N_k} + d_k\frac{N_k'}{N_k}\bigg)\frac{1}{{|\boldsymbol{S}|}}\sum_{s^n_l \in \boldsymbol{S}}\abs{ \bigg(p(s^n_l) - p_{unif} \bigg) } & = \frac{1}{{|\boldsymbol{S}|}}\sum_{s^n_l \in \boldsymbol{S}}\abs{ \bigg(\frac{{n \choose k}}{N_k} + d_k\big(p(s^n_l)\big)\frac{N_k'}{N_k}\bigg) \bigg(p(s^n_l) - p_{unif} \bigg) }.\\
\end{split}
\end{equation*}
This can then be used to rewrite the previous expression in terms of the dependency
\begin{equation*}\label{eq:DkD0}
\begin{split}
{D}_k & = \frac{1}{{|\boldsymbol{S}|}}\sum_{s^n_l \in \boldsymbol{S}}\abs{ \bigg(\frac{{n \choose k}}{N_k} + d_k(s^n_l)\frac{N_k'}{N_k}\bigg) \bigg(p(s^n_l) - p_{unif} \bigg) }\\
& = \bigg(\frac{{n \choose k}}{N_k} + d_k\frac{N_k'}{N_k}\bigg)\frac{1}{{|\boldsymbol{S}|}}\sum_{s^n_l \in \boldsymbol{S}}\abs{ \bigg(p(s^n_l) - p_{unif} \bigg) }\\
& = \bigg(1 + d_k\bigg({{m-n+k} \choose k}-1\bigg)\bigg) \frac{1}{{{m-n+k} \choose k}}\frac{1}{{|\boldsymbol{S}|}}\sum_{s^n_l \in \boldsymbol{S}}\abs{ \bigg(p(s^n_l) - p_{unif} \bigg) }\\
& =\bigg( \frac{1 + d_k{{m-n+k} \choose k}-d_k}{{{m-n+k} \choose k}} \bigg) {D}_0 \\
& =\bigg( \frac{1}{{{{m-n+k} \choose k}}} + d_k\frac{{{m-n+k} \choose k}-1}{{{m-n+k} \choose k}} \bigg) {D}_0. \\
\end{split}
\end{equation*}
If there is no correlation of interference terms with ideal probabilities then $d_k=0$, and then
\begin{equation*}
\begin{split}
{D}_k^{d_k=0} & = \frac{1}{{{m-n+k} \choose k}}{D}_0.\\
\end{split}
\end{equation*}
Meaning that if there is no dependence the average absolute deviation decays with $k$ proportionally to $\frac{1}{{{m-n+k} \choose k}}$. Whereas with dependency the decay is proportional to $\bigg( \frac{1 + d_k{{m-n+k} \choose k}-d_k}{{{m-n+k} \choose k}} \bigg)$.
The average dependency term may be directly computed from the previous expression as 
\begin{equation*}
\label{eqexpdk}
\begin{split}
d_k = \frac{1}{{{m-n+k} \choose k} -1}\bigg(\frac{{{m-n+k} \choose k}{D}_k}{{D}_0} - \frac{1}{{{m-n+k} \choose k}}\bigg).\\
\end{split}
\end{equation*}

\subsection{Bias error bound with average dependency term $d_k$}

We now prove Lemma \ref{dependency_bias_error}, which will be used in the next section.

\begin{lemma} \label{dependency_bias_error}
The bias error from substituting $I_{s^n_l, k}$ with $ \big(1-d_k\big)p_{unif} + d_kp(s^n_l)$ in the recycled probabilities is upper bounded
\begin{equation*}
\text{Pr}\bigg[\big|I_{s^n_l, k} - \big( (1-d_k)p_{unif} + d_k p(s^n_l) \big)\big| \geq 2\epsilon_{bias,s^n_l} \bigg] \leq \frac{2 p_{unif}}{\epsilon_{bias,s^n_l}^2}.\\
\end{equation*}
\end{lemma}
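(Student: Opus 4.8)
The plan is to reduce this statement to Thm.~\ref{nonHaarbound}, which already controls the deviation of $I_{s^n_l,k}$ around $p_{unif}$, by exploiting the identity
\begin{equation*}
(1-d_k)p_{unif} + d_k p(s^n_l) = p_{unif} + d_k\big(p(s^n_l)-p_{unif}\big),
\end{equation*}
so that the quantity inside the probability rewrites as
\begin{equation*}
I_{s^n_l,k} - \big((1-d_k)p_{unif}+d_kp(s^n_l)\big) = \big(I_{s^n_l,k}-p_{unif}\big) - d_k\big(p(s^n_l)-p_{unif}\big).
\end{equation*}
By the triangle inequality the event $\big\{\,|I_{s^n_l,k} - ((1-d_k)p_{unif}+d_kp(s^n_l))| \geq 2\epsilon_{bias,s^n_l}\,\big\}$ is contained in the union of $\{|I_{s^n_l,k}-p_{unif}| \geq \epsilon_{bias,s^n_l}\}$ and $\{d_k|p(s^n_l)-p_{unif}| \geq \epsilon_{bias,s^n_l}\}$. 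A union bound over the uniform distribution of bit strings $s^n_l$, which is the common sample space on which both $I_{s^n_l,k}$ and $p(s^n_l)$ are defined, then splits the problem into two pieces.

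First I would bound the first event directly by Thm.~\ref{nonHaarbound}, giving $\text{Pr}[|I_{s^n_l,k}-p_{unif}| \geq \epsilon_{bias,s^n_l}] \leq p_{unif}/\epsilon_{bias,s^n_l}^2$. For the second event I would treat $Y:=p(s^n_l)$ as a random variable over the uniform distribution of bit strings; since $\sum_l p(s^n_l)=1$ over the ${m \choose n}$ strings, its mean is $\mathsf{E}_{s^n_l}(Y)=p_{unif}$, and since $Y\in[0,1]$ the Bhatia--Davis inequality (exactly as in the proof of Thm.~\ref{nonHaarbound}) gives $\mathsf{Var}(Y) \leq (1-p_{unif})p_{unif} \leq p_{unif}$. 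Applying Chebyshev's inequality and using $0 \leq d_k \leq 1$ (the regime in which the dependency method is run, per algorithm~\ref{algolinsolvdep}), I obtain
\begin{equation*}
\text{Pr}\big[\,d_k|p(s^n_l)-p_{unif}| \geq \epsilon_{bias,s^n_l}\,\big] \leq \frac{d_k^2\,\mathsf{Var}(Y)}{\epsilon_{bias,s^n_l}^2} \leq \frac{p_{unif}}{\epsilon_{bias,s^n_l}^2}.
\end{equation*}
Adding the two bounds yields $2p_{unif}/\epsilon_{bias,s^n_l}^2$, which is the claimed inequality.

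The computation is routine once the decomposition is in place; the point that requires care is the role of $d_k$. The second bound relies on $d_k^2 \leq 1$ so that the factor $d_k^2$ does not inflate the variance term — if $d_k$ exceeded $1$ the estimate would degrade by $d_k^2$, which is precisely why algorithm~\ref{algolinsolvdep} aborts the dependency approach when the estimator for $d_k$ falls outside $[0,1]$. A secondary subtlety is that $I_{s^n_l,k}$ and $p(s^n_l)$ are correlated as random variables over the bit strings, but this is immaterial: the union bound needs no independence, only that both events live on the same sample space, which they do. I expect the only genuinely delicate bookkeeping to be confirming that the $\mathsf{E}_{s^n_l}(Y)=p_{unif}$ normalisation and the Bhatia--Davis step transfer verbatim from $I_{s^n_l,k}$ to $p(s^n_l)$.
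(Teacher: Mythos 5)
Your proof is correct and follows the same skeleton as the paper's: the same decomposition of $I_{s^n_l,k} - \big((1-d_k)p_{unif}+d_kp(s^n_l)\big)$ into two deviations from $p_{unif}$, the same Chebyshev plus Bhatia--Davis bound on each piece (the paper bounds $\mathsf{Var}\big(\{(1-d_k)p_{unif}+d_kp(s^n_l)\}_l\big)$ directly rather than factoring out $d_k^2$, but this is the same estimate), and the same use of Thm.~\ref{nonHaarbound} for the interference term. The one place you genuinely diverge is the combination step, and your version is the better one: the paper assumes the two deviation events are \emph{independent}, multiplies the confidences, and then invokes a Bernoulli approximation $(1-x)^2\approx 1-2x$ to land on $2p_{unif}/\epsilon_{bias,s^n_l}^2$. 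That independence assumption is not justified --- both $I_{s^n_l,k}$ and $p(s^n_l)$ are functions of the same uniformly drawn bit string and are correlated --- and the approximation is unnecessary. Your union bound over the common sample space requires no independence, involves no approximation, and yields the stated constant $2$ exactly. You are also right that the only structural hypothesis doing real work is $0\le d_k\le 1$, which is exactly the condition under which Algorithm~\ref{algolinsolvdep} is permitted to run.
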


\begin{proof}

Rather than bounding $I_{s^n_l, k}$ from $p_{unif}$, we would now like to bound deviation of $I_{s^n_l, k}$ from $ \big(1-d_k\big)p_{unif} + d_kp(s^n_l) $. 
As it is required that $1 \geq d_k \geq 0$, terms of the form $\big( \big(1-d_k\big)p_{unif} + d_kp(s^n_l)\big)$ are bounded 
$$
|p(s^n_l) - \big( (1-d_k)p_{unif} + d_kp(s^n_l)\big)| \geq |\big( (1-d_k)p_{unif} + d_kp(s^n_l)\big) - p_{unif}|,
$$
or $$
|p(s^n_l) - \big( (1-d_k)p_{unif} + d_kp(s^n_l)\big)| \leq  |\big( (1-d_k)p_{unif} + d_kp(s^n_l)\big) - p_{unif}|,$$ 
with both statements true only with equidistance.
The mean of the set $\{p(s^n_l)\}_l$ is $p_{unif}$, and so also the mean of the set $\{ \big(1-d_k\big)p_{unif} + d_kp(s^n_l)\}_l$ is $p_{unif}$. If $d_k = 0$ then $\mathsf{Var}(\{ \big(1-d_k\big)p_{unif} + d_kp(s^n_l)\}_l)=0$, while if $d_k = 1$ then $\mathsf{Var}(\{ \big(1-d_k\big)p_{unif} + d_kp(s^n_l)\}_l)=\mathsf{Var}(\{p(s^n_l)\}_l)$. As $d_k\leq 1$ the variance of the set $\{ \big(1-d_k\big)p_{unif} + d_kp(s^n_l)\}_l$ is upper bounded by the variance of $\{p(s^n_l)\}_l$, so $\mathsf{Var}\big( (1-d_k)p_{unif} + d_kp(s^n_l)\big)\leq p_{unif}$, where we have used the Bhatia-Davis inequality. 
We can then bound the distance of $\big( (1-d_k)p_{unif} + d_k p(s^n_l) \big)$ from $p_{unif}$ using Chebyshev's inequality so that
\begin{equation*}
\begin{split}
\text{Pr}\bigg[\big|\big( (1-d_k)p_{unif} + d_k p(s^n_l) - p_{unif}\big)\big| \geq \epsilon_{bias,s^n_l} \bigg] & \leq \frac{  p_{unif}}{\epsilon_{bias,s^n_l}^2}  .\\
\end{split}
\end{equation*}
And in reverse form this inequality is
\begin{equation} \label{reveqn1}
\begin{split}
\text{Pr}\bigg[\big|\big( (1-d_k)p_{unif} + d_k p(s^n_l) - p_{unif}\big)\big| < \epsilon_{bias,s^n_l} \bigg] & \geq 1- \frac{  p_{unif}}{\epsilon_{bias,s^n_l}^2}  .\\
\end{split}
\end{equation}
The Thm.  \ref{nonHaarbound} bound on the bias of the interference term for arbitrary matrices is
\begin{equation*}
\begin{split}
\text{Pr}\bigg[\big|I_{s^n_l, k} - p_{unif}\big| \geq \epsilon_{bias,s^n_l} \bigg] & \leq \frac{ p_{unif}}{\epsilon_{bias,s^n_l}^2}.\\
\end{split}
\end{equation*}
Which in reverse form is
\begin{equation} \label{reveqn2}
\begin{split}
\text{Pr}\bigg[\big|I_{s^n_l, k} - p_{unif}\big| < \epsilon_{bias,s^n_l} \bigg] & \geq 1-\frac{ p_{unif}}{\epsilon_{bias,s^n_l}^2}.\\
\end{split}
\end{equation}
Now, we use triangle inequality to combine the reverse forms of the above two inequalities, so that 
\begin{equation} \label{triangle_ineq1}
\begin{split}
|I_{s^n_l, k} - \big( (1-d_k)p_{unif} + d_k p(s^n_l) )| &\leq |I_{s^n_l, k} - p_{unif}|+|-\big( (1-d_k)p_{unif} + d_k p(s^n_l)) + p_{unif}|\\
&\leq 2\epsilon_{bias,s^n_l}. \\
\end{split}
\end{equation}
Assuming independence of the inequalities in eqn. \ref{reveqn1} and eqn. \ref{reveqn2}, the the inequality in eqn. \ref{triangle_ineq1} holds with confidence $ (1-\frac{ p_{unif}}{\epsilon_{bias,s^n_l}^2})^2$. Using a Bernoulli approximation $(1-\frac{ p_{unif}}{\epsilon_{bias,s^n_l}^2})^2 \approx 1-2\frac{p_{unif}}{\epsilon_{bias,s^n_l}^2}$, and the result follows. 

\end{proof}

\subsection{Performance guarantee inequality for linear solving with dependency}
\label{appperfguarlinsolvdep}

We now derive the condition for when recycling mitigation with linear solving with dependency can outperform postselection.

\vspace{1em}
\begin{theorem} \label{linsoldep}
\textit{The condition}
\begin{equation*}
         \sqrt{\frac{{m \choose n}}{{n \choose k}(1-\eta)^{n-k}\eta^kN_{tot}}} + C\bigg({m-n+k \choose k}-1\bigg)\cdot\epsilon_{bias,s^n_l}  \leq \sqrt{\frac{{m \choose n}}{(1-\eta)^n N_{tot}}},
\end{equation*}
\textit{with $C>0$ defines a sampling regime where the sum of the worst-case statistical error and bias error of linear solving with dependency recycling mitigation is less than the worst-case statistical error of postselection.}
\end{theorem}
\begin{proof}

The estimator of the recycled probability may be rewritten in terms of the dependency term $d_k$ as
\begin{equation*}
\begin{split}
    p_R^k(s^n_l) & = \frac{p(s^n_l)}{{m-n+k \choose k}}+\frac{N'_k}{N_k}\big(\big((1-d_k -\epsilon_{\text{hoeff},d_k})p_{unif}+(d_k + \epsilon_{\text{hoeff},d_k})p(s^n_l)\big)+\epsilon_b\big) + \epsilon_{\text{hoeff},p_R^k}\\
    & = p(s^n_l)\bigg(\frac{1}{{m-n+k \choose k}} + \frac{N'_k}{N_k}(d_k + \epsilon_{\text{hoeff},d_k})\bigg) + \frac{N'_k}{N_k}\big((1-d_k-\epsilon_{\text{hoeff},d_k})p_{unif}+\epsilon_{bias,s^n_l}\big) + \epsilon_{\text{hoeff},p_R^k}.
\end{split}
\end{equation*}
Solving to find the mitigated value in the ideal case one obtains
\begin{equation*}
\begin{split}
    p_{mit,dep}(s^n_l) 
    & = \Bigg|\frac{p_R^k(s^n_l) +  \frac{N'_k}{N_k}\big((-1 + d_k )p_{unif} \big) }{\frac{1}{{m-n+k \choose k}} + \frac{N'_k}{N_k}d_k }\Bigg|.\\
\end{split}
\end{equation*}
However, the estimator of the mitigated value including error is
\begin{equation*}
\begin{split}
    \tilde{p}_{mit,dep}(s^n_l) & = \frac{p_R^k(s^n_l) -  \frac{N'_k}{N_k}\big((1-(d_k+\epsilon_{\text{hoeff},d_k}))p_{unif}+\epsilon_{bias,s^n_l}\big) + \epsilon_{\text{hoeff},p_R^k}}{\frac{1}{{m-n+k \choose k}} + \frac{N'_k}{N_k}(d_k+\epsilon_{\text{hoeff},d_k})}.
\end{split}
\end{equation*}
As, by definition, $1 \geq d_k \geq 0$, this means that $-d_k \leq \epsilon_{\text{hoeff},d_k} \leq 1 $. And the value of $\epsilon_{\text{hoeff},d_k}$ that maximally increases the error of the above quotient is then 
$\epsilon_{\text{hoeff},d_k}=-d_k$, this substitution removes the $d_k$ terms. From this point we can use Lemma \ref{miti_bias_error} and Lemma \ref{miti_stat_error}, with the one difference in the latter that the bias error $\epsilon_{bias,s^n_l}$ from Lemma \ref{dependency_bias_error} is used instead of $\epsilon_{bias,s^n_l}$. And then the total error of the mitigated probability, $\epsilon_{p_{mit,dep}(s^n_l)}$, can be upper bounded
\begin{equation*}
\begin{split}
\abs{ \epsilon_{p_{mit,dep}(s^n_l)} } & \leq 3{m-n+k \choose k} \bigg( \abs{\epsilon_{\text{hoeff},p_R^k}} + \frac{N'_k}{N_k}\abs{\epsilon_{bias,s^n_l}}\bigg) .\\ 
\end{split}
\end{equation*}
Now we use that
\begin{equation*}
|\epsilon_{\text{hoeff},p_R^k}| \leq D \frac{1}{{m-n+k \choose k}}\sqrt{\frac{{m \choose n}}{{n \choose k}(1-\eta)^{n-k}\eta^kN_{tot}}}, 
\end{equation*}
for some $D>1$ large enough so the inequality holds with high confidence. So that the overall error for linear solving with dependency is upper bounded 
\begin{equation*}
\begin{split}
\abs{ \epsilon_{p_{mit,dep}(s^n_l)} } & \leq  {m-n+k \choose k} \bigg(  D\frac{1}{{m-n+k \choose k}}\sqrt{\frac{{m \choose n}}{{n \choose k}(1-\eta)^{n-k}\eta^kN_{tot}}} + \frac{2N'_k}{N_k}\cdot \epsilon_{bias,s^n_l}\bigg) .\\ 
\end{split}
\end{equation*}
And so the condition to beat postselection is
 \begin{equation*}
        {m-n+k \choose k}\bigg(  \frac{1}{{m-n+k \choose k}}\sqrt{\frac{{m \choose n}}{{n \choose k}(1-\eta)^{n-k}\eta^kN_{tot}}} + \frac{2}{D}\frac{N'_k}{N_k}\cdot \epsilon_{bias,s^n_l} \bigg) \leq \sqrt{\frac{{m \choose n}}{(1-\eta)^n N_{tot}}},
    \end{equation*}
that is rearranged for the result, where we have the same constant $D$ for Hoeffding's inequality for postselection.
\end{proof}

\section{Extrapolation}
\label{appextrap}

\subsection{Statistical error for average absolute deviation estimator $\tilde{D}_k$}

The following lemma upper bounds the statistical error of the empirically computed absolute average deviation terms $\{\tilde{D}_k\}_{k=1}^K$, and is used in deriving the condition for linear extrapolation to outperform postselection in the next section. 
\begin{lemma} \label{abs_avg_dev_stat_error}
The statistical error of the average absolute deviation estimator $\tilde{D}_k$ is upper bounded
\begin{equation}
\abs{ \epsilon_{\text{hoeff},\tilde{D}_k} }  \in O\Bigg(\Bigg(\frac{1}{{m-n+k \choose k}^2{n \choose k}(1-\eta)^{n-k}\eta^kN_{tot}}\Bigg)^{1/4}\Bigg).
\end{equation}
\end{lemma}

\begin{proof}
 Let  $X$ be a random variable defined over the uniform distribution of bit strings $s^n_l$ that returns terms from the set $\{\tilde{p}_{R}^k(s^n_l)\}_l$ as values, $\mu:=\mathsf{E}_{s^n_l}(X)=p_{unif}=\frac{1}{{m \choose n}}$, with $\mathsf{E}_{s^n_l}(.)$ denoting the expectation value of $X$ over the uniform distribution of $s^n_l$, it is equal to $p_{unif}$. The average absolute deviation estimator, denoted $\{\tilde{D}_k\}_{k=0}^n$, is defined  
\begin{equation*}
\tilde{D}_k := {m \choose n}^{-1}\sum_{s^n_l \in \boldsymbol{S}}\abs{\tilde{p}_{R}^k(s^n_l) - p_{unif}}={D}_k + \epsilon_{\text{hoeff},\tilde{D}_k},
\end{equation*}
where  $\epsilon_{\text{hoeff},\tilde{D}_k}$ is the absolute average deviation statistical error. Let $M:=\mathsf{max}_{s^n_l}(X)$ , $m:=\mathsf{min}_{s^n_l}(X)$ , and note that $M \leq 1$ and $m \geq 0$. We can apply the Bhatia-Davis inequality \cite{bhatia_better_2000}
    \begin{equation*} 
    \mathsf{Var}(\{X\})  \leq (M-\mu)(\mu-m),
    \end{equation*}
with an upper bound of $M=1$, a lower bound of $m=0$, and the expected value $\mu=p_{unif}$, to obtain an upper bound on the variance of the random variable $X$ so that
    \begin{equation} \label{BD_for_recycled_estimators}
    \begin{split}
        \mathsf{Var}(X) &\leq (1-p_{unif})(p_{unif})\\
        &\leq p_{unif}.
        \end{split}
    \end{equation}
Now, by Jensen's inequality \cite{jensen_sur_1906} we know that $\mathsf{E}_{s^n_l}(|X-\mu|^2) \geq (\mathsf{E}_{s^n_l}|X-\mu|)^2$, therefore
    \begin{equation*}
        \mathsf{Var}(X) \geq \bigg({m \choose n}^{-1}\sum_{s^n_l \in \boldsymbol{S}}\abs{\tilde{p}_{R}^k(s^n_l) - p_{unif}}\bigg)^2,
    \end{equation*}
and as both sides of this inequality are positive and the square root operation is a monotonically increasing function for positive reals
 \begin{equation*}
        \mathsf{Var}(X)^{1/2} \geq {m \choose n}^{-1}\sum_{s^n_l \in \boldsymbol{S}}\abs{\tilde{p}_{R}^k(s^n_l) - p_{unif}}.
    \end{equation*}
And using the result from eqn. \ref{BD_for_recycled_estimators} we have
 \begin{equation*}
 \begin{split}
p_{unif}^{1/2} &\geq {m \choose n}^{-1}\sum_{s^n_l \in \boldsymbol{S}}\abs{\tilde{p}_{R}^k(s^n_l) - p_{unif}}\\
& \geq {D}_k + \epsilon_{\text{hoeff},\tilde{D}_k}.
\end{split}
\end{equation*}
Now because $\tilde{D}_k \geq 0$ 
then $ \epsilon_{\text{hoeff},\tilde{D}_k}\geq - D_k \geq - p_{unif}^{1/2}$. 
Also $p_{unif}^{1/2} \geq D_k$ and $p_{unif}^{1/2} \geq D_k + \epsilon_{\text{hoeff},\tilde{D}_k}$ mean that $2(p_{unif}^{1/2}  - {D}_k)\geq \epsilon_{\text{hoeff},\tilde{D}_k}$. 
And as $ {D}_k \geq 0$ this means $2p_{unif}^{1/2} \geq \epsilon_{\text{hoeff},\tilde{D}_k}$. 
Then we have that $2p_{unif}^{1/2} \geq \epsilon_{\text{hoeff},\tilde{D}_k}\geq - p_{unif}^{1/2}$ which gives 
\begin{equation} \label{first_bound_inequality_D_k}
|\epsilon_{\text{hoeff},\tilde{D}_k}| \leq 2p_{unif}^{1/2}.
\end{equation}
The absolute error for $\tilde{D}_k$ can also be upper bounded in terms of the statistical error of the recycled probabilities
\begin{equation} \label{prev_abs_error_bound}
\begin{split}
|\epsilon_{\text{hoeff},\tilde{D}_k}| &=  |\tilde{D}_k - {D}_k| \\
 &=  \bigg|{m \choose n}^{-1}\bigg(\sum_{s^n_l \in \boldsymbol{S}}\abs{p_{R}^k(s^n_l) + \epsilon_{\text{hoeff},\tilde{p}_R^k(s^{n}_l)}  - p_{unif}}  -  \sum_{s^n_l \in \boldsymbol{S}}\abs{p_{R}^k(s^n_l) - p_{unif}}\bigg)\bigg| 
 \\
 & \leq {m \choose n}^{-1} \bigg(\sum_{s^n_l \in \boldsymbol{S}} \bigg|\abs{p_{R}^k(s^n_l) + \epsilon_{\text{hoeff},\tilde{p}_R^k(s^{n}_l)}  - p_{unif}}-\abs{p_{R}^k(s^n_l) - p_{unif}}\bigg| \bigg) \\ &\leq {m \choose n}^{-1}\sum_{s^n_l \in \boldsymbol{S}}\big| \epsilon_{\text{hoeff},\tilde{p}_R^k(s^{n}_l)}  \big|.
 \end{split}
\end{equation}

Where the inequalities follow by using a triangle, then a reverse triangle inequality. 
All terms in the inequalities from eqns. \ref{first_bound_inequality_D_k} and \ref{prev_abs_error_bound} are positive reals, and so the inequalities may be combined to give
\begin{equation*}
\begin{split}
|\epsilon_{\text{hoeff},\tilde{D}_k}|^2 &\leq  2p_{unif}^{1/2} {m \choose n}^{-1}\sum_{s^n_l \in \boldsymbol{S}}\big| \epsilon_{\text{hoeff},\tilde{p}_R^k(s^{n}_l)}  \big|.\\
\end{split}
\end{equation*}
We now define $\mathcal{E}_{\text{hoeff},\tilde{p}_R^k}$ as the magnitude of the upper bound for the statistical error $|\epsilon_{\text{hoeff},\tilde{p}_R^k(s^{n}_l)}|$ which holds with high confidence from Hoeffdings inequality.
\begin{equation*}
\mathcal{E}_{\text{hoeff},\tilde{p}_R^k} \in O\bigg(\frac{1}{{m-n+k \choose k}}\sqrt{\frac{{m \choose n}}{{n \choose k}(1-\eta)^{n-k}\eta^kN_{tot}}}\bigg).
\end{equation*}
As  $|\epsilon_{\text{hoeff},\tilde{p}_R^k(s^{n}_l)}| \leq \mathcal{E}_{\text{hoeff},\tilde{p}_R^k}$  we can now write
\begin{equation*}
\begin{split}
|\epsilon_{\text{hoeff},\tilde{D}_k}|^2 &\leq 2p_{unif}^{1/2} \mathcal{E}_{\text{hoeff},\tilde{p}_R^k} ,\\
\end{split}
\end{equation*}
and, again using that the square root operation is a monotonically increasing function for positive reals, it follows that
\begin{equation*}
\begin{split}
|\epsilon_{\text{hoeff},\tilde{D}_k}| &\leq (2 \mathcal{E}_{\text{hoeff},\tilde{p}_R^k})^{1/2} p_{unif}^{1/4} \\
&\in O\big(\mathcal{E}_{\text{hoeff},\tilde{p}_R^k}^{1/2} p_{unif}^{1/4}\big) \\
&\in O\bigg(\bigg(\frac{1}{{m-n+k \choose k}}\bigg)^{1/2}\bigg(\frac{{m \choose n}}{{n \choose k}(1-\eta)^{n-k}\eta^kN_{tot}}\bigg)^{1/4}\bigg(\frac{1}{{m \choose n}}\bigg)^{1/4} \bigg)\\
&\in O\Bigg(\Bigg(\frac{1}{{m-n+k \choose k}^2{n \choose k}(1-\eta)^{n-k}\eta^kN_{tot}}\Bigg)^{1/4}\Bigg).
\end{split}
\end{equation*}
Which was the result to be proved.
\end{proof}

\subsection{Performance guarantee inequality for extrapolation with linear least squares \label{extrapolation_perf_g_app}}

Before giving a proof of Thm.  \ref{extraplin} we first state and prove two results that we will use. These are an upper bound for the statistical error for the gradient parameter. And an upper bound on the bias error introduce by using the average gradient parameter rather than the ideal gradient parameter. These we state as Lemmas \ref{gradient_stat_error} and \ref{gradient_bias_error}.

\begin{lemma} \label{gradient_stat_error}
The average gradient parameter statistical error $\epsilon_{hoeff,g}$ is such that 
\begin{equation}
\big|\epsilon_{\text{hoeff},g}\big| \in O\Bigg( \bigg(\frac{1}{(1-\eta)^{n}N_{tot}}\bigg)^{1/4}\Bigg).\\
\end{equation}
\end{lemma}

\begin{proof}
The bias error is $|g_{ideal - g_{avg}}|$.
First note that using the linear least squares method the average gradient parameter estimator may be written
\begin{equation*}
\tilde{g}_{avg} = \frac{3}{2n_d+1} \tilde{D}_0  - \frac{6}{n_d(n_d+1)(2n_d+1)} \sum_{i=1}^{n_d} \tilde{D}_ix_i,
\end{equation*}
and the average gradient parameter without statistical error 
\begin{equation*}
{g}_{avg} =  \frac{3}{2n_d+1} {D}_0  - \frac{6}{n_d(n_d+1)(2n_d+1)} \sum_{i=1}^{n_d} {D}_ix_i.
\end{equation*}

The statistical error for the gradient term is then
\begin{equation*}
\begin{split}
 \big|\epsilon_{\text{hoeff},g}\big| &= |\tilde{g}_{avg} -  {g}_{avg}| \\
 &= \big|\frac{3}{2n_d+1} \tilde{D}_0  - \frac{6}{n_d(n_d+1)(2n_d+1)} \sum_{i=1}^{n_d} \tilde{D}_ix_i -  \frac{3}{2n_d+1} {D}_0  + \frac{6}{n_d(n_d+1)(2n_d+1)} \sum_{i=1}^{n_d} {D}_ix_i \big|\\
 &= \big|\frac{3}{2n_d+1} ({D}_0 + \epsilon_{\text{hoeff},\tilde{D}_0})  - \frac{6}{n_d(n_d+1)(2n_d+1)} \sum_{i=1}^{n_d} ({D}_i+\epsilon_{\text{hoeff},\tilde{D}_i})x_i -  \frac{3}{2n_d+1} {D}_0  \\
 &\hspace{3em}+ \frac{6}{n_d(n_d+1)(2n_d+1)} \sum_{i=1}^{n_d} {D}_ix_i \big|\\
 &= \big|   \frac{3}{2n_d+1}  \epsilon_{\text{hoeff},\tilde{D}_0} + \frac{6}{n_d(n_d+1)(2n_d+1)}\sum_{i=1}^{n_d} \epsilon_{\text{hoeff},\tilde{D}_i} x_i\big|\\
&\leq    \frac{3}{2n_d+1}  \big| \epsilon_{\text{hoeff},\tilde{D}_0}\big| + \frac{6}{n_d(n_d+1)(2n_d+1)}\sum_{i=1}^{n_d}\big|\epsilon_{\text{hoeff},\tilde{D}_i} \big|x_i\\
&\leq    \frac{3}{2n_d+1}  A \bigg(\frac{1}{(1-\eta)^{n}N_{tot}}\bigg)^{1/4} + B\frac{6}{n_d(n_d+1)(2n_d+1)}\sum_{i=1}^{n_d} \bigg(\frac{1}{(1-\eta)^{n}N_{tot}}\bigg)^{1/4}x_i\\
& \in  O\Bigg( \bigg(\frac{1}{(1-\eta)^{n}N_{tot}}\bigg)^{1/4}\Bigg),\\
\end{split}
\end{equation*}
for $A,B >0$. Where a triangle inequality was used for the fourth line, and that $\big|\epsilon_{\text{hoeff},\tilde{D}_i} =B \big
(\frac{1}{(1-\eta)^{n}N_{tot}}\big)^{1/4}$ for $i\in \{0,\ldots,n_d\}$ from lemma \ref{abs_avg_dev_stat_error} was used to get the sixth line. 

\end{proof}

\begin{lemma} \label{gradient_bias_error}
The bias error $\epsilon_{\text{bias},g}$ from substituting the ideal gradient parameter, $g_{ideal}$, for the average gradient parameter, $g_{avg}$, is upper bounded
\begin{equation}
\big|\epsilon_{\text{bias},g}\big| \in O\bigg({m \choose n}^{-1/3}\bigg),
\end{equation}
 with confidence $1-O(m^{-n/3})$.
\end{lemma}

\begin{proof}
The absolute bias error from the use of the average gradient rather than the ideal gradient in the second iteration of linear least squares is $\epsilon_{\text{bias},g} = |g_{ideal} - g_{avg}|.$
The ideal $\alpha_{ideal,s^n_l}$ term that will result in the ideal probability being produced by the least squares method may be written
\begin{equation} \label{idealvalue}
\begin{split}
\alpha_{ideal,s^n_l} & = \frac{n_d+1}{2}g_{{ideal}} + \frac{1}{n_d}\sum_{i=1}^{n_d}( {p}^k_R(s^n_l) -p_{unif}),\\
\end{split}
\end{equation}
and because we are considering the ideal case it is true that $p(s^n_l) = \alpha_{ideal,s^n_l} + p_{unif}$. From $\alpha_{ideal,s^n_l} \leq p(s^n_l)$, and that
\begin{equation}
\begin{split}
\alpha_{ideal,s^n_l} & \geq \frac{n_d+1}{2}g_{{ideal}} + \frac{1}{n_d}\sum_{i=1}^{n_d}( -p_{unif}),\\
\end{split}
\end{equation}
it follows that $g_{ideal} \leq \alpha_{ideal,s^n_l} + p_{unif} \leq p(s^n_l) + p_{unif} $.
Now using the Bhatia-Davis inequality in a similar way as for the proof of Theorem \ref{nonHaarbound} in Section \ref{proof_nonHaarbound}, we obtain the statistical inequality
\begin{equation*}
\begin{split}
Pr\Big(\big|{p}(s^n_l)  - {m \choose n}^{-1}\sum_{s^n_l \in \boldsymbol{S}} {p}(s^n_l) \big|\leq \epsilon_{bias,s^n_l}\Big) & \geq 1-\frac{p_{unif}}{\epsilon^2_{bias,s^n_l}}.
\end{split}
\end{equation*}
This bound on the distance of ${p}(s^n_l)$ from $p_{unif}$ can be used to upper bound $g_{ideal}$, as $g_{ideal} \leq p(s^n_l) + p_{unif} $ so then
\begin{equation*}
\begin{split}
Pr\Big(g_{ideal} \leq  2p_{unif} +\epsilon_{bias,s^n_l}\Big) & \geq 1-\frac{p_{unif}}{\epsilon^2_{bias,s^n_l}}.
\end{split}
\end{equation*}
Setting $\epsilon_{bias,s^n_l}={m \choose n}^{-1/3}$ in the previous statistical inequality, it becomes
\begin{equation*}
\begin{split}
Pr\Big(g_{ideal} \leq  2p_{unif} +{m \choose n}^{-1/3}\Big) & \geq 1-{m \choose n}^{-1/3}.
\end{split}
\end{equation*}
Or in other words $g_{ideal} \in O({m \choose n}^{-1/3})$ with exponentially high confidence.
Now using Jensen's inequality,
\begin{equation*}
\begin{split}
{D}_0 & = \frac{1}{{|\boldsymbol{S}|}}\sum_{s^n_l \in \boldsymbol{S}}\abs{ p(s^n_l)  - p_{unif}} \\
&\leq \bigg(\frac{1}{{|\boldsymbol{S}|}}\sum_{s^n_l \in \boldsymbol{S}}\abs{ p(s^n_l)  - p_{unif}}^2\bigg)^{1/2} \\
&\leq ((1 - p_{unif})p_{unif})^{1/2},\\
\end{split}
\end{equation*}
where the final line follows from applying the Bhatia-Davis inequality.
From eqn.  \ref{average_gradient_expression} we have that ${g}_{\text{avg}} \leq D_0$, and so by the previous inequality it follows that ${g}_{\text{avg}} \leq {m \choose n}^{-1/2}$.
And so with exponentially high confidence $g_{ideal} \in O({m \choose n}^{-1/3})$, and ${g}_{\text{avg}} \leq {m \choose n}^{-1/2}$. 

The upper bound on the gradient bias error is then
\begin{equation*}
\big|g_{ideal}  - {g}_{\text{avg}}\big| \in O\bigg({m \choose n}^{-1/3}\bigg),
\end{equation*}
with confidence $1-O(m^{-n/3})$.
\end{proof}

We now derive the condition for when recycling mitigation with linear extrapolation can outperform postselection.

\vspace{1em}
\begin{theorem} \label{extraplin}
\textit{The condition:}
\begin{equation*}
\begin{split}
A\frac{n_d+1}{2}\bigg( \bigg(\frac{1}{(1-\eta)^{n}N_{tot}}\bigg)^{1/4} +  {m \choose n}^{-1/3} \bigg) + B \frac{n}{m-n+1}\sqrt{\frac{{m \choose n}}{n(1-\eta)^{n-1}\eta N_{tot}}} \leq C\sqrt{\frac{{m \choose n}}{(1-\eta)^n N_{tot}}}, 
\end{split}
\end{equation*}
\textit{with $A,B,C > 0$, defines a sampling regime where the sum of the worst-case statistical error and bias error of linear extrapolation recycling mitigation using the least squares method is less than the worst-case statistical error of postselection.}
\end{theorem}

\begin{proof}

The linear extrapolation method consists of two iterations of linear least squares, 
with $n_d \in \{n_d \in \mathbb{Z}^+ | n_d<n\}$ data points used in both iterations. 
In the first iteration of least squares the data set $\{k,\tilde{D}_k\}_{k=1}^{n_d}$ is used to compute the gradient parameter $g_{\text{avg}}$. 
Where $\tilde{D}_k$ is the estimated absolute average deviation for the $n-k$--photon recycled distribution from statistics where $k$ photons were lost. 
In the second iteration of least squares the data set $\{k,\tilde{p}^k_R(s^n_l)-p_{unif}\}_{k=1}^{n_d}$ is used to fit a linear model function that depends on $g_{\text{avg}}$, in order to generate the mitigated output for each $s^n_l$. 

In the first iteration of least squares the linear model function used is
\begin{equation*}
f(x_i,g_{\text{avg}}) = -\tilde{g}_{\text{avg}}x_i + \tilde{D}_0.
\end{equation*}
For which the residuals are of the form
\begin{equation*}
\begin{split}
r_i & = y_i - ( -\tilde{g}_{\text{avg}}x_i + \tilde{D}_0 ), \\
\end{split}
\end{equation*}
The sum of the squared residuals is a function of $\alpha_{s_n}$, and may be written
\begin{equation*}
\begin{split}
S(\tilde{g}_{\text{avg}}) & = \sum_{i=1}^{n_d} r_i ^2 \\
& = \sum_{i=1}^{n_d} (y_i - ( -\tilde{g}_{\text{avg}}x_i + \tilde{D}_0 )) ^2 \\
\end{split}
\end{equation*}
The optimal solution may be found by taking the derivative of S with respect to $\tilde{g}_{\text{avg}}$
\begin{equation*} 
\begin{split}
0 = \frac{\text{d}S}{\text{d}\tilde{g}_{\text{avg}}} & = 2 \sum_{i=1}^{n_d} (y_i - ( -\tilde{g}_{\text{avg}}x_i + \tilde{D}_0 ))(x_i), \\
\end{split}
\end{equation*}
and then solving
\begin{equation*} 
\begin{split}
 \sum_{i=1}^{n_d} \tilde{g}_{\text{avg}}x_i^2   = \sum_{i=1}^{n_d} \tilde{D}_0 x_i - \sum_{i=1}^{n_d} y_ix_i.\\
\end{split}
\end{equation*}
Simplifying and substituting $\tilde{D}_i$ terms, this gives the optimal average gradient parameter as
\begin{equation} \label{average_gradient_expression}
\begin{split}
 \tilde{g}_{\text{avg}}   = \frac{3}{2n_d+2} \tilde{D}_0  - \frac{6}{n_d(n_d+1)(2n_d+1)} \sum_{i=1}^{n_d} \tilde{D}_ix_i.\\
\end{split}
\end{equation}
Using lemma \ref{gradient_stat_error}, the statistical error of the average gradient parameter estimator, $\epsilon_{\text{hoeff},g}$, can be upper bounded
\begin{equation*}
\begin{split}
\big|\epsilon_{\text{hoeff},g}\big| &\leq |\tilde{g}_{\text{avg}} - {g}_{\text{avg}}| \\
&\in O\Bigg( \bigg(\frac{1}{(1-\eta)^{n}N_{tot}}\bigg)^{1/4}\Bigg).\\
\end{split}
\end{equation*}

The average gradient parameter estimator, $\tilde{g}_{\text{avg}}$, is used in a second iteration of least squares to generate the mitigated outputs. The data set used to generate the mitigated output for each $s^n_l$ is defined as $\{x_i,y_i\}_{i=1}^{n_d}:=\{k,\tilde{p}^k_R(s^n_l)-p_{unif}\}_{k=1}^{n_d}$. 
Each output string is assigned an individual model function of the form 
\begin{equation*}
f_{s^n_l}(x_i,\alpha_{s^n_l})= \text{sgn}(-y_1)\tilde{g}_{\text{avg}}x_i + \alpha_{s^n_l}.
\end{equation*}
This function models the decay of the terms $\{\tilde{p}^k_R(s^n_l)-p_{unif}\}_{k=1}^{n_d}$ from the data set towards zero with increasing $k$.
If $p^k_R(s^n_l)<p_{unif}$ then the gradient term (namely $\text{sgn}(-y_1)$) will be positive, and if $p^k_R(s^n_l)>p_{unif}$ the gradient will be negative.
In the following analysis we will assume the sign, and therefore the gradient, is negative ($ \text{sgn}(-y_1)=-1$) so that the model function is
\begin{equation*}
f_{s^n_l}(x_i,\alpha_{s^n_l})= -\tilde{g}_{\text{avg}}x_i + \alpha_{s^n_l},
\end{equation*}
noting that the analysis is the same for the case of positive sign. 
The optimal prefactor variable $\alpha_{s^n_l}$ to fit the model function to the data set is computed using the residuals 
\begin{equation*}
\begin{split}
r_i & = y_i - ( -\tilde{g}_{\text{avg}}x_i + \alpha_{s^n_l}), \\
\end{split}
\end{equation*}
The sum of the squared residuals is a function of $\alpha_{s^n_l}$, and may be written
\begin{equation*}
\begin{split}
S(\alpha_{s^n_l}) & = \sum_{i=1}^{n_d} r_i ^2\\
& = \sum_{i=1}^{n_d} (y_i - ( -\tilde{g}_{\text{avg}}x_i + \alpha_{s^n_l})) ^2 \\
\end{split}
\end{equation*}
The value of $\alpha_{s^n_l}$ that minimises $S$ may be found by taking the derivative of S with respect to $\alpha_{s^n_l}$, and solving 
\begin{equation*}
\begin{split}
0 = \frac{\text{d}S}{\text{d}\alpha_{s^n_l}} & =  -2\sum_{i=1}^{n_d} (y_i - ( -g_{\text{avg}}x_i + \alpha_{s^n_l}))
\end{split}
\end{equation*}
Which gives the optimal value as 
\begin{equation} \label{optimalvalue}
\begin{split}
\alpha_{s^n_l} & = \frac{n_d+1}{2}\tilde{g}_{\text{avg}} + \frac{1}{n_d}\sum_{i=1}^{n_d} y_i .\\
\end{split}
\end{equation}
And substituting in the data values this is then
\begin{equation}
\begin{split}
\alpha_{s^n_l} & = \frac{n_d+1}{2}\tilde{g}_{\text{avg}} + \frac{1}{n_d}\sum_{k=1}^{n_d} (\tilde{p}^k_R(s^n_l)-p_{unif}) .\\
\end{split}
\end{equation}
Now $\tilde{g}_{\text{avg}}$ can be decomposed into an ideal gradient parameter, $g_{ideal}$, the statistical error for the estimator, $\epsilon_{\text{hoeff},g}$, and the bias error from substituting the ideal gradient parameter for the average gradient parameter, $\epsilon_{\text{bias},g}$. 
The ideal gradient parameter, $g_{ideal}$, defines the gradient that results in the function outputting the ideal probability at $x_i=0$ and $\alpha_{s^n_l} = p(s^n_l)$. 
And the recycled probability estimators $\tilde{p}^k_R(s^n_l)$ can be decomposed into recycled probabilities, ${p}^k_R(s^n_l)$, and their associated statistical errors, $\epsilon_{\text{hoeff},\tilde{p}_R^k(s^{n}_l)}$. So that
\begin{equation} 
\begin{split}
\alpha_{s^n_l} 
& = \frac{n_d+1}{2}\big(g_{{ideal}} + \epsilon_{\text{hoeff},g} + \epsilon_{\text{bias},g}\big) + \frac{1}{n_d}\sum_{i=1}^{n_d}({p}^k_R(s^n_l) + \epsilon_{\text{hoeff},\tilde{p}_R^k(s^{n}_l)}-p_{unif}) .\\
\end{split}
\end{equation}
While the ideal prefactor variable can be stated as
\begin{equation} \label{idealvalue}
\begin{split}
\alpha_{ideal,s^n_l} & = \frac{n_d+1}{2}g_{{ideal}} + \frac{1}{n_d}\sum_{i=1}^{n_d}({p}^k_R(s^n_l) -p_{unif}).\\
\end{split}
\end{equation}
And the mitigated output is $p_{mit}(s^n_l) = p_{unif} + \alpha_{s^n_l}$.
The magnitude of the error for the mitigated output is then
\begin{equation} 
\begin{split}
| p_{mit}(s^n_l) - p(s^n_l)| &= |p_{unif}+\alpha_{s^n_l} - (p_{unif}+\alpha_{ideal,s^n_l})|\\
 & = |\alpha_{s^n_l} - \alpha_{ideal,s^n_l}|\\
& \leq \frac{n_d+1}{2}( |\epsilon_{\text{hoeff},g}| + |\epsilon_{\text{bias},g}|) + |\epsilon_{\text{hoeff},\tilde{p}_R^1(s^{n}_l)}|\\
& \leq A \frac{n_d+1}{2}\bigg( \bigg(\frac{1}{(1-\eta)^{n}N_{tot}}\bigg)^{1/4} +   {m \choose n}^{-1/3}\bigg) + B\frac{n}{m-n+1}\sqrt{\frac{{m \choose n}}{n(1-\eta)^{n-1}\eta N_{tot}}}.\\
\end{split}
\end{equation}
Where to get the second line a triangle inequality and that in the high loss regime $$|\epsilon_{\text{hoeff},\tilde{p}_R^1(s^{n}_l)}| \geq \frac{1}{n_d}\sum_{i=1}^{n_d} |\epsilon_{\text{hoeff},\tilde{p}_R^k(s^{n}_l)}|$$ were used, and lemmas \ref{gradient_stat_error}, \ref{gradient_bias_error} and \ref{recycled_stat_error_bound}, along with error term $\epsilon_{bias}$ that is bounded according to the inequality $|\epsilon_{bias}| \in  O({m \choose n}^{-1/3})$ derived in Appendix \ref{gradient_bias_error}, were used to get the fourth line.
And so the condition to beat postselection is
\begin{equation*}
\begin{split}
& A\frac{n_d+1}{2}\bigg( \bigg(\frac{1}{(1-\eta)^{n}N_{tot}}\bigg)^{1/4} +   {m \choose n}^{-1/3}\bigg) + B\frac{n}{m-n+1}\sqrt{\frac{{m \choose n}}{n(1-\eta)^{n-1}\eta N_{tot}}} \leq C \sqrt{\frac{{m \choose n}}{(1-\eta)^n N_{tot}}}.
\end{split}
\end{equation*}

\end{proof}

We now provide a short proof of the following theorem from the main results section:

\textbf{Theorem \ref{thexpextrap}.}
\textit{
Assume that Conjectures \ref{conjbias} and \ref{conjstatextr} are true. Furthermore, assume that we are in the case where $|p_{mit}(s)-p_{id}(s)|=\epsilon_{stat}(m,n,k,\eta,N_{tot})+ \kappa(n) \frac{1}{{m \choose n}}$, with $\kappa(n)<1$, and $ |p_{post}(s)-p_{id}(s)|= \sqrt{\frac{1}{(1-\eta)^nN_{tot}}}$, where $p_{mit}(s)$ is the output probability of exponential extrapolation recycling mitigation. 
Then exponential extrapolation outperforms postselection for up to sample number $N_{tot} \ \in O\Big(\frac{{m \choose n}^2}{\kappa(n)^2 (1-\eta)^n}\Big)$ and up to additive error $\epsilon \in O\Big(\frac{\kappa(n)}{{m \choose n}}\Big)$.
}

\begin{proof}
Conjecture \ref{conjstatextr} being true implies that  $$\sqrt{\frac{1}{(1-\eta)^nN_{tot}}}-\epsilon_{stat}(m,n,k,\eta,N_{tot}) \geq c\sqrt{\frac{1}{(1-\eta)^nN_{tot}}},$$ for some $0<c<1$ for sufficiently large $n$. 
Thus, the condition that exponential extrapolation outperforms postselection, given by the following inequality holding $|p_{mit}-p_{id}(s)| \leq |p_{post}(s)-p_{id}(s)|$, is satisfied if $\kappa(n) \frac{1}{{m \choose n}} \leq c\sqrt{\frac{1}{(1-\eta)^nN_{tot}}}.$ 
The  bound on $N_{tot}$ in the Theorem immediately follows. 
Furthermore, the  bound on $\epsilon$ is directly obtained by replacing  $(1-\eta)^nN_{tot} \in O\Big(\frac{{m \choose n}^2}{\kappa(n)^2}\Big)$ in $\frac{1}{\sqrt{(1-\eta)^nN_{tot}}}$, then using Stirling's approximation.
    \end{proof}

\section{Deterministic upper bound}
\label{appbreaking}
\subsection{Proof of Thm.  \ref{exp_barrier_thm}}

We now prove Thm.  \ref{exp_barrier_thm} from the main text:

\vspace{1em}

\textit{For the  class of unitary matrices $U$ with submatrices $A$ such that $p_{max}:=\mathsf{max}_{s^n_l}(p(s^n_l))=|\mathsf{Per}(A)|^2$, and where these matrices $A$ satisfy $\frac{h^A_{\infty}}{\|A\|_2}\ll 1$, the bias error $\epsilon_{bias,s^n_l}$ is bounded}
\begin{equation*}
\epsilon_{bias,s^n_l} \in O(e^{-0.00002n}).
\end{equation*}

\begin{proof}
Let $p_{max}=\mathsf{max}_{s^n_l}(p(s^n_l))$, where $\max_{s^n_l}(.)$  denotes the maximum over the $n$--photon probability distribution.
It is immediate to see that the interference term satisfies
$I_{s^n_l,k} \leq p_{max}$. 
Furthermore,  $p_{max}=|\mathsf{Per}(A)|^2$, where $A:=(a_{ij})_{i,j \in \{1, \dots, n\}}$ is a submatrix of the linear optical unitary $U$ \cite{aaronson_bosonsampling_2014}. 
Note that $\|A\|_2 \leq \|U\|_2 \leq 1$, where $\|.\|_2$ denotes the spectral norm \cite{aaronson_generalizing_2012}.
Let $h^A_{\infty}:=\frac{1}{n} \sum_{i=1, \dots, n} \|\mathbf{A}_i\|_{\infty}$, where $\mathbf{A}_i$ is the $i$th row of $A$, and $\|\mathbf{A_i}\|_{\infty}:=\mathsf{max}_j(|a_{ij}|)$. 
Let $\|A\|_2 \leq T$. Using Thm.  2 in \cite{berkowitz_stability_2018}, we have that
\begin{equation*}
    |\mathsf{Per}(A)| \leq 2T^ne^{-0.00001(1-\frac{h^A_{\infty}}{\|A\|_2})^2n}.
\end{equation*}
Since $\|A\|_{2} \leq 1$, we immediately have
\begin{equation*}
|\mathsf{Per}(A)| \leq 2e^{-0.00001(1-\frac{h^A_{\infty}}{\|A\|_2})^2n}.
\end{equation*}
For classes of matrices where $\frac{h^A_{\infty}}{\|A\|_2}\ll 1$, we have that
\begin{equation*}
|\mathsf{Per}(A)| \leq 2e^{-0.00001n},
\end{equation*}
and therefore that
\begin{equation*}
p_{max} \leq  4e^{-0.00002n}.
\end{equation*}
Now, we have that
$ -p_{unif} \leq I_{s_l^n,k}-p_{unif} \leq p_{max}-p_{unif}$, and therefore
the bias error is given by 
$$\epsilon_{1}=|I_{s_l^n,k}-p_{unif} | \leq \mathsf{max}\{|p_{max}-p_{unif}|,p_{unif}\}.$$
In the case where $|p_{max}-p_{unif}| < p_{unif}$, then we have an exponentially small bias error $\epsilon_{bias,s^n_l} \leq p_{unif} \leq \frac{1}{{m \choose n}} \leq e^{-Cnlogn}$, for some $C>0$, 
when $m \in \Omega (n^{5.1})$ which is the boson sampling regime. 
Now, when $|p_{max}-p_{unif}| > p_{unif}$, we can use the bound for $p_{max}$ we obtained, and show that
$$\epsilon_{bias,s^n_l} \leq |p_{max}-p_{unif}| \leq  |4e^{-0.00002n}-e^{-Clog(n)}|,$$
and $|4e^{-0.00002n}-e^{-Clog(n)}| \in O(e^{-0.00002n})$.
\end{proof}

\section{ Richardson extrapolation methods for photon loss mitigation}
\label{appnogo}

In this section, we prove Thm.  \ref{thnogozne} as well as provide similar evidence (that the methods present no advantage over postselection)  for different methods of performing extrapolation with increasing rates of photon loss.

\subsection*{First method of extrapolation at various noise rates}
Let $m$ be the number of modes of a linear optical circuit which can implement any $m \times m$ unitary. Into this circuit we input $n$ photons in the first $n$ modes. The notation $|n_1,\dots,n_m\rangle$ denotes a state with $n_i$ photons in the $i^{th}$ mode, where $i \in \{1,\dots,m\}$. $\eta \in ]0,1[$ is the probability to lose a photon in any given mode, and is the same for all modes. 
We want to compute a specific marginal probability $p(n_1...n_l|n)$ 
with $l \leq m$, and $\sum_{i=1,\dots,l}n_i=c$, with $c \leq n$. The $|n$ indicates that we are computing the \emph{ideal} marginal probability, when no photon is lost.
Let $p(n_1 \dots n_l \cap j)$ be the probability of observing the output $(n_1,\dots,n_l)$ and detecting $j$ photons in all $m$ modes, with $j \in \{c,\dots,n\}$. When postselecting on no photons being lost, we are computing $$p(n_1 \dots n_l \cap n)=(1-\eta)^np(n_1 \dots n_l |n).$$ 
However, if we compute $p(n_1 \dots n_l)$ without caring  about whether no photon is lost, 
we end up computing 
\begin{equation}
\label{eq1}
    p_{\eta}(n_1...n_l)=\sum_{i=0, \dots, n-c}(1-\eta)^{n-i}\eta^{i}p(n_1 \dots n_l | n-i).
\end{equation}
Extrapolation techniques consist of estimating $p_{\eta}(n_1 \dots n_l)$ for different values of $\eta$, then deducing from these an estimate of $p(n_1 \dots n_l |n)$. One example of how this can be done is the Richardson extrapolation technique, at the heart of the zero noise extrapolation (ZNE) approach. Rather interestingly, we will show that these techniques offer no advantage over post-selection in terms of estimating $p(n_1 \dots n_l |n)$. 
Let $\alpha_i:=p(n_1 \dots n_l|n-i)$, we can then write
\begin{equation}
\label{eq2}
p_{\eta}(n_1 \dots n_l):=\sum_{i=0, \dots, n-c}(1-\eta)^{n-i}\eta^{i}\alpha_i.
\end{equation}
A natural way to estimate $\alpha_0$ through extrapolation would be to compute an estimate $\tilde{p}_{\eta}(n_1 \dots n_l)$ of $p_{\eta}(n_1 \dots n_l)$ for $n-c+1$ values $\eta_i$ of $\eta$, with $i \in \{0, \dots, n-c\}$, and (by convention) $\eta_{i+1} > \eta_i$. We will deal with additive error estimates, that is, $\tilde{p}_{\eta}(n_1 \dots n_l)=p_{\eta}(n_1 \dots n_l)+ \epsilon$, with $|\epsilon| \leq \epsilon_{max}$, and $\epsilon_{max} \in [0,1]$ is the additive error estimate. Note that an $\epsilon_{max}$ additive error estimate  $\tilde{p}_{\eta}(n_1 \dots n_l)$ can be obtained with high probability from $O(\frac{1}{\epsilon_{max}^2})$ runs of the boson sampling device, by Hoeffding's inequality \cite{hoeffding_collected_1994}. 

Performing the above mentioned extrapolation strategy, we obtain the following set of equations, written in matrix form, to be solved for obtaining an estimate of $\alpha_0$

\begin{multline}
    \label{eq3}
    \begin{pmatrix}
(1-\eta_0)^n  & \eta_0(1-\eta_0)^{n-1} & \dots & \eta_0^{n-c}(1-\eta_0)^k\\
. & . & \dots & . \\
. & . & \dots & . \\
(1-\eta_{n-c})^n  & \eta_{n-c}(1-\eta_{n-c})^{n-1} & \dots & \eta_{n-c}^{n-c}(1-\eta_{n-c})^c
\end{pmatrix} \begin{pmatrix}
    \tilde{\alpha}_0 \\
    .\\
    .\\
    \tilde{\alpha}_{n-c}
\end{pmatrix}
= \begin{pmatrix}
        p_{\eta_0}(n_1 \dots n_l) \\
        .
        \\
        . \\
        p_{\eta_{n-c}}(n_1 \dots n_l)
    \end{pmatrix}
    + \begin{pmatrix}
        \epsilon_0
     \\
     . \\
     . \\
     \epsilon_{n-c}
    \end{pmatrix},
\end{multline}
with $|\epsilon_{i}| \leq \epsilon_{max}$, and $\tilde{\alpha}_i$ is an estimate of $\alpha_{i}$ obtained from using the estimates $\tilde{p}_{\eta_i}(n_1 \dots n_l)=p_{\eta_i}(n_1 \dots n_l)+ \epsilon_i$
Let, 
$$L:=\begin{pmatrix}
(1-\eta_0)^n  & \eta_0(1-\eta_0)^{n-1} & \dots & \eta_0^{n-c}(1-\eta_0)^c\\
. & . & \dots & . \\
. & . & \dots & . \\
(1-\eta_{n-c})^n  & \eta_{n-c}(1-\eta_{n-c})^{n-1} & \dots & \eta_{n-c}^{n-c}(1-\eta_{n-k})^c
\end{pmatrix}.$$
We can rewrite $L$ as
\begin{equation}
    \label{eqL}
    L=DW,
\end{equation}
with 
\begin{equation}
    \label{eqD}
    D=\begin{pmatrix}
        (1-\eta_0)^n & 0 & 0 & \dots 0 \\
        0 & (1-\eta_1)^n& 0 & \dots 0 \\
        . & . & . & . \\
        . & . & . & . \\
        0 & 0 & 0 & \dots (1-\eta_{n-c})^n
    \end{pmatrix}
\end{equation}
a diagonal matrix, and
\begin{equation}
    \label{eqW}
    W=\begin{pmatrix}
        1 & \frac{\eta_0}{1-\eta_0} & \dots & (\frac{\eta_0}{1-\eta_0})^{n-c} \\
        1 & \frac{\eta_1}{1-\eta_1} & \dots & (\frac{\eta_1}{1-\eta_1})^{n-c} \\
        . & . & . & . \\
        . & . & . & . \\
        1 & \frac{\eta_{n-c}}{1-\eta_{n-c}} & \dots & (\frac{\eta_{n-c}}{1-\eta_{n-c}})^{n-c} \\
    \end{pmatrix} 
\end{equation}
a Vandermonde matrix \cite{gautschi_inverses_1978}. 
Multiplying both sides of eqn.  \ref{eq3} by $L^{-1}$ and invoking standard matrix multiplication rules, we obtain
\begin{equation}
    \label{eqextrap}
    \tilde{\alpha}_0=\sum_{i=0, n-c}L^{-1}_{1i}p_{\eta_i}(n_1 \dots n_l)+\sum_{i=0, \dots, n-c}L^{-1}_{1i}\epsilon_{i},
\end{equation}
 where $L^{-1}_{1i}$ is the element of the first row and $i$th column of $L^{-1}$. Note that, by construction of our method, $\eta_i \neq \eta_{i+1}$, then $D$ is invertible, and so is $W$ \cite{gautschi_inverses_1978}, thus $L^{-1}$ always exists. Furthermore, $\alpha_0=\sum_{i=0, n-k}L^{-1}_{1i}p_{\eta_i}(n_1 \dots n_l)$, since exactly computing the probabilities $p_{\eta_i}(n_1 \dots n_l)$ will lead to an exact computation of $\alpha_0$. Therefore, the error associated to our extrapolation technique is given by
 \begin{equation}
 \label{eqerrextr}
     E_{extrap}:=|\sum_{i=0, \dots, n-k}L^{-1}_{1i}\epsilon_{i}|.
 \end{equation}
Note that the overall sample complexity of the extrapolation protocol is $O(\frac{n-c+1}{\epsilon^2_{max}})$. 
For post-selection, an $\epsilon_{max}$ additive error estimate 
 $\tilde{p}_{\eta}(n_1 \dots n_l \cap n)$ requires $O(\frac{1}{\epsilon^2_{max}})$ samples, and can be performed for $\eta=\eta_0$, that is without artificially increasing loss.  From eqn.  \ref{eq1}, we see that post-selection induces an error of 
 \begin{equation}
     \label{eqerrpostsel}
     E_{post}:= |\frac{\epsilon}{\sqrt{(1-\eta_0)^n}}| \leq \frac{\epsilon_{max}}{\sqrt{(1-\eta_0)^n}},
 \end{equation}
with $|\epsilon| \leq \epsilon_{max}.$

In the remainder of this section, we will compute an upper bound for $E_{extrap}$, and show that this upper bound is greater than the corresponding upper bound for $E_{post}$ shown in eqn.  \ref{eqerrpostsel}. This gives strong evidence that the error induced by extrapolation is higher than that of postselection for a comparable number of samples, and therefore that extrapolation offers no advantage over postselection. 
Although the upper bound argument we show gives strong evidence that extrapolation techniques are not advantageous when compared to postselection, we will provide further evidence that this is the case. In particular, for a random distribution of errors $\{\epsilon_i\}$ with $|\epsilon_i| \leq \epsilon_{max}$, we numerically show that the condition 
\begin{equation}
    E_{extrap} > \frac{\epsilon_{max}}{(1-\eta_0)^n}
\end{equation}
is never violated after some value of $n$, confirming our analytical results.
We start by the analytical upper bound argument. By a triangle inequality,
\begin{equation*}
E_{extrap}\leq ||L^{-1}||_{\infty}\epsilon_{max},
\end{equation*}
with 
$||L^{-1}||_{\infty}:=\mathsf{max}_i\sum_j| L^{-1}_{ij}|$.
Also,
\begin{equation*}
||L^{-1}||_{\infty} \leq ||D^{-1}||_{\infty}||W^{-1}||_{\infty}.
\end{equation*}
From the definition of $||.||_{\infty}$, we can directly show
$$||D^{-1}||_{\infty}=\frac{1}{(1-\eta_{n-c})^n}.$$
For $||W^{-1}||_{\infty}$, we can use an upper bound on the norm of Vandermonde matrices shown in \cite{gautschi_inverses_1978}
$$||W^{-1}||_{\infty} \leq \mathsf{max}_i \prod_{ j \neq i = 0, \dots, n-c}\frac{1+\frac{\eta_i}{1-\eta_i}}{|\frac{\eta_i}{1-\eta_i}-\frac{\eta_j}{1-\eta_j}|}.$$
Thus,
$$||L^{-1}||_{\infty} \leq\frac{1}{(1-\eta_{n-c})^n}  \mathsf{max}_i \prod_{ j \neq i = 0, \dots, n-c}\frac{1+\frac{\eta_i}{1-\eta_i}}{|\frac{\eta_i}{1-\eta_i}-\frac{\eta_j}{1-\eta_j}|}.$$
and
$$E_{extrap} \leq \epsilon_{max}\frac{1}{(1-\eta_{n-c})^n}  \mathsf{max}_i \prod_{ j \neq i = 0, \dots, n-c}\frac{1+\frac{\eta_i}{1-\eta_i}}{|\frac{\eta_i}{1-\eta_i}-\frac{\eta_j}{1-\eta_j}|}.$$
We will now show that the upper bound on $E_{extrap}$ is greater than that of $E_{post}$. This is quantified through the following theorem .
\begin{theorem}
\label{th1}
  For $\eta_0, \dots, \eta_{n-c}$ with $\eta_{i+1} > \eta_{i}$ and $\eta_{0} \geq 0$, $\eta_{n-c} <1$ the following holds
  \begin{equation}
  \label{eqrtp}
      \frac{1}{(1-\eta_{n-c})^n}\mathsf{max}_i \prod_{ j \neq i = 0, \dots, n-c}\frac{1+\frac{\eta_i}{1-\eta_i}}{|\frac{\eta_i}{1-\eta_i}-\frac{\eta_j}{1-\eta_j}|} \geq \frac{1}{(1-\eta_0)^n}.
  \end{equation}
\end{theorem}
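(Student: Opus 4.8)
The plan is to eliminate the awkward dependence on the Vandermonde nodes $x_i := \frac{\eta_i}{1-\eta_i}$ by a direct substitution, after which the inequality reduces to pure monotonicity. First I would record the two elementary identities $1 + x_i = \frac{1}{1-\eta_i}$ and $x_i - x_j = \frac{\eta_i-\eta_j}{(1-\eta_i)(1-\eta_j)}$, both valid because $\eta_0 \geq 0$ and $\eta_{n-c} < 1$ force every $1-\eta_i \in (0,1]$ to be positive (so the absolute values below behave as expected). Feeding these into a single factor of the product gives
\begin{equation*}
\frac{1+x_i}{|x_i-x_j|} = \frac{(1-\eta_i)^{-1}}{\frac{|\eta_i-\eta_j|}{(1-\eta_i)(1-\eta_j)}} = \frac{1-\eta_j}{|\eta_i-\eta_j|},
\end{equation*}
and hence, multiplying over the $n-c$ indices $j \neq i$,
\begin{equation*}
\prod_{j\neq i}\frac{1+\frac{\eta_i}{1-\eta_i}}{|\frac{\eta_i}{1-\eta_i}-\frac{\eta_j}{1-\eta_j}|} = \prod_{j\neq i}\frac{1-\eta_j}{|\eta_i-\eta_j|}.
\end{equation*}
The observation that the $(1-\eta_i)$ factors cancel \emph{completely}, removing the node $i$ from all numerators, is the one genuine insight of the argument; everything afterwards is routine.

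Next I would lower bound the maximum over $i$ by simply discarding the maximum and evaluating the single term $i = n-c$. Since $\eta_{n-c}$ is the largest node we have $|\eta_{n-c}-\eta_j| = \eta_{n-c}-\eta_j$ for every $j < n-c$, so this term equals $\prod_{j=0}^{n-c-1}\frac{1-\eta_j}{\eta_{n-c}-\eta_j}$. The key inequality is then immediate: because $\eta_{n-c} < 1$ we have $\eta_{n-c}-\eta_j < 1-\eta_j$, so each factor $\frac{1-\eta_j}{\eta_{n-c}-\eta_j}$ exceeds $1$, whence the whole product exceeds $1$.

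Combining these, the left-hand side of (\ref{eqrtp}) is at least $\frac{1}{(1-\eta_{n-c})^n}$ multiplied by a quantity strictly greater than $1$; and since $\eta_{n-c} \geq \eta_0$ yields $\frac{1}{(1-\eta_{n-c})^n} \geq \frac{1}{(1-\eta_0)^n}$, the claimed bound follows (the degenerate case $n-c = 0$, a single node with empty product, gives equality). I do not anticipate any real obstacle once the cancellation is spotted, since the remaining steps are monotonicity of $\eta \mapsto 1-\eta$ and a trivial term-by-term comparison; the only care needed is to confirm all denominators are positive so the absolute values can be dropped. I would additionally note that the inequality is in fact strict whenever $n-c \geq 1$, which is exactly the regime relevant to the no-advantage conclusion for extrapolation.
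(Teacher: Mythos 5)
Your proof is correct, and it takes a genuinely different route from the paper's. The paper lower bounds the maximum by the $i=0$ term, then bounds every factor $\frac{1+x_0}{x_j-x_0}$ by the single worst factor $\frac{1+x_0}{x_{n-c}-x_0}$, obtaining $\bigl(\tfrac{1+x_0}{x_{n-c}-x_0}\bigr)^{n-c}$; it then grinds through an algebraic rearrangement, reducing the claim to $\frac{(1-\eta_0)^c}{(1-\eta_{n-c})^c}\bigl(\frac{1-\eta_0}{\eta_{n-c}-\eta_0}\bigr)^{n-c}\geq 1$, which follows from $1-\eta_0\geq 1-\eta_{n-c}$ and $1-\eta_0\geq\eta_{n-c}-\eta_0$. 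You instead exploit the exact cancellation $\frac{1+x_i}{|x_i-x_j|}=\frac{1-\eta_j}{|\eta_i-\eta_j|}$ (which I have verified), select the $i=n-c$ term, and observe that each factor $\frac{1-\eta_j}{\eta_{n-c}-\eta_j}$ exceeds $1$ because $\eta_{n-c}<1$, so the entire Vandermonde product is at least $1$ and the theorem collapses to the monotonicity statement $\frac{1}{(1-\eta_{n-c})^n}\geq\frac{1}{(1-\eta_0)^n}$. Your argument is shorter, avoids the final algebraic identity entirely, and makes the structural reason for the inequality transparent: the conditioning of the Vandermonde system can only amplify the $\frac{1}{(1-\eta_{n-c})^n}$ prefactor, which already dominates the postselection error. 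The paper's version has the minor virtue of isolating the explicit quantity $\bigl(\frac{1-\eta_0}{\eta_{n-c}-\eta_0}\bigr)^{n-c}$, which blows up as the noise levels cluster and thus quantifies how much worse extrapolation can get, but as a proof of the stated inequality yours is cleaner. Your remarks on positivity of all denominators and on the degenerate case $n-c=0$ are also accurate.
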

\begin{proof}
    First, note that $\frac{\eta}{1-\eta}$ is a monotonically increasing function of $\eta$, thus $\frac{\eta_{i}}{1-\eta_{i}} < \frac{\eta_{i+1}}{1-\eta_{i+1}}$. This allows us to lower bound $\mathsf{max}_i \prod_{ j \neq i = 0, \dots, n-k}\frac{1+\frac{\eta_i}{1-\eta_i}}{|\frac{\eta_i}{1-\eta_i}-\frac{\eta_j}{1-\eta_j}|}$ as
    \begin{equation*}
        \mathsf{max}_i \prod_{ j \neq i = 0, \dots, n-c}\frac{1+\frac{\eta_i}{1-\eta_i}}{|\frac{\eta_i}{1-\eta_i}-\frac{\eta_j}{1-\eta_j}|} \geq \frac{(1+\frac{\eta_0}{1-\eta_0})^{n-c}}{(\frac{\eta_{n-c}}{1-\eta_{n-c}}-\frac{\eta_{0}}{1-\eta_0})^{n-c}}.
    \end{equation*}
    Our strategy is to show that the following holds
    \begin{equation}
    \label{eqrtp2}
       \frac{1}{(1-\eta_{n-c})^n} \frac{(1+\frac{\eta_0}{1-\eta_0})^{n-c}}{(\frac{\eta_{n-c}}{1-\eta_{n-c}}-\frac{\eta_0}{1-\eta_0})^{n-c}} \geq \frac{1}{(1-\eta_0)^n}
    \end{equation}
    eqn.  \ref{eqrtp2} being true implies that eqn.  \ref{eqrtp} is also true, and thus is sufficient for proving Thm.  \ref{th1}.
eqn.  \ref{eqrtp2} can be rewritten as 
    \begin{equation}
    \label{eqrtp3}
    \frac{(1-\eta_0)^n}{(1-\eta_{n-c})^n} \frac{(1-\eta_0)^{n-c}}{(\frac{\eta_{n-c}}{1-\eta_{n-c}}-\frac{\eta_0}{1-\eta_0})^{n-c}} \geq 1.
    \end{equation}
    rewriting the left hand side of the above eqn. 
    $$\frac{(1-\eta_0)^n}{(1-\eta_{n-c})^n} \frac{(1+\frac{\eta_0}{1-\eta_0})^{n-c}}{(\frac{\eta_{n-c}}{1-\eta_{n-c}}-\frac{\eta_0}{1-\eta_0})^{n-c}}=\frac{(1-\eta_0)^c}{(1-\eta_{n-c})^c}\big(\frac{(1-\eta_0)(1+\frac{\eta_0}{1-\eta_0})}{(1-\eta_{n-c})(\frac{\eta_{n-c}}{1-\eta_{n-c}}-\frac{\eta_0}{1-\eta_0})}\big)^{n-c},$$
    and observing that 
    $$\frac{(1-\eta_0)(1+\frac{\eta_0}{1-\eta_0})}{(1-\eta_{n-c})(\frac{\eta_{n-c}}{1-\eta_{n-c}}-\frac{\eta_0}{1-\eta_0})}=\frac{1-\eta_0}{\eta_{n-c}-\eta_0},$$
    and plugging this into eqn.  (\ref{eqrtp3}) we obtain
    \begin{equation}
    \label{eqrtp4}
    \frac{(1-\eta_0)^c}{(1-\eta_{n-c})^c}(\frac{1-\eta_0}{\eta_{n-c}-\eta_0})^{n-c} \geq 1.
    \end{equation}
Now,
$1-\eta_0 \geq 1-\eta_{n-c}$ and $1-\eta_0 \geq \eta_{n-c}-\eta_0$ since $\eta_0 < \eta_{n-c}<1$. This implies that eqn.  \ref{eqrtp4} is true, and thus eqn. s \ref{eqrtp3} and \ref{eqrtp2} hold, and therefore  Thm.  \ref{th1} is proved.
    \end{proof}

\subsection*{Second method of extrapolation at various noise rates}
Another possible extrapolation technique can be performed by considering eqn.  (\ref{eq1}) where the $(1-\eta_i)^{n-i}$  terms are expanded  in order to obtain
\begin{equation}
\label{eqsecexpansion}
    p_{\eta}(n_1 \dots n_l)=\sum_{i=0, \dots,n}\beta_i\eta^i.
\end{equation}
Where $\beta_0=\alpha_0$, and $\beta_i$s are linear combinations of the $\alpha_i$s defined previously. The extrapolation procedure proceeds in a similar manner to that described above, but now we compute $p_{\eta}(n_1 \dots n_l)$ for $n+1$ values of loss $\eta_n>\eta_{n-1} \dots > \eta_0$ to solve for the coefficients $\{\beta_i \}$, with the matrix $L$ in this case  given by
\begin{equation}
    \label{eqWprt2}
    L=\begin{pmatrix}
        1 & \eta_0 & \dots & \eta^n_0 \\
        1 & \eta_1 & \dots & \eta^n_1 \\
        . & . & . & . \\
        . & . & . & . \\
        1 & \eta_{n} & \dots & \eta^n_{n} \\
    \end{pmatrix} .
\end{equation}
$L$ is a Vandermonde matrix, and we can directly use the result of \cite{gautschi_inverses_1978} to upper bound $||L^{-1}||_{\infty}$ as
\begin{equation*}
    ||L^{-1}||_{\infty} \leq \mathsf{max}_i \prod_{ j \neq i = 0, \dots, n}\frac{1+\eta_i}{|\eta_i-\eta_j|}.
\end{equation*}
Therefore, 
$$E_{extrap} \leq \epsilon_{max}\mathsf{max}_i \prod_{ j \neq i = 0, \dots, n}\frac{1+\eta_i}{|\eta_i-\eta_j|}$$
We will now prove that the upper bound on $E_{extrap}$ is larger than that of $E_{post}$ for this method of extrapolation, giving strong evidence that this technique offers no advantage over post-selection. This amounts to proving the following.
\begin{theorem}
    \label{th2}
    For $\eta_0, \dots, \eta_{n}$ with $\eta_{i+1} > \eta_{i}$ and $\eta_{0} \geq 0$, $\eta_{n} <1$ the following holds
  \begin{equation}
      \mathsf{max}_i \prod_{ j \neq i = 0, \dots, n}\frac{1+\eta_i}{|\eta_i-\eta_j|} \geq \frac{1}{(1-\eta_0)^n}.
  \end{equation}
\end{theorem}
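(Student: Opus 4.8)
The plan is to follow the same template used to establish Theorem \ref{th1}: replace the maximum over $i$ by a convenient single choice of index, then reduce the resulting inequality to an elementary termwise comparison. Since the left-hand side is a maximum over $i \in \{0,\dots,n\}$ of $\prod_{j \neq i}\frac{1+\eta_i}{|\eta_i-\eta_j|}$, any particular choice of $i$ furnishes a lower bound. I would choose $i=0$, the smallest loss value, so that $\eta_j > \eta_0$ for every $j \geq 1$ and the absolute values disappear, i.e. $|\eta_0 - \eta_j| = \eta_j - \eta_0$. This yields
\begin{equation*}
\mathsf{max}_i \prod_{ j \neq i = 0, \dots, n}\frac{1+\eta_i}{|\eta_i-\eta_j|} \geq \prod_{j=1}^{n}\frac{1+\eta_0}{\eta_j-\eta_0}.
\end{equation*}

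It then suffices to prove $\prod_{j=1}^{n}\frac{1+\eta_0}{\eta_j-\eta_0} \geq \frac{1}{(1-\eta_0)^n}$, which after multiplying through by $(1-\eta_0)^n$ is equivalent to $\prod_{j=1}^{n}\frac{1-\eta_0^2}{\eta_j-\eta_0}\geq 1$. I would establish this factor by factor: each factor satisfies $\frac{1-\eta_0^2}{\eta_j-\eta_0}\geq 1$ because $\eta_j-\eta_0 < 1-\eta_0 \leq 1-\eta_0^2$, where the first inequality uses $\eta_j<1$ and the second uses $\eta_0^2\leq\eta_0$ (valid for $\eta_0\in[0,1]$, since $\eta_0(1-\eta_0)\geq 0$). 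A product of nonnegative terms each at least $1$ is at least $1$, which closes the argument.

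The point requiring the most care, exactly as in Theorem \ref{th1}, is selecting the index that makes the termwise reduction go through; the choice $i=0$ works precisely because it both eliminates the absolute values and makes the numerator $(1+\eta_0)(1-\eta_0)$ dominate each denominator $\eta_j-\eta_0$. I do not anticipate a genuine obstacle beyond this bookkeeping, since the inequality here is looser than the one in Theorem \ref{th1}: the Vandermonde nodes are the $\eta_i$ directly rather than $\eta_i/(1-\eta_i)$, so the denominators simplify and no change of variable is needed. As with the first method, once $E_{extrap}$ is controlled by $\|L^{-1}\|_\infty \epsilon_{max}$ via $\|L^{-1}\|_\infty \leq \mathsf{max}_i\prod_{j\neq i}\frac{1+\eta_i}{|\eta_i-\eta_j|}$, this theorem shows the extrapolation upper bound exceeds the postselection error $\frac{\epsilon_{max}}{\sqrt{(1-\eta_0)^n}}$ (it in fact exceeds the larger quantity $\frac{\epsilon_{max}}{(1-\eta_0)^n}$), giving strong evidence that the second extrapolation method offers no advantage over postselection.
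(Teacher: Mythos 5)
Your proof is correct and takes essentially the same route as the paper: both select the index $i=0$ to drop the absolute values and then reduce to the elementary facts $\eta_j-\eta_0<1-\eta_0$ and $1+\eta_0\geq 1$ (the paper first bounds every denominator uniformly by $\eta_n-\eta_0$, whereas you compare factor by factor, a purely cosmetic difference). No gap.
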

\begin{proof}
    We note that $$\mathsf{max}_i \prod_{ j \neq i = 0, \dots, n}\frac{1+\eta_i}{|\eta_i-\eta_j|} \geq \frac{(1+\eta_0)^n}{(\eta_n-\eta_0)^n}, $$
and that  $\frac{(1+\eta_0)^n}{(\eta_n-\eta_0)^n} \geq \frac{1}{(1-\eta_0)^n}$ since $\frac{1}{\eta_n-\eta_0} > \frac{1}{1-\eta_0}$ and $1+\eta_0 > 1$. This completes the proof.  
\end{proof}

For this technique as well we numerically compute the number of violations of $E_{extrap}> \frac{\epsilon_{max}}{(1-\eta_0)^n}$ and plot these in Figure \ref{figerr2}. We took $\epsilon_{max}=0.01$, $\eta_0=0.01$, $\eta_{n}=0.95$, and $\eta_i$ for $0<i<n$ equally spaced. We varied the value of $n$
between 3 and 16.  For each value of $n$ we performed 3000 runs, where at each run we took  $n+1$ values of $\{\epsilon_i\}$
chosen uniformly randomly from $[-\epsilon_{max}, \epsilon_{max}]$. As can be observed in Figure \ref{figerr2}, the number of violations approaches zero with increasing $n$, confirming that this extrapolation performs worse than post-selection after some value of $n$. A similar behaviour is observed for different values of $\epsilon_{max}, \eta_0$ and $\eta_{n}$.
\begin{figure}[h]
\includegraphics[scale=0.3]{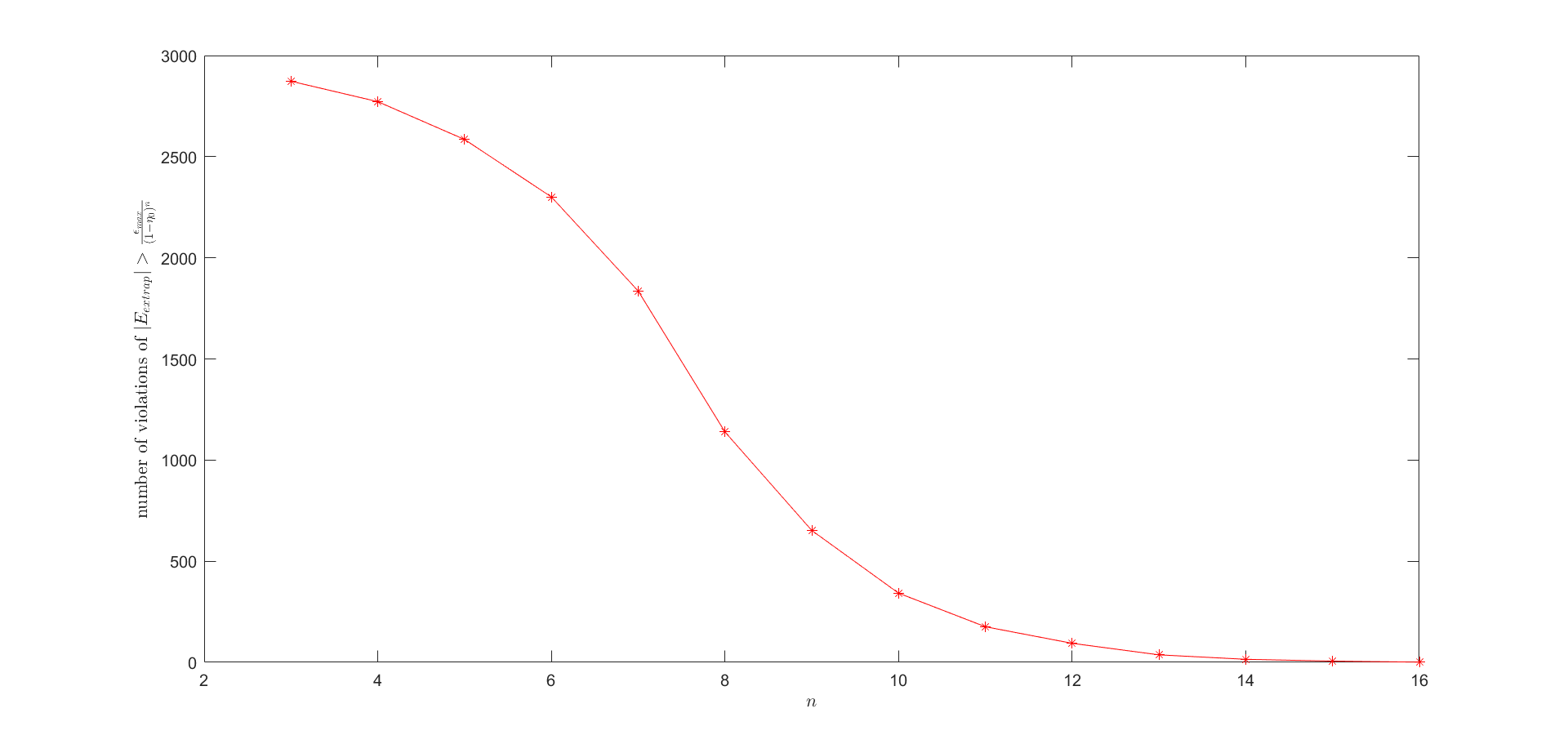}
\centering
\caption{Number of violations of $|E_{extrap}| \geq \frac{\epsilon_{max}}{(1-\eta_0)^n}$ plotted versus $n$ (see main text).}
\label{figerr2}
\end{figure}

\bigskip

\subsection*{Proof of Thm.  \ref{thnogozne}}

We now prove Thm.  \ref{thnogozne} from the main text:

\vspace{1em}

\textit{For all $n \geq n_0$, with $n_0$ a positive integer, $\mathsf{M}(E_{extrap}) \geq \frac{\epsilon_{max}}{\sqrt{(1-\eta)^n}}$.}

A third extrapolation technique which can be used is the Richardson extrapolation technique, at the heart of the zero noise extrapolation method for mitigating qubit errors, which has gained widespread use and is at the heart of ZNE \cite{endo_hybrid_2021}. This method uses the expansion of eqn.  (\ref{eqsecexpansion}) with $\eta_i=c_i\eta$, for $i \in \{0, \dots, n\}$ with $\eta \in [0,1]$, $c_0=1$, and $c_i$
are positive reals satisfying $c_{i+1}>c_i$. The Richardson extrapolation method consists of computing an estimate $\tilde{p}(n_1 \dots n_l |n)$ of $p(n_1 \dots n_l |n)$ as follows
\begin{equation*}
    \tilde{p}(n_1 \dots n_l|n)=\sum_{i=0 \dots n} \gamma_i \tilde{p}_{\eta_i}(n_1 \dots n_l),
\end{equation*}
where, $\gamma_i:=(-1)^n\prod_{j \neq i}\frac{c_j}{c_i-c_j}$. It can be shown that \cite{temme_error_2017} $\sum_{i=0, \dots,n} \gamma_i=1$ and $\sum_{j=0, \dots, n} \gamma_ic^j_i=0$ for $j=1, \dots, n$. Furthermore, it can also be shown that $p(n_1 \dots n_l|n)=\sum_{i=0, \dots n}\gamma_ip_{\eta_i}(n_1 \dots n_l)$. Thus, the error associated with this technique is given by
\begin{equation*}
    E_{extrap}:=|\sum_{j=0, \dots n}\gamma_i\epsilon_i|.
\end{equation*}
In the remainder of this paragraph, we will show that $\gamma_{i-1}=L^{-1}_{1i}$ for $i \in \{1, \dots, n+1 \}$ where $L$ is the Vandermonde matrix of eqn.  (\ref{eqWprt2}) with $\eta_i=c_i\eta$. This means that the results of the previous section follow through, and therefore that Richardson extrapolation offers no advantage over post-selection.

Let $\mathbf{B}:= \begin{pmatrix}
    \beta_0 \\
    \beta_1 \\
    . \\
    . \\
    .\\
    \beta_n
\end{pmatrix},$
and $\mathbf{P}=\begin{pmatrix}
    p_{\eta_0}(n_1 \dots n_l) \\
    p_{\eta_1} (n_1 \dots n_l) \\
    . \\
    . \\
    .\\
    p_{\eta_n}(n_1 \dots n_l)
\end{pmatrix}.$ 
In the case of a perfect computation of the probabilities $p_{\eta_i}(n_1 \dots n_l)$ (i.e $\epsilon_i=0$), the extrapolation technique based on the matrix $L$ of eqn.  (\ref{eqWprt2}) 
amounts to computing $\beta_0$
from the following system of eqn. s 
\begin{equation*}
    \mathbf{B}=L^{-1}\mathbf{P}.
\end{equation*}
Invoking standard multiplication rules for the above equation, one obtains
\begin{equation*}
    p(n_1 \dots n_l)=\beta_{0}=\sum_{i=1, \dots n+1}L^{-1}_{1i}p_{\eta_{i-1}}(n_1 \dots n_l).
\end{equation*}
Plugging the expansion $p_{\eta_j}(n_1 \dots n_l)=\sum_{i= 0 , \dots n} \beta_i \eta^j_i$ and $\eta_i=c_i\eta $ into the above equation, one obtains
\begin{equation*}
    \beta_0=\sum_{i=1, \dots n+1}L^{-1}_{1i}\beta_0+ \sum_{j=1 \dots, n} \sum_{i=1 \dots n+1}L^{-1}_{1i}c^j_{i-1}\beta_j\eta^j.
\end{equation*}
By a direct identification of the left hand side of the above equation with its right hand side, we find that
$\sum_{i=1 \dots n+1}L^{-1}_{1i}=1$, and $\sum_{i=1, \dots, n+1}L^{-1}_{1i}c^j_{i-1}=0$ for $j \in \{1 \dots n\}$. These sets of equations are exactly those defining the coefficients $\gamma_i$ (as seen previously) and allowing to uniquely determine them, thus we can make the identification $L^{-1}_{1i}=\gamma_{i-1}$ for $i \in \{1, \dots, n\}$ and our result is demonstrated.

\section{Lyapunov bound}
\label{applyapunov}
\subsection{Prospects of improving bounds on the interference terms}
\label{sec:prospects}

A natural question is whether one can tighten the  bounds, beyond what is guaranteed from Thm.  \ref{Haarbound}, on the error term $\epsilon_{bias,s^n_l}$ incurred by replacing the interference term $I_{s^n_l,k}$ with its average over the Haar measure.

In this section, we explore one attempt to tighten the bound on $\epsilon_{bias,s^n_l}$. We will work in the no-collision regime where $m\gg n^2$, so that we can approximate output probabilities as moduli squared of permanents of i.i.d. Gaussian matrices, with appropriate rescaling.
In this regime, the interference term $I_{s^n_l,k}$ can be thought of as a sum of $N'_k$ random variables 
\begin{equation}
\label{eqtighten1}
X_i:=\frac{|\mathsf{Per}(G_i)|^2}{m^n},
\end{equation}
 for $i \in \{1, \dots, N'_k\}$, where $G_i$ is an i.i.d. Gaussian $n \times n$ matrix with entries chosen independently from $\mathcal{N}_\mathbb{C}(0,1)$. Note that for  some $i$ and $j$, it is possible that $G_i$ or $G_j$ share some rows in common, or are even equal. This corresponds to the fact that the interference term is in general a sum of probabilities of output bit strings sharing some overlap (meaning their associated permanents have common rows \cite{aaronson_computational_2011}). This means that the random variables $X_i$ need not all be independent. Nevertheless,  we will look at a sum of independent random variables $X_i$ having the form of eqn.  (\ref{eqtighten1}). Our reason for working with an independent sum is that it simplifies the analysis, while  giving an intuition about what to expect in  the more general case of possibly dependent $X_i$'s. We will further comment on this point below.
 
 In the remainder of this section, we will show that  a $\mathsf{poly}(n)$ sized sum of independent $X_i$'s distributed according to eqn.  (\ref{eqtighten1}) \emph{does not} verify a sufficient condition, the Lyapunov condition  \cite{janson_central_2021}, for the central limit theorem (CLT) to hold as $n \to \infty$. Our proof relies on a conjecture of \cite{nezami_permanent_2021} on the expectation value $\mathbf{E}(X^{t}_i)$ over the set of Gaussian matrices $G_i$, where $t \in \mathbb{N}$ and $t>2$.   We also provide numerical evidence that the distribution of this sum is indeed not a normal distribution. Our result shows that it is non-trivial to improve the bound on $\epsilon_{bias,s^n_l}$ by trying to link $I_{s^n_l,k}$ to some known probability distribution, which was our initial motivation for trying to prove a CLT convergence result. 

 Let $Y_i$, $i \in \{1, \dots, N\}$ be real, independent and identically distributed, random variables satisfying $\mathbf{E}(Y_i)=0$, where $\mathbf{E}(.)$  denotes expectation value over the distribution of the $Y_i$'s. Let 
 \begin{equation}
     \label{eqtighten2}
     S_N:=\sum_{i=1, \dots, N}Y_i,
 \end{equation}
 and
 \begin{equation}
 \label{eqtighten3}
     \sigma_N:=\sqrt{\mathbf{E}(S^2_N)}.
 \end{equation}
 Furthermore, for any $r > 2$ and $r \in \mathbb{N}$ let
 \begin{equation}
     \label{eqtighten4}
     \mathbf{E}(|Y|^r):=\mathbf{E}(|Y_i|^r),
 \end{equation}
 for all $i \in \{1, \dots, N\}$.
 The Lyapunov condition can be stated in this case as \cite{janson_central_2021}
 \begin{theorem} {\normalfont (Lyapunov condition)}
 \label{thlyapunov}
 If for some fixed $r>2$, \begin{equation}
 \label{eqlyapunov}
     N\frac{\mathbf{E}(|Y|^r)}{\sigma_N^r} \to 0,
 \end{equation} then as $N \to \infty$
 \begin{equation}
     \frac{S_N}{\sigma_N} \to^{d} \mathcal{N}(0,1).
 \end{equation}
 Where $\to^{d}$ denotes convergence  of the distribution of $\frac{S_N}{\sigma_N}$.
 \end{theorem}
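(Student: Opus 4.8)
The plan is to prove the stated result as an instance of the classical Lyapunov central limit theorem, working through characteristic functions and L\'evy's continuity theorem. Write $\phi(t):=\mathbf{E}\big(e^{itS_N/\sigma_N}\big)$ for the characteristic function of the normalised sum. Since convergence in distribution to $\mathcal{N}(0,1)$ is, by L\'evy's continuity theorem, equivalent to the pointwise convergence $\phi(t)\to e^{-t^2/2}$ for every $t\in\mathbb{R}$, it suffices to establish this limit. I introduce the rescaled summands $Z_i:=Y_i/\sigma_N$, so that $S_N/\sigma_N=\sum_{i=1}^N Z_i$ with the $Z_i$ independent, mean zero, and $\sum_{i=1}^N\mathbf{E}(Z_i^2)=1$ by the definition of $\sigma_N$ in eqn.~(\ref{eqtighten3}). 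Independence then gives the factorisation $\phi(t)=\prod_{i=1}^N\phi_{Z_i}(t)$, reducing the task to controlling a product of individual characteristic functions.

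First I would convert the Lyapunov hypothesis (\ref{eqlyapunov}) into the Lindeberg condition, which is the form best suited to the remainder estimates. Because $r>2$, for any $\epsilon>0$ a Markov-type bound gives
\[
\frac{1}{\sigma_N^2}\sum_{i=1}^N\mathbf{E}\big(Y_i^2\,\mathbf{1}_{|Y_i|>\epsilon\sigma_N}\big)\leq\frac{1}{\epsilon^{r-2}\,\sigma_N^r}\sum_{i=1}^N\mathbf{E}(|Y_i|^r),
\]
and the right-hand side vanishes by (\ref{eqlyapunov}); hence the Lindeberg condition holds. A first consequence that I would record separately is the uniform negligibility $\max_i\mathbf{E}(Z_i^2)\to 0$, which is needed later to pass from a product of quadratic factors to the Gaussian exponential.

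Next comes the Taylor-remainder step. Using the elementary bound $\big|e^{ix}-(1+ix-\tfrac{x^2}{2})\big|\leq\min\big(\tfrac{|x|^3}{6},\,x^2\big)$ together with $\mathbf{E}(Z_i)=0$, one obtains
\[
\Big|\phi_{Z_i}(t)-\big(1-\tfrac{t^2}{2}\mathbf{E}(Z_i^2)\big)\Big|\leq\mathbf{E}\Big(\min\big(\tfrac{|tZ_i|^3}{6},(tZ_i)^2\big)\Big),
\]
and I would split each expectation over $\{|Z_i|\leq\epsilon\}$ and its complement to bound the sum over $i$ by $C\big(|t|^3\epsilon+|t|^2\sum_i\mathbf{E}(Z_i^2\mathbf{1}_{|Z_i|>\epsilon})\big)$, the second term vanishing by Lindeberg and the first being arbitrarily small with $\epsilon$. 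The telescoping lemma for products of complex numbers of modulus at most one, $\big|\prod_i a_i-\prod_i b_i\big|\leq\sum_i|a_i-b_i|$, applied with $a_i=\phi_{Z_i}(t)$ and $b_i=1-\tfrac{t^2}{2}\mathbf{E}(Z_i^2)$, then shows $\big|\phi(t)-\prod_i(1-\tfrac{t^2}{2}\mathbf{E}(Z_i^2))\big|\to 0$. Finally, negligibility of the individual variances together with $\sum_i\mathbf{E}(Z_i^2)=1$ and $\log(1-x)=-x+O(x^2)$ yields $\prod_i(1-\tfrac{t^2}{2}\mathbf{E}(Z_i^2))\to e^{-t^2/2}$, and L\'evy's continuity theorem delivers the claim.

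The main obstacle is the interplay between uniform negligibility and the summed remainder: one must make the third-order remainder survive summation without assuming a uniformly bounded third moment, which is precisely why routing through Lindeberg rather than a naive order-three expansion is essential when $r$ is only slightly above $2$. I would also note that in the strictly i.i.d. case as stated, condition (\ref{eqlyapunov}) collapses to $\mathbf{E}(|Y|^r)/(N^{r/2-1}\sigma^r)\to 0$, which holds automatically once the $r$-th moment is finite, so the theorem there reduces to the classical Lindeberg--L\'evy CLT; the genuine content relevant to the later application lies in the triangular-array regime, where the law of $Y_i$ itself varies with $n$ and the variances must be tracked explicitly.
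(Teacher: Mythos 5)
Your proof is correct, but there is nothing in the paper to compare it against: Theorem \ref{thlyapunov} is not proven in the paper at all --- it is quoted as a standard result from the reference on the central limit theorem, and the paper's own technical work (Theorem \ref{thlyapunoviidgauss} in appendix \ref{applyapunov}) consists of showing that the hypothesis \emph{fails} for the permanental random variables, not of proving the implication itself. What you supply is the classical textbook argument: reduce Lyapunov to Lindeberg via the Markov-type bound $\mathbf{E}(Y_i^2\mathbf{1}_{|Y_i|>\epsilon\sigma_N})\leq \epsilon^{2-r}\sigma_N^{2-r}\mathbf{E}(|Y_i|^r)$, then run the characteristic-function proof of the Lindeberg--Feller theorem with the $\min(|x|^3/6,x^2)$ Taylor remainder, the telescoping product inequality, and L\'evy continuity. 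All the steps are sound; the only detail worth making explicit is that the telescoping lemma requires $|1-\tfrac{t^2}{2}\mathbf{E}(Z_i^2)|\leq 1$, which you should justify (for fixed $t$ and $N$ large) from the uniform negligibility $\max_i\mathbf{E}(Z_i^2)\to 0$ that you already establish.

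Your closing observation is the most valuable part and is worth retaining: as literally stated for a fixed i.i.d.\ law, condition (\ref{eqlyapunov}) reads $\mathbf{E}(|Y|^r)/(N^{r/2-1}\sigma^r)\to 0$ and holds automatically whenever the $r$-th moment is finite, so the theorem would be vacuous in that reading. The way the paper actually uses it, the law of $Y_i$ varies with $n$ (the Gaussian-permanent moments grow like $\beta(r)^n$ relative to $\sigma_N^r$) and $N=\mathsf{poly}(n)$, i.e.\ the relevant statement is the triangular-array version. Your proof goes through verbatim in that setting since it never uses that the law is fixed, but flagging this mismatch between the statement and its application is a genuine clarification of the paper's presentation.
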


 For dependent identically distributed random variables, which correspond to the probabilities constituting the interference term, the Lyapunov condition becomes stricter to verify. In particular, for a specific type of dependence, $M(n)$-dependent random variables \cite{janson_central_2021},  the numerator in eqn.  (\ref{eqlyapunov}) is multiplied by $M(n)^{r-1}$,  where $M(n)>1$ and $M(n) \in \mathbb{N}$ is an integer whose value can depend on $n$ \cite{janson_central_2021}. One would expect, as mentioned earlier, that the non-convergence results established here for the case of independent random variables hold as well for the dependent case, although we do not formally prove this.

 A final ingredient we will use is the following conjecture appearing in \cite{nezami_permanent_2021} (Section 4.8, Conjecture 4.12), and whose truth is supported by  numerical simulations performed in \cite{nezami_permanent_2021}.
  \newtheorem{conjecture}{conjecture}
  \begin{conj}
  \label{conjperm}
     For $n,t > 2$ , $n,t \in \mathbb{N}$
     \begin{equation}
     \label{conjnezami}
     \mathbf{E}_{G \in \mathcal{G}_{n \times n}}(|\mathsf{Per}(G)|^{2t}) \in \Theta\bigg(\frac{(n!)^{2t}(t!)^{2n}}{(nt)!}\bigg).
     \end{equation}
  \end{conj}
 Where $\mathrm{E}_{G \in \mathcal{G}_{n \times n}}(.)$
is the expectation value over the set of $n \times n$
Gaussian matrices with entries from $\mathcal{N}_{\mathbb{C}}(0,1)$. 
For simplicity, we will henceforth denote $\mathbf{E}_{G \in \mathcal{G}_{n \times n}}(.)$ as $\mathbf{E}(.)$. Let 
\begin{equation}
\label{eqdefy}
Y_i:=X_i-\frac{n!}{m^n},
\end{equation}
where $X_i$ are as defined in eqn.  (\ref{eqtighten1}), and are independent, identically distributed random variables, $i \in \{1, \dots, N\}$. It is immediate to observe that $\mathbf{E}(Y_i)=0$. We  show the following.

\begin{theorem}
\label{thlyapunoviidgauss}
For all $r>2$, $n \gg 1$, and for the independent, identically distributed random variables $Y_i$ defined in eqn.  (\ref{eqdefy}), we have that
\begin{equation}
     N\frac{\mathbf{E}(|Y|^r)}{\sigma_N^r} \geq A\frac{\beta(r)^{n}}{N^{\frac{r}{2}-1}},
\end{equation}
where $\beta(r)>1$ is a positive real number dependent on $r$, and $A>0$ a constant.
\end{theorem}

When $N=\mathsf{poly}(n)$, Thm.  \ref{thlyapunoviidgauss} shows that Lyapunov condition is not satisfied. Although this condition is sufficient, but not necessary, for the CLT to hold we provide numerical evidence that $N=n^2$, $N=n^3$, and $N=n^4$ sized sums of i.i.d. Gaussian matrices do not converge to a normal distribution for $n \in \{2,3,4,5\}$. We plot our results  for $N=n^3$ and $n \in \{2,3,4,5\}$ in Figures \ref{fig:lyapunovfigs} (a)-(d).

It is interesting to note that in Thm.  \ref{thlyapunoviidgauss}, when $N\in O(\mathsf{exp}(n))$, the lower bound on the Lyapunov condition can converge to 0 as $n \to \infty$. Marginal probabilities corresponding to a large number of lost photons are sums of, possibly exponential, numbers of probabilities having the form of eqn.  (\ref{eqtighten1}). Our result provides  evidence that these marginals are asymptotically normally distributed, and therefore efficient to sample from. Although low order as well as high loss marginals of boson sampling are known to be easy to compute and sample from \cite{oszmaniec_classical_2018, villalonga_efficient_2022,aaronson_bosonsampling_2016}, our result might provide a new perspective on simulating lossy boson sampling marginals by linking these to normally distributed random variables.

\begin{figure}[]
\centering
\subfigure[]{
\includegraphics[width=.48\textwidth, height=5.05cm]{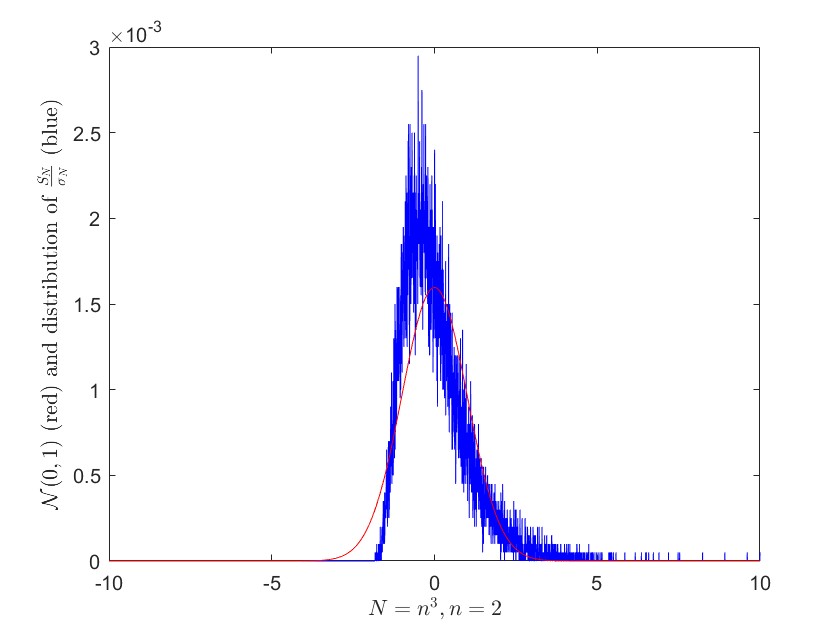}
}
\subfigure[]{
\includegraphics[width=.48\textwidth, height=4.95cm]{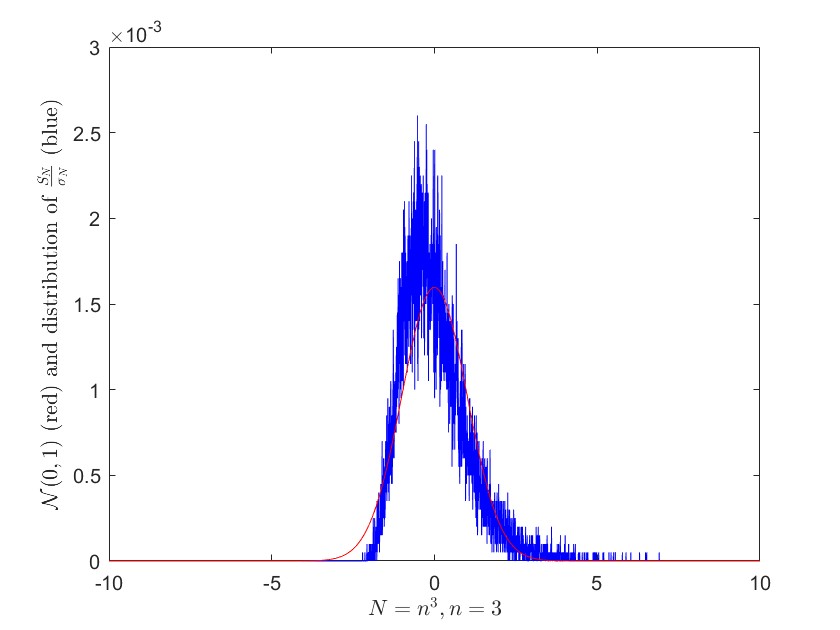}
}
\subfigure[]{
\includegraphics[width=.48\textwidth, height=5.05cm]{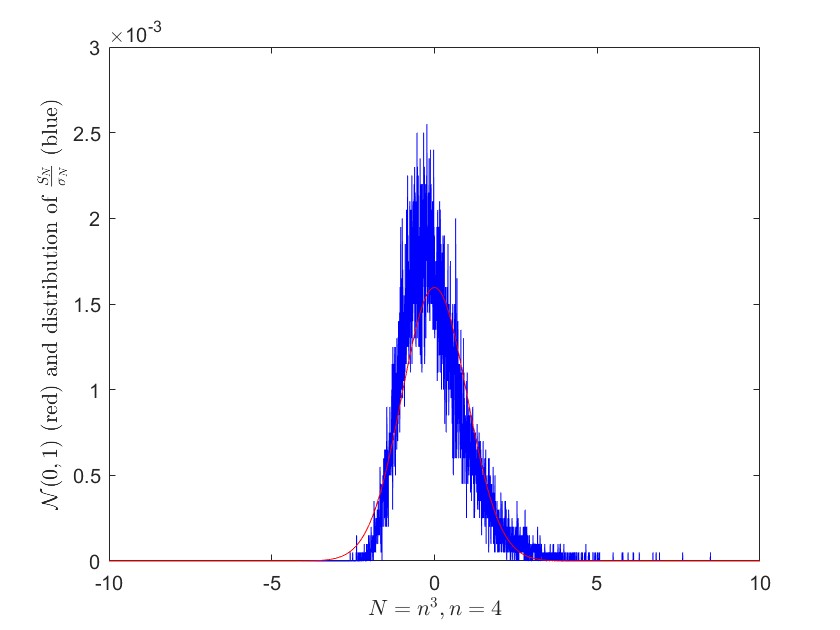}
}
\subfigure[]{
\includegraphics[width=.48\textwidth, height=5.05cm]{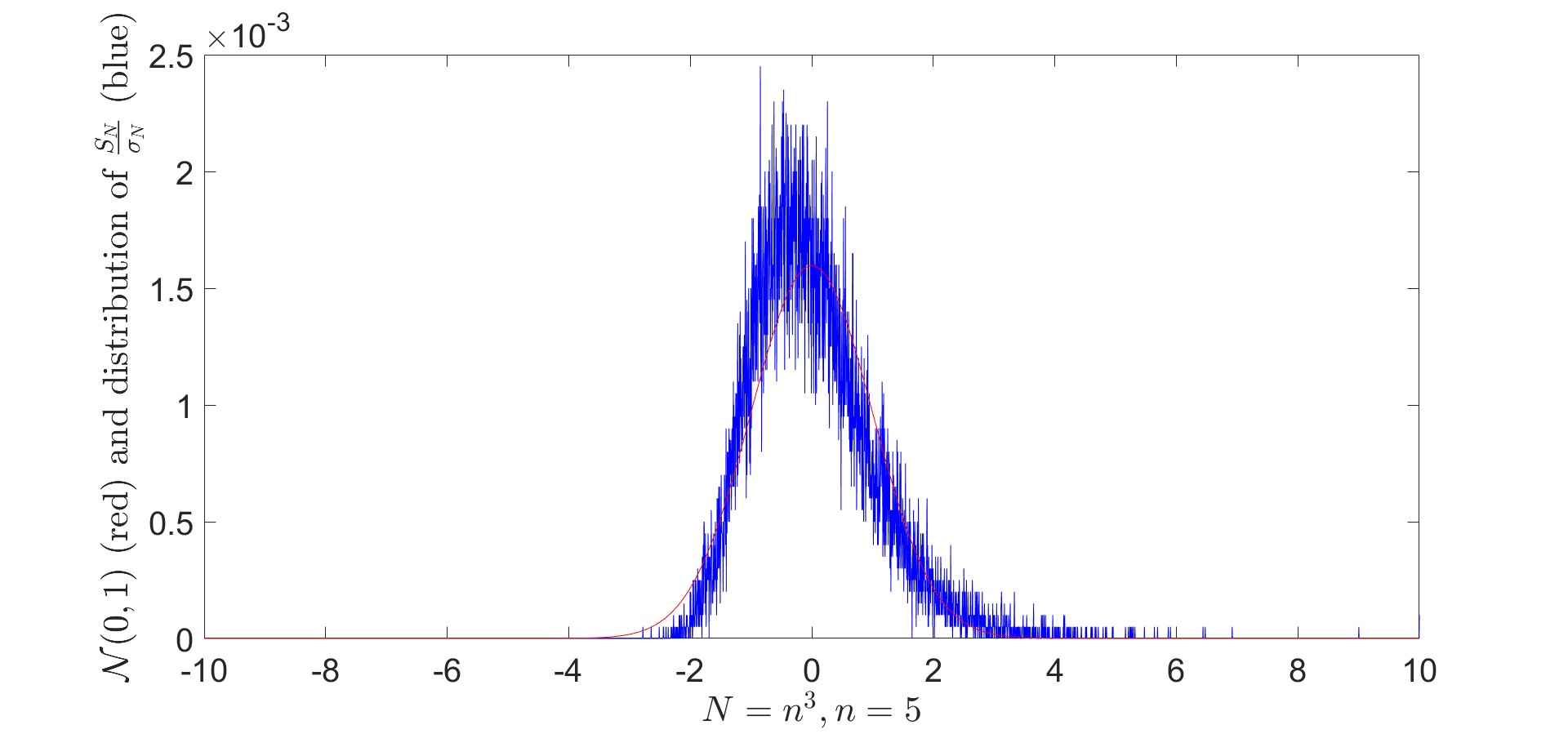}
}
\caption{(a)-(b) \textit{Comparison of distributions of $\frac{S_N}{\sigma_N}$ and the normal $\mathcal{N}(0,1)$ distribution.} (a) Distribution of $\frac{S_N}{\sigma_N}$ compared to the normal $\mathcal{N}(0,1)$ distribution. $n=2$, $N=n^3$, and 20000 samples of i.i.d. Gaussian matrices were used to construct the distribution of  $\frac{S_N}{\sigma_N}$. (b) Distribution of $\frac{S_N}{\sigma_N}$ compared to the normal $\mathcal{N}(0,1)$ distribution. $n=3$, $N=n^3$, and 20000 samples of i.i.d. Gaussian matrices were used to construct the distribution of  $\frac{S_N}{\sigma_N}$. (c) Distribution of $\frac{S_N}{\sigma_N}$ compared to the normal $\mathcal{N}(0,1)$ distribution. $n=4$, $N=n^3$, and 20000 samples of i.i.d. Gaussian matrices were used to construct the distribution of  $\frac{S_N}{\sigma_N}$. (d) Distribution of $\frac{S_N}{\sigma_N}$ compared to the normal $\mathcal{N}(0,1)$ distribution. $n=5$, $N=n^3$, and 20000 samples of i.i.d. Gaussian matrices were used to construct the distribution of  $\frac{S_N}{\sigma_N}$.}
\label{fig:lyapunovfigs}

\end{figure}
\subsection{Proof of Thm.  \ref{thlyapunoviidgauss}}

We now prove Thm.  \ref{thlyapunoviidgauss} .

\begin{proof}
Fix an $r>2$. We begin by evaluating $\sigma^r_N$.
    $$\sigma^2_N=(\mathbf{E}(S^2_N))=\mathbf{E}((\sum_{i=1, \dots, N}Y_i)^2)=\sum_{i=1,\dots,N}\mathbf{E}(Y_i^2)+2\sum_{i=1,\dots,N}\sum_{j>i}\mathbf{E}(Y_iY_j).$$
The $Y_i's$ are independent identically distributed, and $\mathbf{E}(Y_i)=0$ by construction, therefore $\mathbf{E}(Y_iY_j)=\mathbf{E}(Y_i)\mathbf{E}(Y_j)=0$, $\forall i \neq j$, and $\sum_{i=1,\dots,N}\mathbf{E}(Y_i^2)=N\mathbf{E}(Y^2)=N\mathbf{E}(X-\frac{n!}{m^n})^2=N\big(\mathbf{E}(X^2)-2\mathbf{E}(X)\frac{n!}{m^n}+(\frac{n!}{m^n})^2\big).$
Now, $\mathbf{E}(X^2)=\frac{(n!)^2(n+1)}{m^{2n}}$ and $\mathbf{E}(X)=\frac{n!}{m^n}$  \cite{nezami_permanent_2021}. Thus
$$\sigma^2_N=N(\frac{n(n!)^2}{m^{2n}}),$$ and then
$$\sigma^r_N=(N(\frac{n(n!)^2}{m^{2n}}))^{\frac{r}{2}}.$$
We will now compute a lower bound $\mathbf{E}(|Y|^r)$. First, we will use a triangle inequality,
\begin{equation*}
    \mathbf{E}(|Y|^r) \geq |\mathbf{E}(Y^r)|.
\end{equation*}
We now focus on computing $|\mathbf{E}(Y^r)|$. Let $\mu:=\frac{n!}{m^n}$
$$\mathbf{E}(Y^r)=\mathbf{E}(X-\mu)^r=\sum_{i=0, \dots, r}{r \choose i}\mathbf{E}(X^i)\mu^{r-i}(-1)^{r-i}=\sum_{i=0, \dots, r}{r \choose i}\frac{\mathbf{E}(|\mathsf{Per}(G)|^{2i})\mu^{r-i}(-1)^{r-i}}{m^{ni}}.$$
Plugging conjecture \ref{conjperm} into $\mathbf{E}(Y^r)$, we obtain
$$\mathbf{E}(Y^r)=\sum_{i=0, \dots, r}{r \choose i}\frac{A \frac{(n!)^{2i}(i!)^{2n}}{(ni)!}(\frac{n!}{m^n})^{r-i}(-1)^{r-i}}{m^{ni}}=\frac{1}{m^{nr}}\sum_{i=0,\dots,r}(-1)^{r-i} B_i\frac{{r \choose i}(n!)^{r+i}(i!)^{2n}}{(ni)!}, $$
for some constants  $A,B_i>0$ for $i \in \{0, \dots,r\}$. Since $n\gg 1$, we can use a Stirling approximation for $i>0$
$$\frac{(n!)^{r+i}}{(ni)!} \approx \frac{(\frac{n}{e})^{nr+ni}(\sqrt{2\pi n})^{r+i}}{(\frac{ni}{e})^{ni} \sqrt{2 \pi ni}}=\bigg(\frac{1}{i}\bigg)^{ni}\bigg(\frac{n}{e}\bigg)^{nr} \frac{(2 \pi n)^{\frac{i+r}{2}}}{\sqrt{2 \pi ni}}.$$
Plugging this into $\mathbf{E}(Y^r)$, we get
$$\mathbf{E}(Y^r)=\frac{(\frac{n}{e})^{nr}}{m^{nr}}\bigg((-1)^rA(2\pi n)^{\frac{r}{2}}+\sum_{i=1,\dots,r}(-1)^{r-i}B_i{r \choose i}(i!)^{2n}\frac{1}{i^{ni}}\frac{(2 \pi n)^{\frac{i+r}{2}}}{\sqrt{2 \pi ni}}\bigg).$$
For any $i \leq r$, $\frac{(2 \pi n)^{\frac{i+r}{2}}}{\sqrt{2 \pi ni}} \in O(n^r)$, we can simplify the above to
$$\mathbf{E}(Y^r)=A\frac{n^r(\frac{n}{e})^{nr}}{m^{nr}}\bigg(B(-1)^r+\sum_{i=1,\dots,r}(-1)^{r-i}C_i{r \choose i}(i!)^{2n}\frac{1}{i^{ni}}\bigg),$$
for some constants $A,B,C_i>0$.
Now, we will look at 
$|B(-1)^r+\sum_{i=1,\dots,r}(-1)^{r-i}C_i{r \choose i}(i!)^{2n}\frac{1}{i^{ni}}))|= |(-1)^rB+\sum_{i=1,\dots,r}(-1)^{r-i}C_i{r \choose i}(\frac{(i!)^{\frac{2}{i}}}{i})^{ni}|$. 
$\alpha(i):=\frac{(i!)^{\frac{2}{i}}}{i}$ is a monotonically increasing function of $i$ for $i>2$, this can be directly verified using Stirling's approximation for large $i$, since $\frac{(i!)^{\frac{2}{i}}}{i} \in \frac{O(i^2)}{i} \in O(i)$. For small $i$, we verified this by numerical simulation. Furthermore, $\alpha(i) \geq 1$ for $i \geq 1$. 
Therefore $|(-1)^rB+\sum_{i=1,\dots,r}(-1)^{r-i}C_i{r \choose i}(\frac{(i!)^{\frac{2}{i}}}{i})^{ni}| \in O((\alpha(r))^{nr}) \in O(\beta(r)^n),$ where $\beta(r):=\alpha(r)^r=\frac{(r!)^2}{r^r}.$ Note that $\beta(r)>1$ for $r>2$.
Therefore 
$$|\mathbf{E}(Y^r)| \in \frac{O(n^r(\frac{n}{e})^{nr}\beta(r)^n)}{m^{nr}}.$$
Replacing $n! \approx (\frac{n}{e})^n \sqrt{2 \pi n}$ in $\sigma^r_{N}$, then performing straightforward simplifications, we obtain 
$$\frac{N|\mathbf{E}(Y^r)|}{\sigma^r_N} \in O\bigg(\frac{\beta(r)^n}{N^{\frac{r}{2}-1}}\bigg).$$
Noting, as previously mentioned,  that $\frac{N\mathbf{E}(|Y|^r)}{\sigma^r_N} \geq \frac{N|\mathbf{E}(Y^r)|}{\sigma^r_N}$ completes the proof.
\end{proof}

\end{document}